\def\1{\bm{1}}
\def\min{\qopname\relax n{min}}
\def\max2{\qopname\relax n{max2}}
\def\max{\qopname\relax n{max}}
\def\argmax{\qopname\relax n{argmax}}
\newcommand{\RR}{\mathbb{R}}
\def\A{\mathcal{A}}
\def\D{\mathcal{D}}
\def\L{\mathcal{L}}
\def \cG {\mathcal{G}}
\def \cX {\mathcal{X}}
\def\a{\bm{a}} 
\def\x{\bm{x}} 
\def\y{\bm{y}}
\def\p{\bm{p}} 
\def\u{\bm{u}}
\newenvironment{lp*}{\begin{equation*}  \begin{array}{lll}}{\end{array}\end{equation*}}
\definecolor{darkgreen}{rgb}{0.09, 0.45, 0.27}
\newcommand{\fullmodel}{{Exp3 with Diminishing History}}
\newcommand{\model}{\texttt{Exp3-DH}}
\newcommand{\dir}{{diamond-in-the-rough}}
\newcommand{\game}{{diamond-in-the-rough game}}
\newcommand{\gameshort}{{DIR}}
\newcommand{\Algoclass}{{Merit-based}}
\newcommand{\algoclass}{{merit-based}}
\begin{document}

\title{Uncoupled Bandit Learning  towards Rationalizability: Benchmarks,  Barriers, and  Algorithms\thanks{The word ``towards'' indicates last-iterate convergence. A one-page abstract of this work, titled ``Multi-Agent Learning for Iterative Dominance Elimination: Formal Barriers and New Algorithms'' is accepted for presentation at the Conference on Learning Theory (COLT) 2022.}}

\author{\name Jibang Wu\thanks{Equal Contribution} \email wujibang@uchicago.edu \\
 \addr Department of Computer Science \\ University of Chicago \\ Chicago, IL 60637, USA
       \AND
       \name Haifeng Xu\footnotemark[2] \email haifengxu@uchicago.edu \\
       \addr Department of Computer Science \\ University of Chicago \\ Chicago, IL 60637, USA
       \AND
       \name Fan Yao\footnotemark[2] \email fy4bc@virginia.edu \\
       \addr Department of Computer Science \\ University of Virginia \\ Charlottesville, VA 22904, USA}


\maketitle

\begin{abstract}
 Under the uncoupled learning setup, the last-iterate convergence guarantee towards Nash equilibrium is shown to be impossible in many games. This work studies the last-iterate convergence guarantee in general games toward \emph{rationalizability}, a key solution concept in epistemic game theory that relaxes the stringent belief assumptions in both Nash and correlated equilibrium. 
This learning task naturally generalizes best arm identification problems, due to the intrinsic connections between rationalizable action profiles and the elimination of iteratively dominated actions. 
Despite a seemingly simple task, our first main result is a surprisingly negative one; that is, a large and natural class of no regret algorithms, including the entire family of Dual Averaging algorithms, provably take \emph{exponentially} many rounds to reach {rationalizability}. 
Moreover, algorithms with the stronger no swap regret also suffer similar exponential inefficiency. To overcome these barriers, we develop a new algorithm that adjusts \texttt{Exp3} with \texttt{D}iminishing \texttt{H}istorical rewards (termed \model{}); \model{} gradually ``forgets'' history at carefully tailored rates.  We prove that when all agents run \model{} (a.k.a., \emph{self-play} in   multi-agent learning), all iteratively dominated actions can be eliminated within polynomially many rounds. Our experimental results further demonstrate the efficiency of \model{},  and that state-of-the-art bandit algorithms, even those developed specifically for learning in games, fail to reach rationalizability efficiently.
\end{abstract}

\begin{keywords}
  Multi-agent learning, Rationalizability, iterative dominance elimination, Exp3,  dual averaging, diminishing history.
\end{keywords}

\newpage 
\section{Introduction}
 Two seminal results in uncoupled~\footnote{In the uncoupled learning setup, the learning rule of each agent must not rely on any opponent’s historical actions or payoffs \cite{pradelski2012learning,daskalakis2011near,cai2023uncoupled}} multi-agent  learning are that agents using no regret learning algorithms will converge to a coarse correlated equilibrium (CCE) whereas the stronger no-swap regret learning algorithms will bring agents to a correlated equilibrium (CE) \cite{foster1999regret, blum2005external}. 
In both results, however, the converging sequence is the \emph{average} of agents' historical plays; the analysis of the \emph{last-iterate} convergence, a strictly stronger convergence guarantee, has been a well-known challenge in the study of uncoupled learning dynamics, despite significant research efforts devoted to its study even until today.~\footnote{Even in the zero-sum games, it is only recently shown by ~\citet{cai2023uncoupled} that a finite last-iterate convergence rate can be provably achieved by uncoupled learning dynamics under bandit feedback. See also a recent paper by \citet{anagnostides2022last} for detailed discussions about this challenge, relevant works and a few generalizations beyond zero-sum games in which last-iterate convergence could be possible.} 
For modern machine learning applications, the last-iterate convergence is often more desirable due to the difficulty of averaging agent’s actions, typically represented by neural networks. In addition,  it is also more realistic as an approach to predict equilibrium outcomes of economic systems, e.g., by assuming uncoupled learning agents in the system --- after all, it is the most recent  state (i.e., the ``last iterate'') of the system that matters and evolves, while the average history of the system may not have any real-world meaning.  Given these  challenges of obtaining last iterate convergence towards equilibria in general game as well as its strong  real-world motivations,  we seek to answer the following question in this paper:
\begin{quote}
    \it  Are there natural equilibrium concepts, other than the often studied CE or CCE, for which the last-iterate convergence guarantee of uncoupled learning dynamics can be established for general games? If so, what is the solution concept, and what algorithm is guaranteed to converge?   
\end{quote}   


 To answer the above question, we initiate the study of last-iterate convergence of uncoupled multi-agent bandit learning towards a fundamental game-theoretic solution concept, known as \emph{rationalizability}, developed through a series of seminal economic works~\cite{bernheim1984rationalizable,pearce1984rationalizable, brandenburger1987rationalizability,Milgrom1990rationalizability}.
In particular, this paper pursues \emph{rationalizability} as a natural multi-agent learning objective for several reasons. First, without converging to the rationalizability, it is impossible to reach the Nash equilibrium (NE) or CE (see Corollary~\ref{coro:supersetofce}). 
At a high level, rationalizability is a more permissive and robust solution concept that relaxes the stringent belief assumptions in both Nash \cite{bernheim1984rationalizable,pearce1984rationalizable} and CE \cite{aumann1987correlated,brandenburger1987rationalizability}:
while both NE and CE requires each player to respond optimally to the \emph{accurate} belief about her opponents' strategy profile, players under rationalizability may take any actions that are best responses to some \emph{erroneous but rational} belief about her opponents' strategy profile (see Definition \ref{def:rationalizable}).
This also leads to the second point that rationalizability is often deemed a more realistic outcome to expect in games with uncertainty~\cite{dekel2015epistemic}. The notion originates from the epistemic approach to understand agents' rational behavior in a non-cooperative environment without perfect knowledge of others' strategy profile --- it characterizes the outcomes arise from the only common knowledge of rationality. 
We thus can think of rationalizability as a proxy for us to understand how much ``rationality'' learning algorithms can obtain under the uncoupled learning setup, compared to human agents. 


Rationalizability is also  related to the basic strategic concept of \emph{dominance elimination} studied since the early days of the game theory field~\cite{gale1953theory,raiffa1957games} --- we say, an action $a$ of some agent is \emph{dominated} by another action $a'$ in a strategic game if the agent's payoff of action $a$ is always smaller than her payoff of $a'$, regardless of what actions other agents play. 
While it is debatable that whether regular humans  would ever play an equilibrium in a strategic game \cite{rabin1993incorporating,wright2010beyond}, it is widely observed and well accepted  that rational human players generally would avoid playing dominated actions \citep{fudenberg2019predicting}. Therefore, an intriguing question is whether there are learning algorithms that can efficiently approach such kind of human wisdom. 
Notice that, after eliminating some dominated actions, other actions may then start to become dominated and thus require an additional iteration of dominance elimination; the \emph{iterated dominance elimination} turns out to be a highly non-trivial task in the uncoupled learning setup and many of the existing algorithms are provably inefficient. As we will explain in Section \ref{sec:rationalizability}, the strategy profiles surviving this process of \emph{iterated dominance elimination} coincides with the set of rationalizable outcomes~\cite{bernheim1984rationalizable,aumann1987correlated}, and it thus serves another key motivation to design algorithms with efficient convergence guarantee towards rationalizability.


Notably, iterated dominance elimination in games is not as rare as they may first sound. For example, \citet{alon2021dominance} recently show that for randomly generated two-player $m\times n$ games with $m = o(\log (n))$, the fraction of actions that survive iterated dominance elimination tends to $0$ as $n\to\infty$. 
The concept also has a variety of applications, including voting \citep{moulin1979dominance}, auctions  \citep{azrieli2011dominance}, market design \citep{abreu1992virtual}, supermodular game \cite{Milgrom1990rationalizability}, oligopolistic competition \citep{borgers1995dominance} and global games \citep{carlsson1993global}. 
One celebrated example is Akerlof’s ``market for lemons'' \cite{akerlof1978market}. Each seller in this market is looking to sell used cars which are equally likely to have quality \texttt{H}/high, \texttt{M}/medium or \texttt{L}/low (low quality cars are also known as ``lemons'' in America). Prospective buyers value \texttt{H}-cars  at \$1000, \texttt{M}-cars at \$500 and \texttt{L}-cars at \$0, whereas
sellers value keeping a \texttt{H}-car at \$800, \texttt{M}-car at \$400 and \texttt{L}-car  at \$0 (these values are only privately known to sellers).  
Akerlof studies the situation that sellers precisely know their car's type whereas buyers cannot distinguish the good cars from lemons. Suppose that the car types are uniformly distributed on the market in the beginning, due to the inability to distinguish car quality, any buyer will immediately eliminate any price above her average value \$$500 =(1000+500+0)/3$. After this elimination, the buyer's   price becomes lower than   \texttt{H}-car's reservation value, and thus drive \texttt{H}-car sellers out of the market. Consequently, the buyer will \emph{gradually learn} that the market has no \texttt{H}-cars under price \$$500$ and thus will \emph{iteratively} eliminate any price above \$$250 =(500+0)/2$, which then further drives  \texttt{M}-car sellers out of the market. Ultimately, Akerlof observes that this iterated dominance elimination procedure will drive all good cars out of the market, and only lemons are ever traded. In this paper, we shall examine the more realistic situation when buyers and sellers do not know the exact average value of the car qualities in advance but only have noisy bandit feedback about each sold car. We seek to understand \emph{how fast the  market  collapse  observed by Akerlof may happen when players have such noisy bandit information feedback.} In Appendix~\ref{append:exp}, we will revisit this example as an empirical illustration of our theoretical results.


At this point, an immediate thought one might have is whether any standard no-regret learning algorithm would already suffice to eliminate the (obviously bad) dominated actions. The answer is indeed Yes, but with a crucial limitation that they may necessarily take \emph{exponentially} many rounds, as we will prove later in Section \ref{sec:barr}. A surprising insight revealed from our formal results is that the classic notion of \emph{regret} in multi-agent settings is not fully aligned with the performance of iterated dominance elimination. First, the history that standard no-regret algorithms exploit could become the inertia that impedes the iterative process of dominance elimination. This claim shall be self-evident in the proof of Theorem~\ref{thm:nonconverge}. Second, the notion of ``regret'', designed for either stochastic or adversarial settings, fails to encourage the coordination that facilitates the iterative learning process of the learning agents.  These observations echoes with the findings of \citet{viossat2013no} that the Hannan sets~\cite{hannan20164} may contain highly non-rationalizable outcomes.

Motivated by the aforementioned fundamentality and intricacies,  this paper studies how agents in a uncoupled multi-agent system can learn to \emph{rationalize} --- or equivalently to iteratively eliminate all dominated actions --- under \emph{noisy bandit information} feedback. This is also a natural generalization of the well-known \emph{action elimination} problem \cite{even2006action} to multi-agent setups.  Our study reveals interesting new challenges of learning in game-theoretical settings that its algorithm design may require different ideas from the classical online learning under either adversarial or stochastic environment assumptions. Notably, an interesting recent follow-up work by \citet{wang2022learning} considered the same problem setup as us and proposed an iterative best response approach that improves on our convergence rate by some linear factors.  However, their proposed algorithm is designed in a coordinated fashion: when an agent is estimating the mean reward of his actions, the remaining agents must keep playing the same action profile. {In such a design, each agent can  infer the strategies of all other agents at any time. While they obtained better learning efficiency than our algorithm,  such coordinated design of learning algorithm violates the uncoupled learning setup and is not our goal in this paper (also see additional discussions in Section \ref{sec:model}).}
In contrast, our work predicates the efficient convergence even when agents only have a loose agreement on the type of learning algorithms without any further intention or capability to coordinate (e.g., the sellers on the Akerlof's market for lemons). Moreover, experiments in Appendix \ref{sec:regret} demonstrate that the performance of our algorithm could remain robust facing potentially adversarial opponents. 

\paragraph{Contributions} At the conceptual level, our key contribution is to identify the solution concept of \emph{rationalizability} as a fundamental multi-agent learning objective. On the one hand, it is the best possible outcome under the uncertainty of others' strategy profile in uncoupled multi-agent learning problems; on the other hand, reaching rationalizability has important economic implications in various classes of games. 
Our technical contributions are twofold. First, we provide formal barriers of reaching \emph{rationalizability} under noisy bandit feedback. To do so, we identify an interesting benchmark class of dominance solvable game instances, coined \emph{\game{}} (DIR), and show that a broad class of no-regret online learning algorithms, including the Dual Averaging algorithm~\cite{nesterov2009primal, xiao2010dual}, has to run exponentially many rounds to eliminate all dominated actions with non-increasing learning rates. Moreover, we prove that the algorithms with the stronger no \emph{swap} regret suffers similar exponentially slow convergence. Second, we propose a new variant of the Exp3 algorithm with a carefully designed diminishing history mechanism to overcome the barriers in such learning tasks and prove its efficiency of eliminating all dominated actions within polynomially many rounds in the sense of last-iterate convergence. Our experiments demonstrate the effectiveness of our algorithm not only in the synthetic DIR games but also in other real-world games. 


\section{Additional Related Work} 
 
\paragraph{Rationalizability and Epistemic Game Theory}
While the classical game theory takes a top-down perspective, specifying the outcomes of a game by different solution concepts, the epistemic approach to game theory takes a bottom-up perspective, asking under what epistemic conditions will players behave with respect to particular solution concept~\cite{dekel2015epistemic}. In particular, using the mathematical tools developed in seminal works by Nobel laureates \citet{harsanyi1967games} and \citet{aumann1976agreeing}, it concerns decision problems under uncertainty such as the choices and knowledge of other players.  
The notion of \emph{rationalizability} is key to the development of epistemic game theory.  
\citet{bernheim1984rationalizable,pearce1984rationalizable} proposed the concept of rationalizability as the logical consequence of assuming that the only common knowledge is game structure and the rationality of the players, drawing an important connection to the iterated elimination of strictly dominated strategies. \citet{brandenburger1987rationalizability} introduced the solution concept of correlated rationalizability that allows players to have correlated conjectures over others’ actions. Notably, there has been increasing usage of the correlated version of rationalizability, in part based on the influential argument of \citet{aumann1987correlated}: in games with more than two players, correlation may express the fact that what player 3 thinks that player 1 will do may depend on what he thinks player 2 will do, which has no connection with any overt or even covert collusion between player 1 and 2. Our paper also adopts this notion of correlated rationalizability. 
In addition, \citet{aumann1987correlated} derives the CE from the additional assumption that the players are ``Bayesian rational'' with a common prior, and \citet{aumann1995epistemic} provides an epistemic characterization of the Nash equilibrium in terms of mutual knowledge of strategy choices. More recently, the solution concepts of interim independent rationalizability~\cite{ely2004hierarchies}, interim correlated rationalizability~\cite{dekel2007interim} are developed for the incomplete information games. 
 


\paragraph{Multi-agent Learning in Games}
Multi-agent learning in games has been of interest since the early days of artificial intelligence and economics~\cite{von2007theory, brown1951iterative, hart2000simple}. In recent years, there is a growing body of work on decentralized no-regret dynamics and their equilibrium convergence properties in various special classes of games including zero-sum game~\cite{daskalakis2011near, rakhlin2013optimization,syrgkanis2015fast, daskalakis2017training, daskalakis2018limit, mertikopoulos2018optimistic}, concave game~\cite{mertikopoulos2016learning, Bravo2018bandit, mertikopoulos2019learning}, potential games \cite{cohen2017learning}, monotone games \cite{NEURIPS2022_db2d2001}, and auctions \cite{feng2021convergence}. 
Different from the goal of these works on  convergence to CCE or NEs in special classes of   games, we target convergence in arbitrary multi-player games but to a relaxed equilibrium notion, i.e.,  learn to \emph{rationalize} by removing dominated actions. Indeed, economists~\cite{viossat2015evolutionary} framed this learning goal broadly into the question \emph{whether evolutionary processes lead economic or biological agents to behave as if they were rational.} 
Our study focuses on the convergence properties of various no-regret learning algorithms specifically in the multi-agent \emph{bandit} learning setting (a.k.a. the ``radically uncoupled'' setup~\cite{foster2006regret}). Interestingly, our proposed mechanism of diminishing history resonates with the well-established studies of both behavioral economy~\cite{fudenberg2016recency} and political science~\cite{axelrod1981evolution}. It is also seen in one form or another (such as increasing learning rate or recency bias) of many learning algorithms for different purposes~\cite{rakhlin2013optimization, syrgkanis2015fast, bubeck2017kernel, agarwal2017corralling, lee2020bias}, some of which even beyond the domain of online learning~\cite{jin2018q, brown2019solving}. 
In the experimental section, we will compare our algorithm with some of them from these previous works. 

Another line of work studied the fast convergence to approximate efficiency and CCEs by regularized learning algorithm with properties such as Variation in Utilities (RVU)~\cite{syrgkanis2015fast}, or low approximate regret~\cite{foster2016learning}. Different from the goal of these works on convergence to CCEs or NEs (for special game classes), we focus on a different but arguably equally fundamental goal, i.e.,  learning to \emph{rationalize} by removing iteratively dominated actions. Similar objective is also examined by \cite{viossat2013no}, who showed that continuous fictitious play (CFP) eliminates all strictly dominated strategies and therefore converges to the unique Nash equilibrium in strictly dominance solvable games. However, CFP needs the computation of agents' best response, which requires expert knowledge of the utility function. Meanwhile, \cite{cohen2017hedging} focused on the dominance elimination property of the no-regret learning algorithm, Hedge, on generic games. This algorithm  requires full information feedback and is thus not applicable to our setting where each agent can only observe the utility of actions she took and may not even know the existence of her opponents in the uncoupled learning setup. 

\paragraph{Optimism and Diminishing History}
The mechanism to focus on the recent experience is also seen in one form or another (such as increasing learning rate or recency bias) of many learning algorithms for different proposes. \cite{brown2019solving} introduced the variant of counterfactual regret minimization that discounts the prior iterations to prevent earlier costly mistake from hampering the convergence. \cite{jin2018q} used exponential discount for Q-learning but with respect to episode length instead of time. For online learning algorithms, \cite{rakhlin2013optimization} introduced an ``optimistic'' variant of Mirror Descent with a minor but effective modification that counts the last reward observation twice. \cite{syrgkanis2015fast} showed that natural classes of regularized learning algorithms with a property of recency bias provably achieves faster convergence rates in normal form games. \cite{chen2020hedging,daskalakis2021near,anagnostides2021near} shows that the optimistic variant of Hedge enjoys the $O(\operatorname{ploy}(\log T))$ regret, as well as the swap regret under the black-box transformation by \cite{blum2005external, stoltz2005internal}.  
In the online bandit learning setting, a helpful technique to introduce historical bias is to apply increasing learning rate \cite{bubeck2017kernel}. The increasing learning rate turns out to be  powerful in several recent works: \cite{agarwal2017corralling} used it to maintain a more delicate balance between exploiting and exploring so a master algorithm could perform almost as well as its best base algorithm, and \cite{lee2020bias} employed it to effectively cancel the potentially large variance of the unbiased estimators in high-probability regret bound analysis. However, as we will demonstrate in the experiment, none of these techniques can efficiently overcome the barrier of eliminating iteratively dominated actions, which necessitates our attempt of designing new algorithms. 

Interestingly, our mechanism of diminishing history resonates with the well-established studies of both behavioral economy and political science. 
\cite{fudenberg2016recency} pointed out that recency bias is a behavioral pattern commonly observed in game-theoretic environments.
In the renowned book, \textit{The Evolution of Cooperation}, \cite{axelrod1981evolution} empirically demonstrated that the tit-for-tat strategy (simply copying the last move of the opponent) is the most successful strategy in the repeated prisoner's dilemma game. They accordingly pointed out an important insight for game strategy design -- to be \emph{provocable} to both retaliation and forgiveness, as tit-for-tat ignores all the good or bad experience from the opponent's previous silence or betray more than one round ahead. 
Nevertheless, such aggressive strategy would not work in online learning, as memory is critical for algorithm to optimize its decision from the past observation. Hence, our proposed algorithm generalizes such philosophy with a carefully designed discounting mechanism to balance the influence of history in online decision making while maintain the ``provocability'' crucial in certain game theoretical environment. 

\section{Preliminaries}  \label{sec:model}
 
In this section, we introduce the problem setup of this paper, starting with some basic notations and definitions from game theory.  An $N$-player game in normal form consists of a (finite) set of agents $\cN = \{1, \dots, N \}$, where the $n$-th agent have a finite set of actions (or pure strategies) $\cA_n$.
Let $\cA := \prod_{n\in \cN} \cA_n$ denote the action space,   $\actions := (a_1, \dots  a_{N}) \in \cA $ denote the action profile, and $a_{-n} \in \cA_{-n}$ as the action profile excluding agent-$n$'s action. 
Without loss of generality, we assume every agent has $K$ actions, i.e., $|\cA_n|=K$.\footnote{As long as each player's number of actions is upper bounded by the constant $K$, our main results hold. In fact, our results only depend on the total number of actions across all players.}   Each agent $n$ has a payoff function $u_n: \cA  \to [-1,1]$ that maps the action profile $(a_n, a_{-n})$ of all agents' actions to the $n$th agent's payoff $u_n(a_n, a_{-n})$.\footnote{Bounded utility is assumed for convenience but not essential, since it can always be re-scaled. }  
We denote such game instance as $\cG := \cG(\cN, \cA, u)$. 
In addition, each agent $n$ may randomize her action by playing a \emph{mixed strategy}, $x_n \in \Delta_{\cA_n}$, from the simplex over $\cA_n$. 
Denote $\cX_n := \Delta_{\cA_n}$ as the mixed strategy space of agent $n$, and $\cX := \prod_{n\in \cN} \cX_n$ as the space of all mixed strategy profiles $x := (x_1, \dots, x_N)$ aggregating over all agents. Denote $x_n(a_n)$ as the probability of playing action $a_n$ under mixed strategy $x_n$. Let $u_n(x_n, x_{-n}) := \sum_{a_1 \in \cA_1} \cdots \sum_{a_N \in \cA_N} u_n(a_1, \dots,  a_{N}) \prod_{n\in \cN } x_n(a_n)$ be the expected payoff of agent-$n$ under the strategy profile $x$.



\paragraph{Dominated Actions and Iterated Dominance Elimination}
We say action $a_n$ is strictly \emph{dominated} by a \emph{mixed} strategy $x_n \in \Delta_{\cA_n}$, if $u_n( x_n, a_{-n}) > u_n( a_n, a_{-n})$, $\forall a_{-n}\in \cA_{-n}$.
Dominated actions are perhaps the simplest generalization of sub-optimal actions in single-agent decision making. 
The procedure for an agent to remove all her dominated actions is called \emph{dominance elimination}. In the single-agent setting, dominance elimination degenerates to the widely studied best arm identification since all other actions are dominated by the optimal arm~\citep{bubeck2009pure}. However, in multi-agent setups, eliminating all iteratively dominated actions
may require many \emph{iterations} of dominance elimination, as illustrated in the introduction.  Moreover, an action dominated by a mixed strategy is not necessarily dominated by any pure strategy. So it is important to consider dominance elimination by mixed strategies.
The process of iteratively applying such procedure to remove iteratively dominated actions is called  \emph{iterated elimination of strictly dominated strategies} (IESDS). 
This motivates our following natural definition of \emph{elimination length}.  

\begin{definition}\label{def:length}
For any finite game $\cG$, we define the \textbf{elimination length} $L_0$ as the minimum number of iterations that IESDS needs to eliminate all iteratively dominated actions in $\cG$. For any successful execution of IESDS with elimination length $L_0$, the corresponding \textbf{elimination path} is a sequence of \textbf{eliminated sets} $(E_l)_{l=1}^{L_0}$ where $E_l$ contains all eliminated actions until iteration $l \in \{1,\cdots, L_0\}$. 
\end{definition}
By definition, we have $E_1 \subset E_2 \subset \cdots \subset E_{L_0}$ and $|E_{L_0}|<\sum_{n=1}^N|\cA_n|= KN$. Since the elimination path $(E_l)_{l=1}^{L_0}$ of a game $\cG$ might not be unique, when we refer to an eliminated set $E_l$, we consider $E_l$ from \textbf{any} possible elimination path. Let $\Delta$ be the smallest utility gap between any iteratively dominated action and the correspondingly dominant strategy during IESDS over all possible elimination paths.~\footnote{This notation is to draw an analogy to the stochastic bandit setting, where $\Delta$ typically denotes the gap of the means to different reward distributions, which largely determines the intrinsic difficulty of the problem. With slight abuse of notation, we also use $\Delta$ to represent the simplex by convention; the two use cases should be easily distinguishable. }

\paragraph{Rationalizability and Rationalizable Actions}
Let $x_{-n}\in \cX_{\cA_{-n}}$ be a \emph{belief} of the $n$-th agent on the (possibly correlated) strategy profile of the other players.~\footnote{The original definition of (independent) rationalizability~\cite{bernheim1984rationalizable, pearce1984rationalizable} requires the \emph{belief} to be a product of independent probability measures on each of the action sets $\cA_{n'}$ for $n' \in \cN \setminus \{n\}$. Under this more restricted notion of rationalizability, the Theorem \ref{thm:ide-rationalizable} in the next section holds only in two-player games. However,  throughout this paper we will instead focus on the more commonly adopted definition of (correlated) rationalizability,~\cite{brandenburger1987rationalizability,Milgrom1990rationalizability}.  } 
We say an action $a_n \in \cA_n$ is the best response to a \emph{belief} $x_{-n}$, if $a_n \in \argmax_{a \in \cA_n} u_n(a, x_{-n}) $. The notion of rationalizable action is defined in a recursive way:  an action is rationalizable if it is the best response to a ``rational'' belief supported on other agents' rationalizable actions. This situation happens under the common knowledge of rationality (despite incomplete knowledge of other agents' strategy) --- that is, every agent is rational, every agent thinks that every agent is rational, every agent thinks that every agent thinks that every agent is rational, and so on in any higher order beliefs. 
We formalize it in the following definition.

\begin{definition}[Rationalizable Actions \cite{osborne1994course}] \label{def:rationalizable}
Suppose \\that there exists $\{\cZ_n \subseteq \cA_n\}_{n=1}^{|\cN|}$ such that for any $n'\in \cN, a_{n'} \in \cZ_{n'}$, $a_{n'}$ is a best response to some belief supported only on $\cZ_{-n'}$. 
In this case, we say an action $a_n \in \cA_{n}$ is rationalizable for agent $n$. The solution concept of \emph{rationalizability} is formed by \emph{any} strategy profile that supports only on the rationalizable actions. 
\end{definition}

\paragraph{Problem Setup} We study how multiple agents can learn to rationalize 
under noisy bandit payoff feedback in an radically uncoupled fashion. At each round $t\in [T]$, each agent $n$ takes an action $a_n(t)$, which together forms the action profile $\actions(t)$. Then, agent $n$ individually observes from the environment a noisy, bandit feedback,  $u_n(\actions(t)) + \xi_{n,t}$, that is, its payoff under the action profile $\actions(t)$ perturbed by a noise $\xi_{n,t}$. With a slight abuse of notation, we denote $u_{n}(t) =u_n(\actions(t)) + \xi_{n,t}$ hereinafter. We assume $\{\xi_{n,t},\mathcal{F}_t\}_0^{+\infty}$ is a Martingale Difference Sequence (MDS) with finite variance. Specifically, $\{\xi_{n,t},\mathcal{F}_t\}_0^{+\infty}$ satisfies: 1) Zero-mean: $\mathbb{E}[\xi_{n,t}|\mathcal{F}_{t-1}]=0 \textup{ for all } t = 1, 2, \dots (a.s.)$; 2) Finite variance: $\exists \sigma > 0$ s.t.\ $\mathbb{E}[\|\xi_{n,t}\|^2|\mathcal{F}_{t-1}] \leq \sigma^2 \textup{ for all } t = 1, 2, \dots (a.s.)$. 
 
 
\paragraph{Learning Goals}
Given the above problem setting, we focus on designing an online learning algorithm that when all agents use such an algorithm, they will learn to play rationalized strategy with high probability in a last-iterate convergence manner.  
We require that each agent's learning rule is radically (or strongly) uncoupled~\cite{hart2003uncoupled, foster2006regret} 
in the sense that an agent's strategy has to be adaptive to her own historical observations and meanwhile cannot depend on all other agents’ historical actions and payoffs. Intuitively, the motivation of  uncoupled learning is to insure that any agent's strategy cannot be inferred by other agents. This crucially requires each agent's strategy to depend on his own information, even conditioned on other agents' information.  
Note that the coordination-based algorithm of \cite{wang2022learning} is \emph{not} uncoupled because an agent's strategy can be inferred from others' strategy profile at any time according the coordination rule specified by their algorithms.
Conceptually similar requirement is imposed in many works on multi-agent learning~\footnote{See \cite{daskalakis2011near, farina2022near, anagnostides2022uncoupled, cai2023uncoupled}.} to avoid the degeneracy to tailored learning rules of computing approximated equilibria from estimated game payoffs.
We also remark that many of the prior works~\footnote{See \cite{hannan20164,freund1999adaptive,daskalakis2011near, cherukuri2017saddle, cohen2017hedging, syrgkanis2015fast, mertikopoulos2019learning, mazumdar2020gradient}.} in multi-agent learning either focus on full information or gradient feedback, or consider the time-average convergence, our learning objective of last-iterate convergence under noisy bandit feedback provides arguably the strongest and most stable guarantee. 

 


\section{On the Pursuit of Rationalizability}\label{sec:rationalizability}
The notion of rationalizability and dominance elimination has been less popular to the machine learning community than  other   game-theoretical solution concepts such as NE and CE.
Thus to motivate our seemingly ``unconventional'' pursuit of the rationalizability, we devote this  section to highlight its theoretical and conceptual importance, as well as its potential real-world applications.

\subsection{Key Properties of Rationalizability}
Rationalizability turns out to be equivalent to iterated dominance elimination due to the elegant minimax theorem, as formally described by the following theorem.
\begin{theorem}\cite{osborne1994course} \label{thm:ide-rationalizable}
The set of any agent $n$'s rationalizable actions are precisely the set of agent $n$'s  actions that survive the process of IESDS. 
\end{theorem}
The proof of the above theorem hinges on the following observation. That is, any agent $n$'s action $\hat{a}_n$ is a strictly dominated action if and only if there exists \emph{no} belief $\mu_n$ over all other agent's actions that makes $\hat{a}_n$  a best response of agent $n$ (and thus $\hat{a}_n$ cannot be rationalizable).  
To see this, note that $\hat{a}_n$ is  a strictly dominated action implies there exists some mixed strategy $x_n$ such that $ u_n(x_n, a_{-n})- u_n(\hat{a}_n, a_{-n}) > 0 $ for any $a_{-n}$. Equivalently, $\max_{x_n \in \Delta_{\cA_n} } \min_{a_{-n}} [u_n(x_n, a_{-n})- u_n(\hat{a}_n, a_{-n}) ] > 0 $.  By the linearity of $u_n(x_n, a_{-n})$ in $x_n$ and strong duality, we know  $ \min_{\mu_n \in \Delta_{\cA_{-n}}} \max_{a_n \in \cA_n} [u_n(a_n, \mu_n)- u_n(\hat{a}_n, \mu_n) ] > 0 $. That is, for any agent $n$'s belief $\mu_n \in \Delta_{\cA_{-n}}$, there is an action $a_n$ that is strictly better than $\hat{a}_n$ and thus $\hat{a}_n$ can never be a best response to any agent $n$'s belief. 

From the perspective of epistemic game theory, rationalizability is a solution concept that only assumes the common knowledge of rationality, while
CE additionally assumes the common prior, i.e., a correct belief of other players' strategy profile~\cite{brandenburger1987rationalizability, aumann1987correlated}. These observations lead to an important fact that the set of rationalizability is a superset of correlated equilibria.

\begin{corollary}\cite{osborne1994course}\label{coro:ce-rationalizable}
For any finite game, every action used with positive probability in a CE is rationalizable.
\end{corollary}
This is also implied by Theorem \ref{thm:ide-rationalizable}, as no CE should put non-zero probability on any iteratively dominated action profile and thus must be rationalizable. 

In addition, if the process of IESDS terminates with a single action profile in the remaining action space, this action profile must be the unique NE and also the unique CE of the game \citep{viossat2008having}. In this case, game $\cG$ is called \emph{mixed-strategy solvable} \cite{alon2021dominance} --- a more general notion than the classic dominance solvable game that is defined on dominance elimination by pure strategy.   
\begin{corollary}\label{coro:supersetofce}
For any mixed-strategy solvable game, the only rationalizable action profile coincides with the unique NE (thus the unique CE) of the game. 
\end{corollary}

Rationalizability also has nice predictions for a large class of economic games, known as the supermodular games~\cite{topkis1979equilibrium, Milgrom1990rationalizability}, {which encompass many well-known games such as Cournot duopoly \cite{cournot1838recherches} and Bertrand competition \cite{bertrand1883review}}. Generally speaking, a finite, normal-form game is supermodular if each agent $n$'s utility function $u_n(a_n, a_{-n})$ has \emph{increasing difference} in $a_n$ and $a_{-n}$, i.e., for all $a'_n \geq a_n$ and $a'_{-n} \geq a_{-n}$, $u_n(a'_n, a'_{-n}) - u_n(a_n, a'_{-n}) \geq u_n(a'_n, a_{-n}) - u_n(a_n, a_{-n}) $. This require the set $\cA_n, \cA_{-n}$ to be partially ordered, e.g., as the price or effort level.~\footnote{More generally, when players have multi-dimensional strategy spaces, $\cA_n$ must be a complete lattice and $u_n$ is supermodular in $a_n$ for any fixed $a_{-n}$. When $u_n$ is twicely differentiable, it is equivalent to have $\partial^2 u_n/\partial a_{n} \partial a_{m} \geq 0, \forall m\neq n$. } 
The formal definition of supermodular games is deferred to Appendix \ref{append:supermodular}. One example of supermodular game is the bank run model~\cite{diamond1983bank}, when more depositors withdraw their funds from a bank, it is better for other depositors to do the same (see Section \ref{sec:prelim:example} for other examples).

\begin{theorem}\cite{Milgrom1990rationalizability}
For any supermodular game, the largest (resp. smallest) rationalizable strategy profile coincides with the largest (resp. smallest) NE of the game.
\end{theorem} 
The proof of the above theorem uses a fixed point argument: there exists a function $f: \cA \to \cA$ such that, from the largest action profile $\a^0$ in $\cA$, iteratively setting $\a^{t+1} = f(\a^t)$ leads to a fix point, $\a^{*} = \lim_{i\to \infty} \a^t$, which is also the largest NE of the game. Specifically, we can construct $f(a_1, a_2, \cdots, a_{-N}) = (\overline{\beta_i}(a_{-1}), \overline{\beta_i}(a_{-2}), \cdots, \overline{\beta_i}(a_{N}))$, where  $\overline{\beta_i}(a_{-n})$ denotes the largest element in agent $n$'s best response set, $\argmax_{a\in \cA} u_n(a, a_{-n}) $.  The sequence $\{ \a^t \}$ is non-increasing in $t$ by induction: Since $\a^0$ is the largest action profile, $\a^1 \leq \a^0$. Given that $\a^{t} \leq \a^{t-1}$, we have for any $n\in \cN$, $a_n^{t+1} = \overline{\beta_i}(a_{-n}^{t}) \leq \overline{\beta_i}(a_{-n}^{t-1}) = a_n^{t}$, since $\forall a_{n}\in \cA_{n}, u_n(a_n, a_{-n}^{t}) - u_n(a_n^{t+1}, a_{-n}^{t}) \leq u_n(a_n, a_{-n}^{t-1}) - u_n(a_n^t, a_{-n}^{t-1}) $ by supermodularity. The fixed point $\a^{*} = f(\a^*)$ is an NE by definition, as $a_n^* = \overline{\beta_i}(a_{-n}^*), \forall n\in \cN$. 

Meanwhile, $\a^{*}$ is also the largest rationalizable action profile, as any action $a_n > a_n^*$ is iteratively dominated. That is, for any agent $n$, any of its action $a_n > a_n^1 $ is dominated, since $u_n(a_n, a_{-n}) - u_n(a_n^1, a_{-n}) \leq u_n(a_n, a_{-n}^0) - u_n(a_n^0, a_{-n}^0) < 0$. By induction, given that any $a_{-n} > a_{-n}^{t-1}$ is (iteratively) dominated, any action $a_n > a_n^t = \overline{\beta_i}(a_{-n}^{t-1})$ is iteratively dominated: by supermodularity, $\forall a_{-n} \leq a_{-n}^{t-1}, u_n(a_n, a_{-n}) - u_n(a_n^t, a_{-n}) \leq u_n(a_n, a_{-n}^{t-1}) - u_n(a_n^t, a_{-n}^{t-1}) < 0 $, where the last inequality is strict as $a_n^t \notin \argmax_{a\in \cA} u_n(a, a_{-n}^{t-1})$.

Similarly, we can show that starting from the smallest action profile and iteratively taking the smallest best response mapping $\underline{\beta_i}(\cdot)$ would lead to the smallest NE. This proof at its core is built upon two seminal results: Tarski's fixed point theorem \cite{tarski1955lattice} that for any monotone function on a complete lattice, the set of all its fix points forms a sublattice; Topkis' monotonicity theorem \cite{topkis1979equilibrium} that for any supermodular game, each agent's best response function is monotone.



\subsection{Notable Examples}\label{sec:prelim:example}


Here we list several well-known game instances where rationalizability are desirable outcomes with important economics implications. The first two classes of games are dominance solvable games, and the other two are supermodular games.

\paragraph{The Market for ``Lemons'' \cite{akerlof1978market}} \label{sec:lemon}
Consider a market of used cars with a buyer and $N$ sellers. Each seller $i$ has a car of quality $q_i$, and two actions $a_i\in \{1, 0\}$, respectively, to list or not to list his car. Without loss of generality, let $q_N > q_{N-1} > \cdots > q_1$.  The buyer has his action $p \in \mathcal{P}$ from a set of prices  to buy a car from sellers. Suppose the buyer and sellers move simultaneously. As assumed by Akerlof, the buyer has no information about each seller's car quality before posting the price. The seller also decides whether or not to list his car (i.e., $a_i = 1$ or $0$) without knowing the price. In our experiments, we assume seller $i$ has a reservation value  $\tilde{q}_i= q_i + \epsilon$  which is  a noisy perception of his car quality with zero-mean noise $\epsilon$.  For those who did choose to list their cars, if the buyer's price are below their reservation value, they would refuse to sell, but still suffers a small and fixed   opportunity cost $c_1$. We denote $b_i=\one[\tilde{q}_i \leq p]$ as whether the car of seller $i$ gets sold. In contrast, the buyer is uninformed of the car quality he could buy, though the trade would generate welfare so that her revenue is a multiplier $c_2$ of the average quality $\bar{q} = \frac{\sum_{i\in [N]}b_i\cdot q_i}{\sum_{i\in [N]} b_i}$. Hence, each seller $i$ receives payoff $u_i = a_i ( b_i(p - q_i) - c_1)$, and the buyer receives payoff $u_0 = c_2 \cdot \bar{q} - p  $, where $c_1>0, c_2>1$ are game parameters. 

Our Proposition \ref{prop:lemon} in Appendix \ref{appendix:sec:lemon} shows that this game has elimination length at least $2N-1$ for small $c_1$ and its NEs are that the buyer sets a price no higher than $q_1$, and all sellers refuse to list. This outcome is certainly not a surprise and was observed by Akerlof as a \emph{market collapse}, due to asymmetric information among sellers and buyers  which gradually drives the high quality car sellers out of the market in order. The additional insight of our Proposition \ref{prop:lemon} is that this market collapse may follow a long procedure of iterated dominance elimination. We remark that the opportunity cost $c_1>0$ is only assumed to capture how much  sellers prefer  not selling if the price just matches their values. Our results hold for $c_1 = 0$ as well, but will need to work with weakly dominance elimination with a tie breaking in favor of not listing. 


\paragraph{Decentralized Matching under Aligned Preference \cite{niederle2009decentralized}}  In a decentralized matching market game, each firm (on one side) simultaneously make an offer to a single worker (on the other side), and the workers accordingly decides whether to accept, reject or defer the offers they receive.  \citet{ferdowsian2020decentralized} show that if the firms and workers' preferences are aligned (i.e., firms ranks the workers identically and vice versa), then the outcome of stable matching in this market is rationalizable, and moreover, the unique NE surviving iterated elimination of weakly dominated strategies. This rationalizable strategy profile coincides with the celebrated deferred acceptance algorithm by \citet{gale1962college}. 


\paragraph{Bertrand Competitions \cite{bertrand1883review}} This is a fundamental economic model of competition . It considers a market of $N$ firms that produces identical goods (i.e., substitutes). Each firm $n$ produces with constant unit costs $c_n$ and sets a price $a_n$ from some (possibly discretized) interval $[0, a_{\max}]$. Each firm has a utility function $u_n(a_n, a_{-n}) = (a_n - c_n) \cdot D_n(a_n, a_{-n})$, where $D_n(a_n, a_{-n})$ is the demand function with its elasticity being a non-increasing function of the other firms' prices $a_{-n}$. \citet{Milgrom1990rationalizability} shows that many common forms of demand such as linear function, logit function satisfies this property and the corresponding Bertrand competitions are therefore supermodular games.

\paragraph{Arms Races \cite{Milgrom1990rationalizability}} In this game, the players are two countries engaged in an arms race. Each player chooses a level of arms $a_n$ from some (possibly discretized) interval $[0, a_{\max}]$ and receives as its payoff $u_n(a_n, a_{-n}) = - C(a_n) + B(a_n - a_{-n})$, where $C$ is the cost function based on the arm level and $B$ is the welfare function which is concave w.r.t. the difference of arm level. That is, the marginal return to additional arms  is an increasing function of the foe's armament level. This is a supermodular game, since for all $a'_n \geq a_n$ and $a'_{-n} \geq a_{-n}$, $u_n(a'_n, a'_{-n}) - u_n(a_n, a'_{-n}) \geq u_n(a'_n, a_{-n}) - u_n(a_n, a_{-n}) \iff B(a'_n - a'_{-n}) - B(a_n - a'_{-n})  \geq B(a'_n - a_{-n}) - B(a_n - a_{-n}) $, due to the concavity of $B$.

\section{Formal Barriers of Multi-Agent Learning towards Rationalizability} \label{sec:barr}
Perhaps surprisingly, we show that even for dominance solvable games under full information feedback, standard bandit learning algorithms will necessarily take \emph{exponentially} many rounds to eliminate all dominated actions. Since this barrier is already significant in two-player games, in this section we shall focus on the two-player case with a finite action set $[K]$. We refer to the row player as agent $A$, column player $B$, and use indices $i,j \in [K]$ to denote their pure actions. 

\subsection{``Diamond in the Rough''  -- A Benchmark Game for Multi-Agent Learning} 
We introduce an interesting class of two-player \emph{dominance solvable} games, which serves as a challenging benchmark for multi-agent learning. For reference convenience, we call it ``\emph{Diamond in the Rough}'', whose meaning should become clear in the following formal definition. 



\begin{definition}[The Diamond-In-the-Rough (DIR) Games]\label{def:diam}
A \dir{} (DIR) game is a two-player game parameterized by $(K, c)$. Each agent have $K$ actions and utility function 
\begin{equation}
    u_1(i,j)=\begin{cases}
    i/\rho & i \leq j+1\\
-c/\rho &  i > j+1
\end{cases},  \qquad 
u_2(i,j)=\begin{cases}
j/\rho &  j \leq i \\
-c/\rho & j > i
\end{cases}.  
\end{equation}
where $c>0$ and $\rho=\max\{K, c \} $ is for normalization purpose. Hence, the payoff matrix of the \gameshort$(K, c)$ game is given by
\begin{equation}\label{eq:lowerboundinstance}
\frac{1}{\max\{K, c \} }\begin{bmatrix} 
(1,1) & (1, -c) & (1, -c) & \cdots & (1, -c) \\
(2,1) & (2,2) & (2, -c) & \cdots  & (2, -c) \\
(-c, 1) & (3,2) & (3,3) & \cdots  & (3, -c) \\
\vdots & \vdots & \ddots & \ddots & \vdots \\
\vdots & \vdots & \vdots &  (K-1, K-1) & (K-1, -c) \\
(-c, 1) & (-c,2) & \cdots & (K, K-1) & (K, K) \\
\end{bmatrix} .   
\end{equation}



\end{definition}
The DIR game exhibits a ``nested'' dominance structure, which makes it challenging to play rationally. Specifically, observe that $A$'s action 2 dominates  action 1. However, this is not the case for $B$ since if $A$ were to play action 1, $B$'s utility $-c$ of action 2 is significantly worse than utility $1$ of action 1. Nevertheless, after $A$ eliminates action 1, $B$'s action 2 starts to dominate $B$'s action 1. This property holds in general for DIR game --- $B$'s action $j+1$ dominates her action $j$ \emph{only when} $A$ eliminates his actions $\{ 1,\cdots, j \}$ and similarly $A$'s action $i+1$ dominates action $i$ \emph{only when} $B$ eliminates her actions $\{ 1,\cdots,i-1 \}$. In the end, the real ``diamond'' is hidden at the action pair $(K,K)$, which is the best for both agents. However, to find this ``diamond'', both agents have to sequentially remove all the ``rough'' actions $1,2,\cdots, K-1$. This is thus the name ``\emph{Diamond in the Rough}''. 

As we will demonstrate both theoretically and empirically, \emph{the DIR game highlights a fundamental challenge in multi-agent learning} --- i.e., whether an agent's action is good or bad depends on what actions her opponents take, and such inter-agent externality makes it challenging to learn the optimal decisions. We end this subsection by summarizing a few useful properties of any \gameshort$(K,c)$ game.

\begin{proposition}\label{prop:diamprop}
The following properties hold for any \gameshort$(K,c)$ game:  
\begin{enumerate}
\item The game is dominance solvable by alternatively eliminating $A$ and $B$'s action in order $1,2,...$, until reaching the last strategy profile $(K,K)$. The elimination length $L_0=2K-2$.~\footnote{While the equilibrium is obvious in \gameshort{}, we remark that we can easily swap the rows and column making it difficult to determine even the elimination path under bandit feedback and strategic learning setting (without communication).}
\item The strategy profile $(K,K)$ is the unique CE (and thus the unique NE as well). Moreover, both agents achieve the maximum possible utility at this equilibrium.   
\end{enumerate}
\end{proposition}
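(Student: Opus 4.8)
The plan is to reduce both claims to a single structural observation about where each action is a \emph{strict best response}, which simultaneously certifies when an action is undominated and pins down the elimination order. Since $\rho=\max\{K,c\}$ is a common positive scaling, it plays no role and I ignore it below. First I would record two ``protector'' facts that follow directly from Definition~\ref{def:diam}: for the row player, action $i$ attains the strict column-maximum $u_1(i,i-1)=i/\rho$ in column $i-1$ (every action $i'<i$ gives $i'/\rho<i/\rho$, and every $i'>i$ gives $-c/\rho<0$); symmetrically, the column player's action $j$ attains the strict row-maximum $u_2(j,j)=j/\rho$ in row $j$. Because a mixed strategy is a convex combination and thus cannot exceed the pure column-/row-maximum, any strategy $x$ dominating $A$'s action $i$ would have to satisfy $u_1(x,i-1)>i/\rho=\max_{i'}u_1(i',i-1)$, which is impossible. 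Hence $A$'s action $i$ is undominated while column $i-1$ survives, and $B$'s action $j$ is undominated while row $j$ survives. (Row action $1$ is the sole exception, since $u_1(2,j)=2/\rho>1/\rho=u_1(1,j)$ for every $j$, so it is dominated from the outset.)

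Next I would verify the complementary ``activation'' facts. Once column $i-1$ has been removed, action $i+1$ strictly dominates action $i$ for the row player, because on every surviving column $j\ge i$ one has $u_1(i+1,j)=(i+1)/\rho>i/\rho=u_1(i,j)$; symmetrically, once row $j$ is gone, $B$'s action $j+1$ strictly dominates action $j$. Writing $A_i$ (resp.\ $B_j$) for eliminating the row player's action $i$ (resp.\ the column player's action $j$), the protector and activation facts combine into a precedence relation: $A_i$ requires $B_{i-1}$ first, and $B_j$ requires $A_j$ first. Unwinding these dependencies yields a single chain
\[
A_1 \prec B_1 \prec A_2 \prec B_2 \prec \cdots \prec A_{K-1} \prec B_{K-1}
\]
of length $2K-2$. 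Each link is a \emph{strict} precedence (the successor is provably undominated until its predecessor is removed), so any IESDS execution needs at least $2K-2$ iterations; conversely, the alternating order $A_1,B_1,A_2,\dots$ eliminates exactly one new action per iteration and terminates after $2K-2$ iterations with only the profile $(K,K)$ surviving. This establishes the claimed elimination order and $L_0=2K-2$.

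For the second part, uniqueness comes essentially for free: since IESDS terminates at the single profile $(K,K)$, the game is mixed-strategy solvable, and by the fact cited after Definition~\ref{def:length} (\citep{viossat2008having}) the surviving profile is simultaneously the unique NE and the unique CE. It remains to check optimality, which is immediate from the payoff formulas: $u_1(K,K)=u_2(K,K)=K/\rho$, whereas $\max_{i,j}u_1(i,j)=\max_i i/\rho=K/\rho$ and likewise $\max_{i,j}u_2(i,j)=K/\rho$, so each agent attains its globally maximal payoff at $(K,K)$.

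The step I expect to require the most care is the minimality half of Part~1 --- ruling out that some clever (possibly mixed) dominating strategy could collapse several eliminations into one iteration or bypass a protector. The best-response-in-a-fixed-column/row certificate is precisely what forecloses this: a dominating mixture must strictly beat the dominated action \emph{in every surviving column}, yet in its protector column the action already achieves the unique maximum, so no such mixture can exist. Making this ``no shortcut'' argument airtight across all intermediate reduced games, rather than merely at the initial step, is the crux of the proof.
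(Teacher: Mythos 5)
Your proposal is correct and follows essentially the same route as the paper: your ``protector'' fact (action $i$ is the strict maximizer of its own column $i-1$, resp.\ row $j$, so no mixture can dominate it while that column/row survives) is exactly the paper's argument for why only one action is dominated at each stage, and your ``activation'' step plus the precedence chain is the paper's induction yielding $L_0 = 2K-2$. The only cosmetic difference is in Part~2, where you invoke the cited fact from \cite{viossat2008having} that a mixed-strategy-solvable game has a unique CE, while the paper reproves that fact from scratch for completeness.
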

 

\subsection{No-swap Regret $\not \Rightarrow$ Efficient Iterated Dominance Elimination}
 
 Our following theorem shows that for \emph{any} small $\epsilon$, there exist  \gameshort{}($\log(\frac{1}{\epsilon}),c$) games for some constant $c$, in which an $\epsilon$-CE may \emph{never} play the unique CE $(K,K)$ and, moreover, will lead to a much smaller utility than the agent's equilibrium utility.\footnote{The key property here is that the game payoffs depend on $\epsilon$  \emph{logarithmically}.  Such an instance with payoffs that depends on $\epsilon$ linearly would be less surprising and also easier to construct.}  
\begin{theorem}\label{thm:diamwelfare}
For any $\epsilon > 0$ and  any \gameshort$(K,c)$ game satisfying $\log(1/\epsilon) \leq  (2K-2) \log(c) $,  the game always has an $\epsilon$-CE which plays the (unique) CE strategy  $(K,K)$ with probability $0$. Moreover, the welfare of this $\epsilon$-CE is at most $ \frac{1+ \lceil \log(1/\epsilon)/\log (c) \rceil}{2K} $ fraction of the equilibrium welfare. 
\end{theorem}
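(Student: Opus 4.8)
The plan is to \emph{construct} an explicit $\epsilon$-correlated equilibrium $\pi$ whose support stays on low-index actions, so that it never recommends $(K,K)$ and has small welfare. Recall that $\pi$ is an $\epsilon$-CE precisely when each player's swap regret is at most $\epsilon$; that is, for every player and every recommended action, the weighted gain $\sum_{\text{opp.}}\pi(\cdot)\,[\,u(\text{deviation})-u(\text{recommendation})\,]$ from any single swap is at most $\epsilon$. I would take the candidate $\pi$ supported on the \emph{staircase} of profiles $(1,1),(2,1),(2,2),(3,2),(3,3),\dots,(L,L-1),(L,L)$ for a cutoff $L<K$ to be chosen, with geometrically increasing mass along this path: writing the path as $P_1,\dots,P_{2L-1}$, set $\pi(P_k)\propto r^{k}$ with ratio $r=c+1$. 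Intuitively this keeps both players on or just below the diagonal, exactly where the $-c/\rho$ off-diagonal penalties of the \gameshort{} game can be used to block upward deviations.

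The key step is to show that for this $\pi$ \emph{every} swap is weakly unprofitable except a single one. Using the piecewise form of $u_1,u_2$ one verifies that: (i) any downward swap strictly loses, since lowering one's index strictly lowers the payoff on every on/near-diagonal profile; (ii) any upward swap by two or more strictly loses, since it lands in the $-c/\rho$ region on the entire support; and (iii) an upward swap by exactly one, say the row player at level $i$, has total gain $\tfrac1\rho\big(\pi(i,i)-(c+i)\pi(i,i-1)\big)$, the reward profile $(i,i)$ contributing $+1/\rho$ and the penalty profile $(i,i-1)$ contributing $-(c+i)/\rho$. Substituting the geometric masses, this equals $\tfrac{r^{2i-2}}{\rho Z}\big(r-(c+i)\big)\le 0$ for all $i\ge 2$ since $r=c+1\le c+i$; the analogous column computation gives $\tfrac{r^{2j-1}}{\rho Z}\big(r-(c+j)\big)\le 0$ for all $j\ge 1$. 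The sole exception is the row player at level $i=1$: the profile $(1,0)$ does not exist, so no penalty term is available and the gain is $\pi(1,1)/\rho>0$. Hence the whole $\epsilon$-CE condition collapses to the single scalar inequality $\pi(1,1)\le\rho\epsilon$.

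It then remains to choose the cutoff $L$. From the geometric masses, $\pi(1,1)=r/Z=(r-1)/(r^{2L-1}-1)$, so $\pi(1,1)\le\rho\epsilon$ becomes $(c+1)^{2L-1}\ge 1+c/(\rho\epsilon)$, whose smallest admissible solution satisfies $2L-1=\Theta\!\big(\log(1/\epsilon)/\log c\big)$. The hypothesis $\log(1/\epsilon)\le(2K-2)\log c$ is calibrated so that this smallest $L$ obeys $L\le K-1$: the support then omits action $K$ for both players, so in particular $\pi(K,K)=0$, which is the first claim. For the welfare bound, every staircase profile $P_k$ has social welfare $w(P_k)=(k+1)/\rho\le 2L/\rho$, hence $\mathbb{E}_\pi[u_1+u_2]\le 2L/\rho$; by Proposition~\ref{prop:diamprop} the equilibrium welfare is $2K/\rho$, giving a ratio of at most $L/K$, which is at most $\tfrac{1+m}{2K}$ once $L\le(1+m)/2$ with $m=\lceil\log(1/\epsilon)/\log c\rceil$.

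I expect the main obstacle to be the constant bookkeeping in the last two steps: proving the smallest feasible $L$ is simultaneously $\le K-1$ and $\le (1+m)/2$ requires a sliver of slack. This is precisely why the construction grows at base $r=c+1$ (strictly larger than $c$) rather than $c$, and why one should additionally exploit $\rho=\max\{K,c\}\ge c$ and, if needed, the full $\rho\epsilon$ budget — allowing the one-step gains to be as large as $\rho\epsilon$ rather than merely non-positive — to absorb the rounding in the ceiling. Everything else (the deviation case analysis and the geometric-series identities) is routine once the staircase support and the reduction to the single corner inequality $\pi(1,1)\le\rho\epsilon$ are in place.
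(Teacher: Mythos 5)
Your construction is essentially the paper's: both place an explicit $\epsilon$-CE on the staircase $(1,1),(2,1),(2,2),\dots$ with geometrically growing masses, and your case analysis (i)--(iii) correctly reduces the $\epsilon$-CE conditions to the chain of one-step constraints (the paper's linear system couples consecutive masses in exactly the way your step (iii) does). However, there is a genuine gap caused by your insistence that the support terminate at a \emph{diagonal} profile $(L,L)$: this forces the support length $2L-1$ to be odd, while the minimal length needed to satisfy $\pi(1,1)\le\rho\epsilon$ can be even, so in the worst case you carry one extra staircase step. This is not mere bookkeeping. For the welfare claim, take $K=3$, $c=100$ (so $\rho=100$) and $\epsilon=10^{-4}$: then $m=\lceil\log(1/\epsilon)/\log c\rceil=2$, the required exponent is $\lceil\log(1+c/(\rho\epsilon))/\log(c+1)\rceil=2$, so you are forced to $L=2$ with support $(1,1),(2,1),(2,2)$ and masses proportional to $101,101^2,101^3$; the expected welfare is $\approx 0.0399$, which is $\approx 0.665$ of the equilibrium welfare $0.06$, exceeding the claimed bound $(1+m)/(2K)=1/2$. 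Worse, the first claim can also fail: take $K=3$, $c=10$ (so $\rho=10$) and $\epsilon=10^{-4}$, so that $m=2K-2=4$ and the hypothesis holds with equality; the required exponent is $\lceil\log(10001)/\log 11\rceil=4$, forcing $L=3=K$, so your support contains $(K,K)$ --- indeed with the \emph{largest} mass $\approx 10/11$.

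The fix is structural rather than a matter of slack: the staircase must be allowed to terminate at an off-diagonal profile $(L,L-1)$. The paper does this implicitly by indexing all $2K-1$ staircase positions as $\delta_1,\dots,\delta_{2K-1}$, setting $\delta_i=\rho\epsilon c^{i-1}$ for $i<k$ (so the first constraint is saturated rather than merely satisfied), dumping all leftover mass at position $k$, and letting $k$ be even or odd as needed; this yields $k\le m$ support points and hence top welfare $(k+1)/\rho\le(m+1)/\rho$. None of the slack sources you propose repairs the parity defect: the ratio $r$ cannot be pushed much beyond $c+L$ without violating the intermediate swap constraints (the gain of the one-step upward swap at the top level is of order $(r-c-L)/(r\rho)$, which must stay below $\epsilon$), so enlarging $r$ cannot reliably save a full staircase step, and the inequalities $\rho\ge c$ and the $\rho\epsilon$ budget only affect lower-order terms. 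Once you permit an off-diagonal endpoint, your argument goes through and coincides with the paper's.
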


Specifically, by picking $K = \log(1/\epsilon)$ and $\log(c) = 1$, we obtain a \gameshort{}($\log(\frac{1}{\epsilon}),c$) game which admits an $\epsilon$-EC that will put $0$ probability at the unique CE $(K,K)$ and has utility at most half of the players' equilibrium utilities. This may appear quite counter-intuitive at the first glance, since how come an $\epsilon$-EC be so such ``far away'' from the real and unique  CE. This turns out to be due to a subtle difference --- that is, the $\epsilon$ in ``$\epsilon$-CE'' is measuring the $\epsilon$ difference in \emph{agent utilities},\footnote{ A distribution $\pi \in \Delta_{\cA}$ over action profiles is a $\epsilon-$CE if $\sum_{a_{-n}}\pi(a_n, a_{-n}) \left[ u_n(a'_n, a_{-n})  - u_n(a_n, a_{-n})\right] \leq \epsilon$ for any two actions $ a_n, a'_n \in \cA_n $ and for any player $n$. When $\epsilon=0$, this definitions degenerate to the standard  CE.  } whereas the ``far away'' conclusion reflected in Theorem \ref{thm:diamwelfare} is measuring the true distance between agent's \emph{action probabilities}.  Though in the limit  as $\epsilon\to 0$, the $\epsilon$-CE will tend to an exact CE, Theorem \ref{thm:diamwelfare} suggests that agent's strategies and equilibrium utilities can be far from the exact CE even when $\epsilon>0$ is extremely small compared to the game payoff. Therefore,  the fact that an agent does not have much incentive to deviate   when at an $\epsilon$-CE  does not imply that the action it plays is close to the (exact) CE action profile, neither implies his utility is close to the CE utility.  \emph{This insight also explains why classic no regret learning algorithm designed based on maximizing accumulated rewards may perform poorly for the task for iterated dominance elimination}, which will be formally proved in our next theoretical result as well as    further justified in our numerical results in Section \ref{append:exp}.

To concretize the message in Theorem \ref{thm:diamwelfare}, consider a very small \game{} with $c=K=10$. 
With the state-of-the-art $O(\log^4 T /T)$  swap regret algorithm by ~\citet{anagnostides2021near}, the empirical distribution of the no-regret learning agents is guaranteed to be a $10^{-9}-$CE after $T=10^{13}$ rounds. However, since $\log(1/\epsilon) = \log(10^{9}) < (2K-2)\log(c)$,   Theorem \ref{thm:diamwelfare} implies that  this $10^{-9}-$CE may still never play the equilibrium strategy $(K,K)$ and its welfare is at most $ \frac{1+ \lceil \log(1/\epsilon)/\log (c) \rceil}{2K} = \frac{1}{2}$ of the equilibrium welfare. Notably, $10^{13}$ rounds of sequential interactions could take several days for a modern CPU to simulate. 
Our experimental results in Appendix \ref{append:exp} further confirm  the mathematical analysis in Theorem \ref{thm:diamwelfare} --- our simulation  shows that, surprisingly, the player actions generated by no-swap regret algorithms will get stuck on the first few actions for both players and  remain  far from the unique CE $(K,K)$ even after $10^8$ rounds.    

\subsection{Exponential-Time Convergence of \Algoclass{} Algorithms} 
We now show that a broad and natural class of online learning algorithms, dubbed {\it \algoclass{}} algorithms, fails to eliminate all iteratively dominated actions efficiently. Roughly speaking, a \algoclass{} algorithm has the following property: at each time step $t$, if the accumulated reward from action $i$ exceeds that from action $j$, the learning algorithm will be more likely to play action $i$ than action $j$. We call online learning algorithms with such property ``\algoclass{}'' learning algorithms. Perhaps unsurprisingly,  many celebrated learning algorithms are \algoclass{}, e.g., Follow-the-Perturbed-Leader (FTPL), and the entire class of Dual Averaging (DA) algorithms with symmetric mirror maps including Exponential Weight (EW), lazy gradient descent (LGD) and fictitious play (see our detailed discussion in Appendix \ref{append:nonconverge}). Formally, we define \algoclass{} algorithms as follows.

\begin{definition}\label{def:fair_algo}
Let $\y_t=(y_1(t),\cdots,y_K(t))$ be the vector that stores the (possibly weighted) accumulated rewards for all the actions before round $t$, and $\p_t=(p_1(t),\cdots,p_K(t))$ be the probability distribution of the algorithm taking each action at round $t$. We call an algorithm \algoclass{} if  $y_i(t)>y_j(t)$ implies $p_i(t)\geq p_j(t)$ for any $i,j\in[K], t>0$. 
\end{definition}

In other words,  a \algoclass{} algorithm  maps the accumulated score vector $\y_t$ at each round to a distribution $\p_t=(p_1(t),\cdots,p_K(t))\in\Delta_K$ with some ``order-preserving" function $F$ (whose formal definition can be found in Appendix \ref{append:nonconverge}) and then randomly samples an action from $\p_t$ as the next move. The algorithm is also allowed to specify a learning rate sequence $\{\eta_t\}$ to accumulate the total rewards from each round. For convenience of our arguments, we formulate this general class of online learning algorithms into the following Algorithm \ref{al:fair}. 

\begin{algorithm}[h]
    \caption{The \Algoclass{} Algorithm Framework}
     \label{al:fair}
     \begin{algorithmic}[1]
         \State \textbf{Input:} An order-preserving function $F: \mathcal{Y}\rightarrow \Delta_K$, learning rate sequence $\{\eta_t>0\}$.

        \State $\y_1 \gets (0, \cdots, 0)$\;
        
        \For{$t = 1\dots T$}
            \State Compute $\p_t = F(\y_t)$\;
            
            \State Draw an action $i_t$ from the distribution $\p_t$.\;
            
            \State Receive the expected payoff $\tilde{\u}_t=(\tilde{u}_1(t),\cdots,\tilde{u}_K(t)) $ for each action $i$ from the first-order oracle.\;
            
            \State Update $\y_{t+1} = \y_t + \eta_t \tilde{\u}_t$. \;
            
        \EndFor
    \end{algorithmic}
\end{algorithm}
 
Notably, the notion of ``\algoclass{}'' applies to both the full-information setting (i.e., the rewards of all actions are revealed after taking an action) and the bandit setting (i.e., only the reward of the taken action is revealed). In the following analysis, we show that even with access to  full-information feedback, any \algoclass{} algorithm needs at least exponentially many rounds to eliminate all iteratively dominated actions. 

We consider the situation where all agents are running \algoclass{} algorithms with typically adopted non-increasing learning rates.\footnote{Variants of EW (a.k.a., multiplicative weight updates) have been proved to converge to equilibria in other games such as potential games \citep{cohen2017learning} and concave games \citep{Bravo2018bandit}. } That is, at any round $t$, each agent will do the standard update for any action $i$ using estimated reward $\tilde{u}_{i}(t) $ with learning rate $\eta_t$ that is non-increasing in $t$.
Perhaps surprisingly, even with perfect gradient feedback (as oppose to the noisy gradient from bandit feedback)\footnote{The perfect gradient of each agent $n$ in the game is a vector with each entry, $\tilde{u}_{a_n}(t) = \sum_{a_{-n}} p_{a_{-n}}(t) u_n(a_n, a_{-n}) $, equivalent to the expected payoff of each action $a_n$ given the opponents' strategy profile, since the loss is a linear function of the payoff. The noisy gradient estimated from bandit feedback can be found in our newly designed algorithm in the following section. }, agents following the \algoclass{} algorithms will take provably exponential rounds to converge to the unique NE in DIR games.

\begin{theorem}\label{thm:nonconverge} Consider the \gameshort{}$(K,c)$ game with any $K\geq 3$ and $c \geq 3K^2$. Suppose two agents both follow a \algoclass{} algorithm \ref{al:fair} 
equipped with a non-negative, bounded, \emph{non-increasing} learning rate sequence $\{\eta_t\}_{t=1}^{\infty}$. Then agent $B$ will place at most $1/2$ probability on the (unique) pure NE strategy at any round  $ t \leq 3^{K-2}$.\footnote{This rules out even the (weaker) average-sequence convergence (in distribution sense) to the unique NE of the game.} 

\end{theorem} 
Theorem \ref{thm:nonconverge} proves the inefficiency of the family of \algoclass{} algorithms with a non-increasing learning rate sequence in terms of eliminating iteratively dominated strategies. We remark that the requirement $c\geq 3K^2$ is only necessary to the proof technique and does not imply the \gameshort{} game is easy to solve when $c< 3K^2$. As we will demonstrate later in the experiments, the choice of $c=O(K)$ already results in an extremely slow empirical convergence rate for Exp3 algorithm and its variants. 
\paragraph{Additional Discussions on Connection to Related Works.} 
\citet{cohen2017hedging} showed that under full information feedback, the EW algorithm with learning rate $\eta_t=\frac{1}{t^b}$ for $ 0<b<1$ eliminates dominated actions exponentially fast. They remarked that their proof can be extended by induction to argue
that the sequence of play induced by EW will ultimately eliminate all iteratively dominated strategies of the game. Our Theorem \ref{thm:nonconverge} shows that the number of rounds needed by their algorithm will, unfortunately, be exponential in the elimination length. That is, it is easy to eliminate dominated actions for one iteration, but difficult to iteratively eliminate them all. \citet{cohen2017hedging,mertikopoulos2019learning}    showed   the EW algorithm with proper learning rate converges to a pure NE exponentially fast with high probability if   that equilibrium satisfies a global variational stability.   Our negative result does not contradict their results because  the global variational stability condition does not hold in \gameshort{} games. The verification of this claim can be found in Appendix \ref{append:variational}. \cite{laraki2013higher} considered the replicator dynamics, the continuous version of EW, and showed the probability of playing any iteratively dominated action shrinks to $0$ over time. This result matches ours that all iteratively dominated actions will be removed as $t\to \infty$. However, the convergence rates in continuous-time dynamics are not necessarily meaningful, as $t$ can be reparametrized arbitrarily; it is unclear how a time-dependent rate can be translated to its corresponding rate in a discrete-time framework as a function of iterations, because infinitely many iterations are required to simulate a continuous dynamics. Therefore, our exponential time convergence result is a more explicit characterization of the learning barrier in iterated dominance elimination.


\subsection{Proof of Theorem \ref{thm:nonconverge}}\label{subsec:proof-of-exp-converge}



\paragraph{Overview of the Proof.} The proof of Theorem \ref{thm:nonconverge} is involved and requires new ideas since we are not aware of any result of similar spirit in the literature. We start by highlighting the key ideas here before diving into the technical arguments. Without loss of generality, we  consider player B and let the action set be $\{b_1,\cdots,b_K\}$. The key to our proof is to show the existence of a constant $c_0>1$ such that if B's action probabilities $p_B(t)$ at time $t$ concentrate on $\{b_1, \cdots, b_n\}$ for the first $T_{n}$ rounds, then it must also concentrate on $\{b_1, \cdots, b_{n+1}\}$ for the first $T_{n+1}=c_0T_{n}$ rounds. The intuition behind this fact comes from the property of the DIR game: any pure action looks very profitable before it becomes iteratively dominated and thus could have accumulated a very large score before any \algoclass{} algorithm starts to penalize it. As a result, any \algoclass{} algorithm has to suffer additional rounds \emph{proportional to the current time step} to eliminate the next action, which incurs an exponential convergence time. A intricate induction is needed to make this intuition a formal argument, which we formally show next. 

\noindent
\textbf{Step 1: Preparations for the Proof. } 
We consider the DIR game with $c\geq 3K^2>K$, so the normalization factor in Equation \eqref{eq:lowerboundinstance} is $1/c$. Both agents follow Algorithm \ref{al:fair} with a non-increasing learning rate sequence $\{\eta_t\}_{t=1}^{\infty}$ in a repeated DIR game.  Let $u_{A,i}(t), u_{B,i}(t), p_{A,i}(t), p_{B,i}(t)$ denote agent $A$ and agent $B$'s payoffs\footnote{For simplification of notation, we omit the bar `` $\tilde{}$ '' above $u$, but it should still be interpreted as the expected payoff of an action given opponent's strategy profile.} and probabilities of picking the $i$-th action at round $t$, respectively. We further let  $$y_{A,i}(t)=\sum_{s=1}^{t-1} u_{A,i}(s)\eta_s, \qquad y_{B,i}(t)=\sum_{s=1}^{t-1} u_{B,i}(s)\eta_s$$  be the $i$th arm's accumulated weights of agent $A$ and agent $B$ till round $t$. Then, according to Algorithm \ref{al:fair},  
$$(p_{A, 1}(t),\cdots,p_{A, K}(t))=F(y_{A, 1}(t),\cdots,y_{A, K}(t)),$$ $$(p_{B, 1}(t),\cdots,p_{B, K}(t))=F(y_{B, 1}(t),\cdots,y_{B, K}(t)).$$

\noindent \textbf{Step 2: The Dueling Lemmas and Their Proofs. } 
 At a high level, by leveraging the ``order-preserving'' property from the definition of \algoclass{} algorithms, our proof employs a complex induction argument that if both players have significant probabilities to play the first $n$ actions within time $T_n$, then they must also have significant amount of probabilities to play the first $n+1$ actions within time $(1+\frac{2c}{3K^2})T_n$. This intuition is formalized by the following two lemmas. The first lemma says that if $B$ has constant probability to play actions from $1,2,\cdots, n$ within the first $(1+\frac{2c}{3K^2})^{n-1}$ rounds, then any dual averaging algorithm used by agent $A$ must play actions $1,2,\cdots, n, n+1$ with some constant probability during the same period. The second lemma is an analogous (though subtlely different) conclusion that uses $A$'s behavior to argue $B$'s trajectory properties. It is such dueling between $A$'s and $B$'s behaviors --- thus the name of \emph{dueling lemmas} --- that make their convergence to rationalizable actions exponentially slow.  

We believe these dueling lemmas reveal the intrinsic difficulty of using \algoclass{} algorithms for iterated dominance elimination, and thus provide a formal proofs for them below. The key challenge in their proofs is to manage the inter-agent utility externalities by setting the right parameter scales at which the two agents' strategies duel. 


\begin{lemma}\label{lm:inductive_argument1}
For any $1\leq n\leq K-2$, if there exists $T_n\geq 1$ such that $
    \sum_{j=1}^n p_{B,j}(t) \geq \frac{1}{K-n+1}$  for any $  t\leq T_n$, 
then we must have $
        \sum_{i=1}^{n+1} p_{A,i}(t)\geq \frac{1}{K-n}, \forall t\leq T_{n+1}$ for  $T_{n+1}=(1+\frac{2c}{3K^2})T_n$.   
\end{lemma}
\begin{lemma}\label{lm:inductive_argument2}
For any $1\leq n\leq K-2$, if there exists $T_{n+1} \geq 1$ such that $ \sum_{i=1}^{n+1} p_{A,i}(t)\geq \frac{1}{K-n}$ for any $t\leq T_{n+1},$ then we must   have  $ 
        \sum_{j=1}^{n+1} p_{B,j}(t)\geq \frac{1}{K-n}, \forall t\leq T_{n+1}.$
\end{lemma}
\begin{proof}\textbf{of Lemma \ref{lm:inductive_argument1}}.  
At each step $t\leq T_n$, using the assumed condition that $\sum_{j=1}^n p_{B,j}(t) \geq \frac{1}{K-n+1}$ and $\frac{K}{c} \leq \frac{1}{3K}$,  we have for any  $ i=n+2,\cdots,K$ 
    \begin{align*}
        u_{A,i}(t)
        & =    \sum_{j=1}^n p_{B,j}(t)\cdot(-1)+(1-\sum_{j=1}^n p_{B,j}(t))\cdot(\frac{i}{c}) & \mbox{by Def. of DIR games}  \\ 
        &\leq   \sum_{j=1}^n p_{B,j}(t)\cdot(-1)+(1-\sum_{j=1}^n p_{B,j}(t))\cdot(\frac{K}{c}) &   \\  
        & \leq   \frac{1}{K-n+1}\cdot(-1)+(1-\frac{1}{K-n+1})\cdot(\frac{K}{c})  & \mbox{by assumed conditions} \\ 
        & \leq    -\frac{1}{K-n+1}+ \frac{K-n}{K-n+1}\cdot \frac{1}{3K} & \mbox{since $c \geq 3K^2$} \\ 
        &  <   \frac{-2}{3(K-n+1)} & \mbox{since $\frac{K-n}{K}<  1$}  \\
        & \leq -\frac{2}{3K}. & \mbox{$\frac{-2}{3(K-n+1)}$ decreases in $n$}
    \end{align*} 

Suppose player $A$ uses any DA algorithm with learning rate $\{ \eta_t \}_{t=1}^{\infty}$. Using the above strict upper bound for $u_{A,i}(t) (<-\frac{2}{3K})$, the cumulative rewards of any action $i \in \{n+2,\cdots, K\}$ at time step $t=T_n+1$ can  be upper bounded as 
    \begin{align}\label{eq:y-Ai-upperbound}
        y_{A,i}(T_n+1)&= \sum_{t=1}^{T_n}\eta_t u_{A,i}(t) < -\frac{2\sum_{t=1}^{T_n} \eta_t}{3K}, \quad \forall i=n+2,\cdots,K.
    \end{align}
Now let $T_{n+1}=(1+\frac{2c}{3K^2})\cdot T_n$ and consider any $t\leq T_{n+1}$. First, if $t \leq T_n+1$, we have $ y_{A,i}(t)= \sum_{\tau=1}^{t-1}\eta_\tau u_{A,i}(\tau) < -\frac{2\sum_{\tau=1}^{t-1} \eta_\tau}{3K} \leq 0$. We now consider $ T_n + 1 <  t \leq T_{n+1}$. For any $i=n+2,\cdots,K$,  we have 
    \begin{align*}  
        y_{A,i}(t) &= y_{A,i}(T_n+1) + \sum_{s=T_n+1}^{t-1}\eta_t u_{A,i}(s)   \\ 
        &<-\frac{2\sum_{t=1}^{T_n} \eta_t}{3K}+ \sum_{s=T_n+1}^{t-1}  \eta_{T_n+1}\cdot \frac{K}{c}  & \mbox{by Ineq. \eqref{eq:y-Ai-upperbound} and $\eta_t \leq \eta_{T_{n+1}}$} \\  
        &\leq -\frac{2\sum_{t=1}^{T_n} \eta_t}{3K}+\frac{K}{c}\cdot \eta_{T_n+1}\cdot (\frac{2c}{3K^2}  T_n)  & \mbox{since $t-T_n \leq \frac{2c}{3K^2}  T_n$} \\  
        &= -\frac{2\sum_{t=1}^{T_n} (\eta_t-\eta_{T_n+1})}{3K}\leq 0,  & \mbox{due to non-increasing learning rate}
    \end{align*}
    
Therefore, $y_{A,i}(t) < 0$ for any $ t \leq T_{n+1}$.  However, note that $y_{A,1}(t)\geq 0$ for any $ t>0$, we thus conclude that $y_{A,i}(t)<y_{A,1}(t)$ for any action $i\geq n+2$ at any time $t\leq T_{n+1}$. 


Apply the definition of \algoclass{} algorithms, we obtain $p_{A,i}(t)\leq p_{A,1}(t)$ and as a result,
    \begin{align} \notag 
          K-n-1  
        &\geq (K-n-1) \cdot [p_{A,1}(t) + \sum_{i=n+2}^{K} p_{A,i}(t)] & \\  \notag 
          & \geq \sum_{i=n+2}^{K} p_{A,i}(t) + (K-n-1)\cdot [\sum_{i=n+2}^{K} p_{A,i}(t)] & \mbox{by $p_{A,1}(t) \geq p_{A,i}(t), \forall i \geq n+2$}\\ \label{eq:duelinglemma-bound}
          & = (K-n)[\sum_{i=n+2}^{K} p_{A,i}(t)].  
    \end{align}
This implies $\sum_{i=n+2}^{K} p_{A,i}(t)\leq \frac{K-n-1}{K-n}, \forall t\leq T_{n+1}$ and thus $ \sum_{i=1}^{n+1} p_{A,i}(t)\geq \frac{1}{K-n},$ as desired.  
 \end{proof}
 
 \begin{proof}\textbf{of Lemma \ref{lm:inductive_argument2}}.  
Using the assumed condition $ \sum_{i=1}^{n+1} p_{A,i}(t)\geq \frac{1}{K-n}$ for any $ t\leq T_{n+1}$, we can   derive an upper bound for $u_{B,j}(t)$ for any $j = n+2, \cdots, K $, as follows     
    \begin{align*}
        u_{B,j}(t) &= \sum_{i=1}^{n+1} p_{A,i}(t)\cdot(-1)+(1-\sum_{i=1}^{n+1} p_{A,i}(t))\cdot(\frac{j}{c}) & \mbox{by Def. of DIR games}\\  
        &\leq\sum_{i=1}^{n+1} p_{A,i}(t)\cdot(-1)+(1-\sum_{i=1}^{n+1} p_{A,i}(t))\cdot(\frac{K}{c}) \\ 
        & \leq \frac{1}{K-n}\cdot(-1)+(1-\frac{1}{K-n})\cdot(\frac{K}{c}) \\ 
        & \leq -\frac{1}{K-n}+ \frac{K-n-1}{K-n}\cdot \frac{1}{3K} \\ 
        & < -\frac{2}{3(K-n)} \leq -\frac{2}{3K}.
    \end{align*} 
    
 Hence, we similarly conclude that $y_{B,j}(t)<0\leq y_{B,1}(t)$ for any $j\geq n+2$ and $t\leq T_{n+1}$, which yields $p_{B,j}(t)\leq p_{B,1}(t)$. Then we similarly follow Inequality \eqref{eq:duelinglemma-bound} to obtain $\sum_{j=n+2}^{K} p_{B,j}(t)\leq \frac{K-n-1}{K-n}$ and thus $
        \sum_{j=1}^{n+1} p_{B,j}(t)\geq \frac{1}{K-n}, \forall t\leq T_{n+1}$,    which completes the proof. 
\end{proof}

\noindent \textbf{Step 3: Concluding the Proof. }  
Armed with the dueling lemmas   \ref{lm:inductive_argument1} and   \ref{lm:inductive_argument2}, we are now ready to conclude the proof. Specifically, the following is a direct corollary of the two lemmas. 
\begin{corollary}\label{cor:dueling}
For any $1\leq n\leq K-2$, if there exists $T_n\geq 1$ such that $
    \sum_{j=1}^n p_{B,j}(t) \geq \frac{1}{K-n+1}$  for any $  t\leq T_n$, 
then   for  $T_{n+1}=(1+\frac{2c}{3K^2})T_n$ we must   have  $ 
        \sum_{j=1}^{n+1} p_{B,j}(t)\geq \frac{1}{K-n}$ for any $t\leq T_{n+1}.$ 
\end{corollary} 
Note that the ``if'' condition in the above corollary clearly holds for $n=1$, in which case we can verify that $T_1=1$ satisfies  $\sum_{j=1}^n p_{B,j}(t) =p_{B,1}(1) \geq \frac{1}{K} = \frac{1}{K-n+1}$  for any $  t\leq T_n)$  since both players start by taking actions uniformly at random. Applying corollary \ref{cor:dueling} inductively until any $n\leq K-2$, we  obtain that  $\sum_{j=1}^{n+1} p_{B,j}(t) \geq \frac{1}{K-n}, \forall t\leq T_{n+1} = (1+\frac{2c}{3K^2})^{n}T_{1}$. Plugging in $T_1 = 1$ and the upper bound $K-2$ of allowed value for $n$, we thus have 

 $$\sum_{j=1}^{K-1} p_{B,j}(t) \geq \frac{1}{2}, \forall t\leq T_{K-1} = (1+\frac{2c}{3K^2})^{K-2}$$
Since $c \geq 3K^2$, so $T_{K-1} \geq 3^{K-2}$. The above inequality implies that within the first $3^{K-2}$ time steps, player $B$ will never  play the unique equilibrium action $K$ with probability larger than $1/2$. This rules out the convergence of the two agents' strategies --- either in the average sense or last-iterate sense ---  to the NE within  $3^{K-2}$ time steps. 
Note that since this statement holds deterministically  since we are in the simpler setup in which the agents have full feedback, and thus can fully observe the (expected) payoff (i.e., perfect gradient feedback).  Since $(K, K)$ is the unique NE of this game,  Moreover, a larger value of $c$ will imply a larger $T_{K-1} = (1+\frac{2c}{3K^2})^{K-2}$ value and would further slow down the convergence rate.


\section{\model{} and its Efficiency towards Rationalizability} \label{sec:converge}

\subsection{The  \fullmodel{}  (\model{}) Algorithm}
Motivated by the barriers in Section \ref{sec:barr},  we now introduce a novel algorithm\emph{ \fullmodel{} } (\model{}) described in Algorithm \ref{al:hdg},    which turns out to provably guarantee efficient  elimination of all iteratively dominated actions, with high probability. 

\begin{algorithm}[t]
    \caption{\texttt{Exp3} with \texttt{D}iminishing \texttt{H}istory (\model{})}
     \label{al:hdg}
    \begin{algorithmic}[1] 
        \State \textbf{Input:} Number of actions $K$, parameter $\epsilon_t$, $\beta$ .
        \State \textbf{Initialization:} $y_i(0)=0, \forall i\in [K]$.
            \For{$t = 0,1, \dots, T$} 
            \State Set 
            $$
            p_i(t) = (1-\epsilon_t) \frac{ \exp (y_i(t)) }{ \sum_{j\in[K]} \exp(y_j(t)) } + \frac{\epsilon_t}{K}  \quad \forall i\in [K].
            $$
            \State Draw action $i_t$ randomly accordingly to the probability distribution $(p_1(t), p_2(t), \dots p_K(t))$.
            \State Observe realized payoff $u_{i_t}(t)$.
            \State Compute \emph{unbiased} payoff estimator $\tilde{u}_{i}(t) = \frac{u_{i}(t)}{p_{i}(t)}\cdot \mathbb{I}(i = i_t), \forall i \in [K]$. 
            \State Update 
            $$
            y_i(t+1) =  
              (\frac{t}{t+1})^{\beta} \cdot  y_i(t) + \tilde{u}_{i}(t), \quad  \forall i\in [K].
            $$
            \EndFor
    \end{algorithmic}
\end{algorithm}

\model{} is an EW-style algorithm but with the following important characteristics:
\begin{enumerate}[leftmargin=*]
    \item \model{} uses an unbiased payoff estimator $\tilde{u}_{i_t}(t) = \frac{u_{i_t}(t)}{p_{i_t}(t)}$. This turns out to be crucial  since  our proof  has to upper bound the absolute value $\left|\sum_{t=1}^T \gamma_t(\tilde{u}_{a}(t)-u_{a}(t))\right|$ whereas standard single-agent bandit problems  only need a one-side upper bound for $\sum_{t=1}^T \gamma_t(\tilde{u}_{a}(t)-u_{a}(t))$ and thus a biased conservative estimation of $\tilde{u}_{i_t}$ there can be helpful (actually, is essential for some algorithms). However, similar conservative reward estimators  appear  difficult to work in our problem because it may let a dominating action lose its advantage. This highlights an interesting difference between eliminating iteratively dominated actions in multi-agent setup and learning optimal actions in single-agent setup.
    \item During reward updates at Step 8, \model{} will always discount \emph{each action}'s previous reward $y_i(t)$ by a factor $(\frac{t-1}{t})^{\beta}$ for some carefully chosen parameter $\beta$, \emph{regardless of whether this action is taken at this round or not}. Therefore, the rate of historical rewards will gradually diminish, which thus leads to the name of our algorithm. 
\end{enumerate}     

\paragraph{Effective Learning Rates.} Note that the update in Step 8 of Algorithm \ref{al:hdg} only captures the recursive relation between $y_i(t+1)$ and $y_i(t)$. From this recursion, we can easily derive how $y_i(t)$ depends on all previous payoff estimation $\tilde{u}_i(\tau)$ for $\tau=0,1,\cdots, t$, which is the follows, 
\begin{equation}\label{eq:sum_loss}
y_i(t+1)=   \sum_{\tau=0}^{t} (\frac{\tau}{t})^{\beta} \tilde{u}_i(\tau).
\end{equation} 
For notational convenience, we call  $\gamma_{\tau}^{(t)}=(\tau/t)^{\beta}$ the \emph{effective} learning rate. Notably, the learning rate $\gamma_{\tau}^{(t)}$ for any fixed past payoff estimation $\tilde{u}_i(\tau)$ \emph{dynamically decreases} as the round $t$ becomes large. This means that the weight of the historical estimation $\tilde{u}_i(\tau)$ will become smaller and smaller as time goes. In other words, the algorithm exhibits \emph{recency bias} and gradually ``forgets'' histories and always relies more on recent payoff estimations.  


We end this section by comparing \model{} with previous Exp3-style algorithms. 
 In standard Exp3 algorithm \citep{auer2002nonstochastic}, the \emph{effective} learning rate is exactly its learning rate $\gamma_t$, which is set to a constant $\sqrt{\frac{\log K}{KT}}$ or $O(\sqrt{\frac{\log K}{Kt}})$ to guarantee a sub-linear regret. Some other variants\footnote{See~\cite{neu2015explore,mertikopoulos2019learning, cohen2017learning, Bravo2018bandit}} of Exp3 use non-increasing effective learning rate $\gamma_t$. 
 However,  \model{} differs from these algorithms in at least two key aspects: (1) its effective learning rate $\gamma_\tau^{(t)}$ is increasing in $\tau$, i.e., biased towards recent reward estimations; (2) $\gamma_\tau^{(t)}$ will be re-scaled every time  $t$ increases, i.e., a new observation comes. The first deviation is justified by our Theorem \ref{thm:nonconverge} since DA with any decreasing learning rate necessarily suffer exponential convergence.  We note that increasing learning rate has  been recently studied  for single-agent bandit problems \citep{bubeck2017kernel,lee2020bias,agarwal2017corralling} and for faster convergence to coarse correlated equilibrium in games \citep{syrgkanis2015fast}.\footnote{ \citet{syrgkanis2015fast} use optimistic follow the regularized leader (OFTRL) with recency bias. The obtained algorithm with entropy regularizer can be viewed as EW with dynamically increasing learning rate. }  Unfortunately, our experimental results in Appendix \ref{append:exp} show that these algorithms fail to efficiently eliminate all iteratively dominated actions, which illustrates that careful design of learning rate is necessary for efficient iterative dominance elimination. 
 
 
 

\subsection{Efficient Convergence of \model{} under Noisy Bandit Feedback}
We now present the theoretical guarantee for \model{}. Due to randomness, no algorithm guarantees dominance elimination for certain. Thus, we introduce a natural notion of \emph{essential elimination}: 
 \begin{definition}[$\varepsilon$-\emph{Essential Elimination}]\label{df:essential_elim}
We say that an action $i\in\cA_n$ is $\varepsilon$-\emph{essentially eliminated} at time step $T$ if agent-$n$'s probability of playing $i$ satisfies
$p_i(T) \leq \frac{\varepsilon}{4KN}$.
\end{definition}
Note that if all the actions in $E_{L_0}$ are \emph{essentially eliminated} at time step $T$, the mixed strategy $x(T)$ given by the probability distribution $p(T)$ of all agents satisfies $\|x(T)-x^*\|_1\leq \frac{\varepsilon}{4KN}\cdot 2(KN-N)<\frac{\varepsilon}{2}.$ Therefore, $\varepsilon$-\emph{Essential Elimination} implies last-iterate convergence. We are now ready to give the convergence guarantee for \model{} in Theorem \ref{thm:main_convergence}:
 \begin{theorem}\label{thm:main_convergence}
Consider any game $\cG(\cN, \cA, u)$ with elimination length $L_0$ and elimination set $\{ E_l \}_{l=1}^{L_0}$. Suppose
all agents in $\cN $  run \model{}  with parameters $\beta>0$ and  $\epsilon_t=t^{-b}$ for some $ b \in (0,1)$. Then for any $\varepsilon,\delta\in (0,\frac{1}{2})$, any action in $E_l\setminus E_{l-1}$ will be $\varepsilon$-\emph{essentially eliminated} with probability at least $1-2|E_l|(T_l+s)^2\delta, \forall s\geq 0$, from time step $t=T_l$ to $t=T_l+s$,  where the sequence $\{T_l\}_{l=1}^{L_0}$ is defined recursively below: 
\begin{enumerate}
    \item $T_1$ is an integer such that for any $t\geq T_1$, 
    \begin{equation}\label{eq:suff_T1}
\frac{t^{-b}}{K} +  \exp\left(4\left(\sqrt{\frac{eK(1+\sigma^2)}{1+2\beta+b}}\right)\log^{\frac{1}{2}}\frac{2K}{\delta}\cdot t^{\frac{1+b}{2}} -\frac{\Delta t}{16(1+\beta)} \right)  < \frac{\min\{\varepsilon,\Delta/2\}}{4KN}.    
\end{equation}
    \item For any $l\geq1$, 
    \begin{equation}\label{eq:eq4Tk}
    T_{l+1}\geq\max\Bigg\{ (1+\frac{8}{\Delta})^{\frac{1}{1+\beta}}\cdot T_l, T_l+ \frac{(1+\beta)^2(4+\Delta)^2(8+\Delta)^2 }{4(1+2\beta)\Delta^2}\log \frac{1}{\delta}\Bigg\}.
    \end{equation}
\end{enumerate}

Specifically, if all agents run \model{} on $\cG$, then all iteratively dominated actions will be $\epsilon$-essentially eliminated after $T_{L_0}$ number of rounds in the last iterate convergence sense with probability at least $1-2KNT_{L_0}^2\delta$. 
\end{theorem}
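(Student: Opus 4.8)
The plan is to prove the recursive elimination claim by induction on the elimination layer $l$ and then instantiate it at $l=L_0$, $s=0$. Fix an agent, an action $a\in E_l\setminus E_{l-1}$, and its dominating mixed strategy $x_a$; by definition of the elimination path $a$ acquires a strict dominance gap only after every action in $E_{l-1}$ has been removed. The central object I would track is the \emph{score gap} $G_a(t):=\sum_i x_a(i)\,y_i(t)-y_a(t)$. The mixing rule gives $p_a(t)\le \frac{\epsilon_t}{K}+\frac{\exp(y_a(t))}{\sum_j\exp(y_j(t))}$, and Jensen's inequality applied along $x_a$ yields $\frac{\exp(y_a(t))}{\sum_j\exp(y_j(t))}\le \exp(-G_a(t))$. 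So it suffices to lower bound $G_a(t)$ and feed the bound into the thresholds \eqref{eq:suff_T1} and \eqref{eq:eq4Tk}. Using the closed form \eqref{eq:sum_loss}, I write $G_a(t)=\sum_{\tau=0}^{t-1}\gamma_\tau^{(t)}\big[\sum_i x_a(i)\tilde u_i(\tau)-\tilde u_a(\tau)\big]$ and split each bracket into its conditional mean $g_a(\tau):=\sum_i x_a(i)u_i(\tau)-u_a(\tau)$ and a martingale-difference part, producing a ``signal'' $S_a(t)=\sum_\tau\gamma_\tau^{(t)}g_a(\tau)$ and a ``noise'' $\mathcal M_a(t)$.

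\textbf{Base case $l=1$.} Here $g_a(\tau)\ge\Delta$ for every $\tau$, so $S_a(t)\ge \Delta\sum_\tau\gamma_\tau^{(t)}\ge \frac{\Delta t}{2(1+\beta)}$ by lower-bounding $\sum_{\tau=0}^{t-1}(\tau/t)^\beta$ with an integral. For the noise, the exploration floor $p_i(\tau)\ge\epsilon_\tau/K=\tau^{-b}/K$ gives $\mathbb{E}[\tilde u_i(\tau)^2\mid\mathcal F_{\tau-1}]\le (1+\sigma^2)/p_i(\tau)\le K(1+\sigma^2)\tau^{b}$, hence a predictable quadratic variation $\sum_\tau(\gamma_\tau^{(t)})^2\mathbb{E}[\cdot]\le \frac{K(1+\sigma^2)}{1+2\beta+b}\,t^{1+b}$. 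A Freedman-type concentration then bounds $|\mathcal M_a(t)|$ by the expression inside the exponential of \eqref{eq:suff_T1}, i.e.\ $O\!\big(\sqrt{\tfrac{K(1+\sigma^2)}{1+2\beta+b}\log\tfrac{K}{\delta}}\;t^{(1+b)/2}\big)$, at failure probability $\le 2\delta$ per $(i,t)$. Since $(1+b)/2<1$, the linear signal beats the sublinear noise once $t\ge T_1$, and \eqref{eq:suff_T1} is precisely the threshold forcing $p_a(t)<\frac{\min\{\varepsilon,\Delta/2\}}{4KN}$.

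\textbf{Inductive step.} Assume all of $E_{l-1}$ is $\frac{\min\{\varepsilon,\Delta/2\}}{4KN}$-essentially eliminated on $[T_{l-1},t]$; then for $\tau\ge T_l$ the opponents put total mass $\le\Delta/8$ on eliminated actions, so $g_a(\tau)\ge\Delta/2$, whereas for $\tau<T_l$ we only have $g_a(\tau)\ge-2$. The diminishing-history weights are decisive here: the harmful early contribution is $\ge -2\sum_{\tau<T_l}(\tau/t)^\beta\ge -\frac{2T_l^{1+\beta}}{(1+\beta)t^\beta}$, which \emph{decays} in $t$, while the recent contribution is $\ge \frac{\Delta}{2(1+\beta)t^\beta}\big(t^{1+\beta}-T_l^{1+\beta}\big)$. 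The multiplicative rule $T_{l+1}\ge(1+8/\Delta)^{1/(1+\beta)}T_l$ of \eqref{eq:eq4Tk} is chosen exactly so that $T_{l+1}^{1+\beta}-T_l^{1+\beta}\ge\frac{8}{\Delta}T_l^{1+\beta}$, which makes the net signal positive at $t=T_{l+1}$ and restores a linear bound $S_a(t)\ge\frac{\Delta t}{8(1+\beta)}$ for $t\ge T_{l+1}$; the additive term in \eqref{eq:eq4Tk} guarantees the window is long enough that the same Freedman bound controls $|\mathcal M_a(t)|$ at confidence $\delta$. Combining signal and noise drives $p_a(t)$ below $\frac{\varepsilon}{4KN}$ throughout $[T_l,T_l+s]$, which also maintains the inductive hypothesis that $E_{l-1}$ stays eliminated.

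\textbf{Probability accounting, conclusion, and main obstacle.} Because $\gamma_\tau^{(t)}$ is re-normalized at every $t$, $\mathcal M_a(\cdot)$ is a \emph{distinct} weighted martingale for each $t$, so I apply the concentration bound for each fixed $t$ and union over the $|E_l|$ relevant actions and over the time steps up to $T_l+s$; the resulting count gives the stated confidence $1-2|E_l|(T_l+s)^2\delta$. Taking $l=L_0$ and $s=0$ with $|E_{L_0}|\le KN$ yields $\varepsilon$-essential elimination of every dominated action at round $T_{L_0}$ with probability $\ge 1-2KNT_{L_0}^2\delta$, and the observation preceding the theorem upgrades this to $\|x(T_{L_0})-x^*\|_1<\varepsilon/2$, i.e.\ last-iterate convergence. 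The hardest part is exactly this uniform-in-$t$ control of $\mathcal M_a(t)=\sum_\tau\gamma_\tau^{(t)}(\tilde u_i(\tau)-u_i(\tau))$: since the weights depend on $t$ one cannot invoke a single maximal inequality, and the variance proxy $\sum_\tau(\gamma_\tau^{(t)})^2/p_i(\tau)$ must be pinned at order $t^{1+b}$ so the induced $t^{(1+b)/2}$ deviation is dominated by the linear signal. This is also where the \emph{unbiased} estimator (for a two-sided deviation bound) and the exploration floor $\epsilon_t$ (for a hard lower bound on $p_i(\tau)$) are indispensable.
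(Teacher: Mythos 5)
Your proposal is correct and follows essentially the same route as the paper's proof: Jensen's inequality to reduce $p_a(t)$ to the weighted score gap, a signal-plus-martingale decomposition with Azuma/Freedman concentration whose variance proxy $\sum_\tau (\gamma_\tau^{(t)})^2/\epsilon_\tau = O(t^{1+b})$ comes from the exploration floor, a per-$t$ union bound over the re-normalized martingales, and an induction over elimination layers in which the multiplicative condition $T_{l+1}\geq(1+8/\Delta)^{1/(1+\beta)}T_l$ absorbs the pre-$T_l$ ``inertia'' term $-2\sum_{\tau\leq T_l}\gamma_\tau^{(t)}$. The only cosmetic difference is that you center on the full conditional mean $g_a(\tau)$ and handle the opponents' residual mass on eliminated actions inside a single martingale, whereas the paper splits this into the estimator martingale (Lemma \ref{le:decompE}) plus a separate sub-martingale $Z_t$ for the realized utility gap; both hinge on the same observation that the eliminated mass is at most $\Delta/8$, so the conditional gap stays above $\Delta/2$.
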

To interpret  Equation \eqref{eq:suff_T1} and \eqref{eq:eq4Tk} in Theorem \ref{thm:main_convergence}, it will be  easier if we focus only on its leading terms. It is helpful (though not necessary) to think of $\beta \approx L_0$ and $b \approx 1/2$. When $t$ is large, the exponential term in Equation \eqref{eq:suff_T1} will be  dominated by $t^{-b}/K$ term since its exponent decreases exponentially in $t$. Therefore, we would expect $T_1$ to have  order around $( N \min\{\varepsilon,\Delta\}^{-1})^{1/b}$. In Equation \eqref{eq:eq4Tk}, the second term in the ``$\max$'' is usually dominated by the first term,  therefore    $T_{L_0}$ roughly has order $\Delta^{-L_0/\beta} T_1$. Therefore, Theorem \ref{thm:main_convergence} indicates: 1. \model{} needs polynomially many rounds to $\varepsilon$-essentially eliminate the first set of dominated actions in $E_1$; 2. once a set of iteratively dominated actions are $\varepsilon$-essentially eliminated, \model{} takes polynomially additional steps to eliminate the next set of iteratively dominated actions. Therefore, \model{} guarantees to $\varepsilon$-essentially eliminate all iteratively dominated actions in polynomially many rounds. 
Specifically, the following corollary gives an explicit and formal upper bound for $T_{L_0}$. Its proof is deferred to Appendix \ref{append:converge}.

\begin{corollary}\label{coro:maintheorem}
If all agents run \model{} on game $\cG(\cN, \cA, u)$ with parameters $\beta>0$ and $ \epsilon_t=t^{-\frac{1}{3}}$, then after $\tilde{O}(\max\{N^3,K^{1.5}\}\max\{\varepsilon^{-3},\Delta^{-3}\}(1+\sigma^3)\beta^{1.5}\log^{1.5}\frac{1}{\delta})$ number of rounds\footnote{By convention, $\tilde{O}$ notation omits logarithmic terms. }, all the iteratively dominated actions in $\cG$ will be $\varepsilon$-\emph{essentially eliminated} with probability at least $1-\delta$. 
\end{corollary}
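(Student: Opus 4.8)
The plan is to treat Corollary \ref{coro:maintheorem} purely as an explicit instantiation of the recursive bounds in Theorem \ref{thm:main_convergence} under the specific choice $b = 1/3$. Accordingly, the work splits into three steps: (i) solve the threshold inequality \eqref{eq:suff_T1} for the smallest admissible $T_1$; (ii) unroll the recursion \eqref{eq:eq4Tk} to bound $T_{L_0}$ in terms of $T_1$; and (iii) convert the theorem's success probability $1-2KNT_{L_0}^2\delta$ into the clean $1-\delta$ guarantee by rescaling $\delta$. No new analytic ideas are needed beyond the theorem itself; the content is entirely in the bookkeeping of exponents.

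First I would bound $T_1$. With $b=1/3$, the left-hand side of \eqref{eq:suff_T1} is a polynomial term $t^{-1/3}/K$ plus an exponential term whose exponent is $C\,t^{2/3}-\frac{\Delta t}{16(1+\beta)}$, where $C=4\sqrt{eK(1+\sigma^2)/(1+2\beta+1/3)}\,\log^{1/2}(2K/\delta)=\tilde O\big(\sqrt{K(1+\sigma^2)/\beta}\big)$. I would force each term below half of the right-hand side $\frac{\min\{\varepsilon,\Delta/2\}}{4KN}$. The polynomial term requires $t=\Omega\big((N/\min\{\varepsilon,\Delta\})^{3}\big)$, which supplies the $N^3\max\{\varepsilon^{-3},\Delta^{-3}\}$ factor. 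For the exponential term the key point is that the linear piece eventually dominates the sublinear piece $C t^{2/3}$; the crossover occurs at $t^{1/3}\asymp (1+\beta)C/\Delta$, and a polylogarithmic amount beyond it the exponent is at most $-\tfrac12\cdot\frac{\Delta t}{16(1+\beta)}$, forcing the term below any inverse-polynomial threshold. Cubing the crossover and inserting $C$ contributes a term of order $\beta^{3/2}K^{3/2}(1+\sigma^3)\Delta^{-3}\,\mathrm{polylog}$. Combining the two requirements gives $T_1=\tilde O\big(\max\{N^3,K^{3/2}\beta^{3/2}(1+\sigma^3)\}\max\{\varepsilon^{-3},\Delta^{-3}\}\log^{3/2}(1/\delta)\big)$, which already carries the announced shape.

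Next I would unroll \eqref{eq:eq4Tk}. Since $T_1$ is already polynomially large, for each $l$ the multiplicative branch $(1+8/\Delta)^{1/(1+\beta)}T_l$ dominates the additive branch, so iterating $L_0-1$ times yields $T_{L_0}\leq \exp\!\big(\tfrac{L_0-1}{1+\beta}\log(1+8/\Delta)\big)\,T_1$. The claimed bound then follows once this geometric prefactor is kept at $O(1)$, i.e. once $\beta$ is taken large enough that $\tfrac{L_0-1}{1+\beta}\log(1+8/\Delta)=O(1)$; the residual $\log(1+8/\Delta)=O(\log(1/\Delta))$ is logarithmic and hence absorbed into $\tilde O$, so $T_{L_0}=\tilde O(T_1)$. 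For the probability conversion I would instead run the algorithm with $\delta'=\delta/(2KNT_{L_0}^2)$, which turns the theorem's guarantee into $1-\delta$. This creates a mild self-reference, since $T_{L_0}$ depends on $\delta'$; but it does so only through $\log^{3/2}(1/\delta')$, so substituting and resolving the fixed point replaces $\log(1/\delta)$ by $\log(KN/\delta)+\log T_{L_0}=\tilde O(\log(1/\delta))$, again absorbed. Collecting the pieces gives the stated $\tilde O\big(\max\{N^3,K^{1.5}\}\max\{\varepsilon^{-3},\Delta^{-3}\}(1+\sigma^3)\beta^{1.5}\log^{1.5}(1/\delta)\big)$.

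I expect the main obstacle to be step (i): rigorously inverting the transcendental inequality \eqref{eq:suff_T1}, i.e. pinning down the $t$ beyond which the exponential term is negligible and verifying that, after cubing the crossover point, no extra $\Delta^{-1}$, $\beta$, or $\sigma$ dependence leaks in beyond the claimed orders. A secondary subtlety is keeping the geometric prefactor from step (ii) at $O(1)$ (so that no spurious $\Delta^{-1}$ survives) and cleanly closing the $\delta'$ fixed point in step (iii); both are routine once the correct regime for $\beta$ is fixed, but they are the places where a careless estimate would corrupt the final exponents.
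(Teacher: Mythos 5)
Your proposal is correct and follows essentially the same route as the paper's own proof: split the threshold inequality \eqref{eq:suff_T1} into its polynomial and exponential pieces to get $T_1=\tilde O\big(\max\{N^{3},K^{1.5}\}\min\{\varepsilon,\Delta\}^{-3}(1+\sigma^{2})^{1.5}\beta^{1.5}\log^{1.5}\tfrac{1}{\delta}\big)$, observe that the multiplicative branch of \eqref{eq:eq4Tk} dominates so that $T_{L_0}$ is the geometric prefactor $(1+8/\Delta)^{L_0/(1+\beta)}$ times $\tilde O(T_1)$, and rescale $\delta$ to absorb the $2KNT_{L_0}^{2}$ union-bound factor. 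Your treatment is, if anything, slightly more careful than the paper's in making explicit both the requirement that $\beta\gtrsim L_0\log(1/\Delta)$ keeps the geometric prefactor $O(1)$ (which the paper drops silently) and the fixed-point argument closing the $\delta'$ substitution.
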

{\bf Parameter Choices. } The optimal choice of $\beta$ depends on  $\Delta$ and $L_0$, which reveal the intrinsic difficulty of iterative dominance elimination in $\cG$. However, if $\Delta, L_0$ is unknown, a conservative choice of $\beta$ is the total number of agent actions $KN$. This always guarantees polynomial round convergence, concretely, in $\tilde{O}(\max\{N^3,K^{1.5}\}\max\{\varepsilon^{-3},\Delta^{-3}\}\beta^{1.5}\log^{1.5}\frac{1}{\delta})$ rounds. We also note that Corollary \ref{coro:maintheorem} instantiated a choice of $b=1/3$. This is to balance the dependence  on $(K,N)$ and $(\varepsilon,\Delta)$. If $b=1/2$, the upper bound would be $\tilde{O}(\max\{N^2,K^{2}\}\max\{\varepsilon^{-2},\Delta^{-4}\}\beta^{2}\log^{2}\frac{1}{\delta})$. When $2=N\ll K$ and $\Delta\sim O(K^{-1})$ in DIR games, a smaller $b$ is preferred.  This will be demonstrated  in our experiments. More generally, a good choice of $b$ will depend on the game structures.     

Corollary \ref{coro:maintheorem} also captures the convergence rate of $\epsilon$ as a function of time horizon $T$, given $b=1/3$. Specifically, $\epsilon = O(1/ \sqrt[3]{T})$ with the caveat that omitted constants in the big $O$ depends on the game parameters $N,K, \Delta,\delta$. If we choose $b=1/2$, the convergence rate will be $\epsilon = O(1/ \sqrt{T})$, however its dependence on the game parameters will be worse.

\subsection{Proof of Theorem \ref{thm:main_convergence}}

\paragraph{Overview of the Proof}

The proof of Theorem \ref{thm:main_convergence}  employs induction to prove the convergence of uncoupled learning. We are not aware of similar ideas in previous multi-agent learning setups, and thus present its formal proof in the next section in case it is of independent interest.  
Our proof deviates from standard single-agent online learning analysis in at least two key aspects.  First, since the effective learning rates are dynamically changing in \model{}, the concentration of the estimated $y_i(t)$ needs to be re-evaluated at each round. To do so, we need to construct a sub-martingale for each round $t$ and argues its high-probability concentration. The second major distinction is that our proof needs to bound the \emph{difference} of the estimated reward for any iteratively \emph{dominating} action $i$ and \emph{dominated} action $j$, whereas standard (single-agent) bandit learning only needs to upper bound the reward estimation for each action $i$.  In the latter case, conservative reward estimation (e.g., in Exp3.P \cite{auer2002nonstochastic}) is helpful. However, a conservative reward estimation may make $i$ lose its dominance advantage in our problem.  Fortunately, due to the unbiased reward estimation in \model{}, we can use Azuma's inequality to bound the concentration from both sides. 

Our proof employs an induction argument. In the base case, we demonstrate that \model{} requires polynomially many rounds to $\varepsilon$-essentially eliminate the first set of dominated actions $E_1$. Subsequently, in the induction phases, we provide further evidence that \model{} takes polynomially additional steps, with high probability, to eliminate the next set of iteratively dominated actions. This leads to a polynomial time last-iteration convergence. The key insight driving each induction phase is that the accumulated scores of the remaining iteratively dominated actions can always be upper-bounded by the number of steps taken so far (which constitutes a polynomial term). As a result, correcting such misconceptions by choosing an appropriate decaying rate to discount the history also takes no more than polynomial time. We present the detailed formal argument next.

\textbf{Step 1: Preparations for the Proof. } Since we will be working mostly on agent's actions, we use notation $a$ and $x$ to denote an agent's pure and mixed strategy in this proof. Our proof relies on the following two technical lemmas, the proofs of which are deferred to Appendix \ref{append:converge} so that they will not distract the core arguments. The first lemma (Lemma \ref{le:decompE}) is standard and bounds the difference between the estimated $\tilde{u}_{a}(t)$ and the realized true $u_{a}(t))$ via the Azuma's inequality. The second lemma  (Lemma \ref{coro:gapy}) is more involved and shows that the accumulated weighted rewards for any dominated action $a$ will be far less than the expected accumulated weighted rewards of any mixed strategy $x \in \Delta_{\cA_n} $ that dominates $a$. 
 \begin{lemma}\label{le:decompE}
For any $a\in\cA, T>0$ and any sequence $\{\gamma_t>0\}_{t=1}^T$, with probability $1-\delta$, we have
    \begin{equation}\label{eq:lmdecompE}
    |\sum_{t=1}^T \gamma_t^{(T)}(\tilde{u}_{a}(t)-u_{a}(t))| < 2\left(\sqrt{K(1+\sigma^2)\log\frac{2}{\delta}\cdot \sum_{t=1}^T \frac{(\gamma_t^{(T)})^2}{\epsilon_t}}\right).
\end{equation}
\end{lemma}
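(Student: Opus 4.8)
The plan is to observe that the quantity in \eqref{eq:lmdecompE} is the deviation of a martingale from its mean, and to bound it by a variance-adapted concentration inequality. First I would fix the action $a\in\cA_n$ and the horizon $T$, so that the effective rates $\gamma_t^{(T)}=(t/T)^{\beta}$ are deterministic constants, and set $X_t:=\gamma_t^{(T)}(\tilde{u}_a(t)-u_a(t))$, where $u_a(t):=\mathbb{E}[u_n(a,a_{-n}(t))\mid\mathcal{F}_{t-1}]$ is the expected payoff of action $a$ against the opponents' round-$t$ mixed strategies (which is $\mathcal{F}_{t-1}$-measurable). Because the own action $i_t\sim p(t)$, the opponents' realized profile $a_{-n}(t)$, and the zero-mean noise $\xi_{n,t}$ are conditionally independent given $\mathcal{F}_{t-1}$, one has $\mathbb{E}[\mathbb{I}(i_t=a)\mid\mathcal{F}_{t-1}]=p_a(t)$, which exactly cancels the $1/p_a(t)$ weight, so that $\mathbb{E}[\tilde{u}_a(t)\mid\mathcal{F}_{t-1}]=u_a(t)$ and $\{X_t\}$ is a martingale difference sequence; the target $\sum_{t=1}^T X_t$ is precisely the sum inside the absolute value.

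Second, I would control the per-step conditional second moment, which is where the exploration floor $p_a(t)\ge\epsilon_t/K$ of \model{} enters. A direct calculation gives
\[
\mathbb{E}[\tilde{u}_a(t)^2\mid\mathcal{F}_{t-1}]=\frac{1}{p_a(t)}\,\mathbb{E}\big[(u_n(a,a_{-n}(t))+\xi_{n,t})^2\mid\mathcal{F}_{t-1}\big]\le\frac{1+\sigma^2}{p_a(t)}\le\frac{K(1+\sigma^2)}{\epsilon_t},
\]
where the cross term vanishes since $\xi_{n,t}$ is conditionally zero-mean and independent of $a_{-n}(t)$, while $|u_n|\le 1$ together with $\mathbb{E}[\xi_{n,t}^2\mid\mathcal{F}_{t-1}]\le\sigma^2$ yields the numerator $1+\sigma^2$. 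Summing, the total conditional variance obeys $\sum_{t=1}^T\mathrm{Var}(X_t\mid\mathcal{F}_{t-1})\le K(1+\sigma^2)\sum_{t=1}^T(\gamma_t^{(T)})^2/\epsilon_t=:V$, which is exactly the quantity under the square root in \eqref{eq:lmdecompE}.

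Third, I would convert this variance control into the two-sided tail. The right tool is a variance-adapted martingale inequality (Freedman/Bernstein, or equivalently a conditional sub-Gaussian moment-generating-function estimate): bounding the conditional log-MGF of each $X_t$ by $s^2(\gamma_t^{(T)})^2 K(1+\sigma^2)/\epsilon_t$ on a suitable range of $s$, multiplying over $t$, optimizing $s$, and union-bounding over the two signs at level $\delta/2$ each produces $|\sum_t X_t|<2\sqrt{V\log(2/\delta)}$ with probability $1-\delta$; the factor $2$ and the $\log(2/\delta)$ match the claimed constants once the effective variance proxy is taken as $2V$.

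I expect the third step, not the routine variance computation, to be the crux. The increments $\tilde{u}_a(t)$ are not uniformly bounded: they scale like $1/p_a(t)=O(K/\epsilon_t)$ and additionally carry the only-finite-variance noise $\xi_{n,t}$, so a plain bounded-difference Azuma--Hoeffding argument would replace $V$ by the far larger $\sum_t(\gamma_t^{(T)})^2 K^2/\epsilon_t^2$. The delicate point is therefore to verify that the Bernstein correction term (proportional to the increment magnitude times the deviation) is dominated by the variance term $V$ under the prescribed schedules of $\gamma_t^{(T)}$ and $\epsilon_t$, so that the clean variance-only bound survives; the finite-variance contribution of $\xi_{n,t}$ is absorbed in this same step, either under a light-tail (or boundedness) hypothesis on the noise or via a truncation argument that the finite-variance assumption supports.
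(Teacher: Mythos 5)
Your proposal follows essentially the same route as the paper's proof: the same martingale difference sequence $X_t=\gamma_t^{(T)}(\tilde u_a(t)-u_a(t))$, the same conditional second-moment computation giving $\mathbb{E}[\Xi_t^2\mid\mathcal{F}_{t-1}]\le (\gamma_t^{(T)})^2K(1+\sigma^2)/\epsilon_t$ via the exploration floor $p_a(t)\ge\epsilon_t/K$, and the same variance-adapted two-sided tail $\mathbb{P}[|S_T|\ge x]\le 2\exp(-x^2/(2W))$ with $W=\sum_t(\gamma_t^{(T)})^2K(1+\sigma^2)/\epsilon_t$, solved for $x$ at level $\delta$. The caveat you raise about the third step is apt: the paper invokes this bound under the name ``Azuma's inequality'' with only the conditional variance in the exponent, whereas the increments here are unbounded (of order $K/\epsilon_t$, plus noise assumed only to have finite conditional variance), so a rigorous Freedman/Bernstein application carries an increment-magnitude correction term that the paper does not verify is negligible --- the issue you flag is thus a gap in the paper's own writeup rather than in your plan.
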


 \begin{lemma}\label{coro:gapy}
For any fixed $T>0$, if an action $a \in \cA_n$ is strictly dominated by a mixed strategy $x=(x_1,\cdots,x_{K}) \in \Delta_{\cA_n}$, then 
\begin{align}\label{eq:deltay} 
y_{x}(T+1)-y_{a}(T+1) \geq \sum_{t=1}^T \gamma_t^{(T)} [u_{x}(t)-u_{a}(t)] - 4\left(\sqrt{\log\frac{2K}{\delta}}\cdot\sqrt{K(1+\sigma^2)\sum_{t=1}^T \frac{(\gamma_t^{(T)})^2}{\epsilon_t}}\right)
\end{align}
holds with probability $1-\delta$. In particular, if $\gamma_t = (t/T)^{\beta}, \epsilon_t = t^{-b}$ and $T>\beta$, we have
\begin{align}\label{eq:deltay2} 
y_{x}(T+1)-y_{a}(T+1) \geq \frac{\Delta \cdot T}{1+\beta} - 4\left(\sqrt{\log\frac{2K}{\delta}}\cdot\sqrt{\frac{eK(1+\sigma^2)}{1+2\beta+b}\cdot T^{1+b}}\right).
\end{align}
\end{lemma}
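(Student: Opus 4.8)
The plan is to start from the closed-form score expression in Equation~\eqref{eq:sum_loss}. Writing $y_x(T+1)=\sum_{k=1}^K x_k\,y_k(T+1)$ and $\tilde u_x(t)=\sum_{k=1}^K x_k\,\tilde u_k(t)$ for the mixed-strategy analogues, Equation~\eqref{eq:sum_loss} gives the clean identity $y_x(T+1)-y_a(T+1)=\sum_{t=1}^T \gamma_t^{(T)}\bigl(\tilde u_x(t)-\tilde u_a(t)\bigr)$, where the $\tau=0$ term drops out since $\gamma_0^{(T)}=(0/T)^\beta=0$ for $\beta>0$. Let $\bar u_i(t)=\mathbb{E}[\tilde u_i(t)\mid\mathcal F_{t-1}]$ be the conditional mean of the estimator; by unbiasedness of the importance-weighted estimator, $\bar u_i(t)$ equals the expected payoff of action $i$ against the opponents' round-$t$ strategies. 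I would then split the sum into a \emph{signal} part $\sum_t \gamma_t^{(T)}\bigl(\bar u_x(t)-\bar u_a(t)\bigr)$ and a \emph{noise} part $\sum_t \gamma_t^{(T)}\bigl[(\tilde u_x(t)-\bar u_x(t))-(\tilde u_a(t)-\bar u_a(t))\bigr]$.

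For the signal, I would invoke the domination hypothesis: since $a$ is strictly dominated by $x$ with utility gap at least $\Delta$ against \emph{every} opponent profile $a_{-n}$, averaging over the opponents' conditional play preserves the gap, so $\bar u_x(t)-\bar u_a(t)\ge\Delta$ for every $t$. Because $\gamma_t^{(T)}>0$, this yields $\sum_t \gamma_t^{(T)}\bigl(\bar u_x(t)-\bar u_a(t)\bigr)\ge \Delta\sum_{t=1}^T\gamma_t^{(T)}$, which is exactly the leading term of \eqref{eq:deltay}.

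The noise part is where the two-sidedness stressed in the paper is essential: I must control the estimation error of the dominating strategy $x$ \emph{from below} and that of the dominated action $a$ \emph{from above} at the same time, which is possible only because the estimator is unbiased (a conservative estimator would erase the $\Delta$ advantage in the signal). I would apply Lemma~\ref{le:decompE} to each pure action $k\in[K]$ with confidence $\delta/K$ and take a union bound, so that with probability $1-\delta$ every pure action obeys $\bigl|\sum_t\gamma_t^{(T)}(\tilde u_k(t)-\bar u_k(t))\bigr|<2\sqrt{K(1+\sigma^2)\log\tfrac{2K}{\delta}\sum_t (\gamma_t^{(T)})^2/\epsilon_t}$. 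The dominated action $a$ is one of these, and the mixed strategy's error is controlled by the convex combination $\bigl|\sum_t\gamma_t^{(T)}(\tilde u_x(t)-\bar u_x(t))\bigr|\le\sum_k x_k\bigl|\sum_t\gamma_t^{(T)}(\tilde u_k(t)-\bar u_k(t))\bigr|$, which inherits the same per-action bound since $\sum_k x_k=1$. Adding the two error terms produces the factor of $4$ and establishes \eqref{eq:deltay}.

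Finally, to derive the explicit form \eqref{eq:deltay2}, I would substitute $\gamma_t^{(T)}=(t/T)^\beta$ and $\epsilon_t=t^{-b}$ and compare the resulting power sums with integrals. For the signal, $\sum_{t=1}^T(t/T)^\beta=T^{-\beta}\sum_{t=1}^T t^\beta\ge T^{-\beta}\int_0^T s^\beta\,ds=\tfrac{T}{1+\beta}$ because $t^\beta$ is increasing. For the noise, $\sum_{t=1}^T (\gamma_t^{(T)})^2/\epsilon_t=T^{-2\beta}\sum_{t=1}^T t^{2\beta+b}\le T^{-2\beta}\tfrac{(T+1)^{1+2\beta+b}}{1+2\beta+b}=\tfrac{T^{1+b}}{1+2\beta+b}\bigl(1+\tfrac1T\bigr)^{1+2\beta+b}$, and the last factor is bounded by the universal constant $e$ once $T$ exceeds a constant multiple of $\beta$ (via $(1+1/T)^T\le e$), which is the regime in which \eqref{eq:deltay2} is asserted; plugging both estimates back gives \eqref{eq:deltay2}. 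The main obstacle is the noise step: obtaining a genuinely two-sided, high-probability bound with the correct $\log\tfrac{2K}{\delta}$ dependence requires the unbiasedness of the estimator together with a union bound tight enough to survive the $1/\epsilon_t=t^{b}$ variance inflation appearing inside the square root.
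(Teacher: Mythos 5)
Your proposal is correct and follows essentially the same route as the paper: the same decomposition of $y_x(T+1)-y_a(T+1)$ into a dominance-gap signal term bounded below by $\Delta\sum_t\gamma_t^{(T)}$ and two estimator-error terms, each controlled by Lemma~\ref{le:decompE} with a union bound over the $K$ pure actions at confidence $\delta/K$ (the mixed strategy's error handled as a convex combination), yielding the factor of $4$. The integral comparisons for the explicit form, including the $(1+1/T)^{1+2\beta+b}\le e$ step, also match the paper's computation.
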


Armed with the above two lemmas, the proof of Theorem \ref{thm:main_convergence}  follows a carefully tailored induction argument.  


 \noindent {\bf Step 2: Proof of the Base Case. }  First, we prove the elimination of all actions in $E_1$ in the first iteration.  For any dominated action pair $x \succ a$ (i.e., $x$ dominates $a$), we conclude from Lemma \ref{coro:gapy} that with probability $1-\delta$, the probability of playing $a$ satisfies
\begin{align} \notag
p_{a}(t) 
& = \frac{\epsilon_t}{K} +\frac{\exp(y_{a}(t))}{\sum_{a\in \cA_n} \exp(y_{a'}(t))}(1-\epsilon_t)\\ \notag
& \leq \frac{t^{-b}}{K} +\frac{\exp(y_{a}(t))}{\sum_{a'\in\cA_n}x_{a'} \exp( y_{a'}(t))} & \mbox{since $x_{a'} \in [0,1]$ } \\  \notag
& \leq \frac{t^{-b}}{K} +\frac{\exp(y_{a}(t))}{\exp(\sum_{a'\in\cA_n}x_{a'} y_{a'}(t))}  & \mbox{by convexity of $exp(\cdot)$ }\\  \notag
& =\frac{t^{-b}}{K} +\frac{\exp(y_{a}(t))}{\exp(y_{x}(t))} \\ \notag 
& < \frac{t^{-b}}{K} +  \exp\left( 4 \sqrt{\frac{eK(1+\sigma^2)\log\frac{2K}{\delta}}{1+2\beta+b}}\cdot t^{\frac{1+b}{2}} -\frac{\Delta t}{1+\beta} \right) &\mbox{by Eq \eqref{eq:deltay2}} \\  \label{eq:302}
& < \frac{t^{-b}}{K} +  \exp\left(4\left(\sqrt{\frac{eK(1+\sigma^2)}{1+2\beta+b}}\right)\log^{\frac{1}{2}}\frac{2K}{\delta}\cdot t^{\frac{1+b}{2}} -\frac{\Delta t}{16(1+\beta)} \right) \\ \label{eq:133}
& < \frac{\min\{\varepsilon, \Delta/2\}}{4KN},
\end{align}
where 
the last Inequality \eqref{eq:133} holds for any $t\geq T_1$ by the definition of $T_1$. Therefore, from the union bound we conclude that with probability at least $1-|E_1|(T_1+s+1)\delta>1-2|E_1|(T_1+s)^2\delta$, actions in $E_1$ will be $\varepsilon'$-essentially eliminated at each round in $\{T_1, \cdots, T_1+s\}$, where $\varepsilon'=\min\{\varepsilon,\Delta/2\}$.

\noindent  {\bf Step 3: Proof of the Induction Step. }  We now move to the more involved induction step. Assume there exists $T_k$ such that for any $s>0$, actions in $E_k$ are $\varepsilon'$-essentially eliminated during $t\in(T_k, T_k+s]$ with probability $1-2|E_k|(T_k+s)^2\delta$, i.e.,
$$\mathbb{P}\Big[p_a(t) \leq \frac{\varepsilon'}{4KN},\forall a\in E_k, \forall T_k<t\leq T_k+s\Big]\geq 1-2|E_k|(T_k+s)^2\delta,\forall s>0.$$
We investigate how many additional iterations we need to $\varepsilon'$-essentially eliminate actions in $E_{k+1}$. In particular, we will show that if $T_{k+1}$ satisfies Eq \eqref{eq:eq4Tk}, then 
$$\mathbb{P}\Big[p_a(t) \leq \frac{\varepsilon'}{4KN},\forall a\in E_{k+1}, \forall T_{k+1}<t\leq T_{k+1}+s\Big]\geq 1-2|E_{k+1}|(T_{k+1}+s)^2\delta,\forall s>0,$$
which then complete our proof of the theorem.

By the definition of $E_{k}$, if no agents play any action in $E_{k}$, then  for any action $a\in E_{k+1}\setminus E_{k} $ of agent $i$ there must exist a mixed strategy $x$ of agent $i$ that dominates $a$. Assuming \model{} has run $T_k+T_0$ steps, we derive a sufficient condition for $T_0$ such that \model{} can $\varepsilon$-essentially eliminate actions in $E_{k+1}$ starting for any $t >T_k+T_0$. In particular, we show that for sufficiently large $T_0$ and any $s>0$, with probability at least $1-2|E_{k+1}|(T_k+T_0+s)^2\delta$, actions in $E_{k+1}$ are $\varepsilon'$-essentially eliminated from steps $T_k$ to $T_k+T_0+s$.

First of all, for any $a \in E_{k+1} \setminus E_k $ that is iteratively dominated by some $x$ and any $t \in (T_k,  T_k+T_0+s]$, from Lemma \ref{coro:gapy} we conclude with probability $1-(|E_{k+1}|-|E_{k}|)(T_0+s)\delta$, it holds that
\begin{align}\label{eq:gap_estm1} \notag 
& y_{x}(t+1)-y_{a}(t+1) \\ \notag 
 \geq &   \sum_{s=1}^t \gamma_s^{(t)}[u_{x}(s)-u_{a}(s)] -   4\left(\sqrt{\log\frac{2K}{\delta}}\cdot\sqrt{K(1+\sigma^2)\sum_{s=1}^t \frac{(\gamma_s^{(t)})^2}{\epsilon_s}}\right)  \\
 \geq &   \sum_{s=T_k+1}^t \gamma_s^{(t)}[u_{x}(s)-u_{a}(s)] - 2\sum_{s=1}^{T_k} \gamma_s^{(t)}  - 4\left(\sqrt{\log\frac{2K}{\delta}}\cdot\sqrt{K(1+\sigma^2)\sum_{s=1}^t \frac{(\gamma_s^{(t)})^2}{\epsilon_s}}\right),
 \end{align}
 where the last inequality used the bound $u_{x}(s)-u_{a}(s) \geq -2$ for any $s$ due to bounded utilities.

In order to further lower bound the RHS of Eq \eqref{eq:gap_estm1}, we need the following lemma:

\begin{lemma}\label{lm:learning_rate_property}
    If $T_0$ satisfies that
        \begin{equation}\label{eq:T0condition2}
        T_0\geq\max\Big\{\frac{(1+\beta)^2(4+\Delta)^2(8+\Delta)^2 }{4(1+2\beta)\Delta^2}\log \frac{1}{\delta}, \Big[(1+\frac{16}{\Delta})^{\frac{1}{1+\beta}}-1\Big]\cdot T_k \Big\},
    \end{equation} 
    then for any $s\geq 0, T'= T_0+s$ and $T=T'+T_k$, we have for any $a \in E_{k+1} \setminus E_k $, with probability at least $1-(|E_{k+1}| - |E_k|)T'\delta$,
        \begin{equation}\label{eq:lb_sumgammat} 
            \sum_{t=T_k+1}^{T_k+T'} \gamma_t^{(T)} [u_{x}(t)-u_{a}(t)] > \frac{\Delta}{4} \sum_{t=T_k+1}^{T_k+T'} \gamma_t^{(T)}.
        \end{equation}

\end{lemma}

The intuition behind Eq \eqref{eq:lb_sumgammat} is, since actions in $E_k$ are rarely played during $t\in(T_k, T_k+T_0+s]$, $u_x(t)-u_{a}(t)\geq \Delta$ must hold with high probability in this entire period. A standard technique to prove this is to use a union bound. However, this idea does not work here as it requires the undesirable event (i.e., $x_{-i}$ still contains certain essentially eliminated actions at a specific round) to happen with an extremely small probability $O(\frac{1}{T_k + T_0})$, which we cannot afford unless with an exponentially large $T_k$. To overcome this challenge, we take a different route and construct a sub-martingale regarding the utilities. The detailed proof of Lemma \ref{lm:learning_rate_property} is technical and we defer it to Appendix \ref{append:converge}. 

Now let $T_{k+1}=T_k+T_0$. The probability of Eq \eqref{eq:lb_sumgammat} to hold for any $T'\in[T_0, T_0+s]$ is thus at least $(1-(|E_{k+1}| - |E_k|)\delta\sum_{t=T_0}^{T_0+s}t )$. Substitute Eq \eqref{eq:lb_sumgammat} into Eq \eqref{eq:gap_estm1} and apply a union bound, we conclude that with a probability at least 
\begin{align*}
&1-2|E_k|(T_k+T_0+s)^2\delta-(|E_{k+1}|-|E_{k}|)(T_0+s)\delta-(|E_{k+1}| - |E_k|)\delta\sum_{t=T_0}^{T_0+s}t \\\geq & 1-2|E_{k+1}|(T_k+T_0+s)^2\delta=1-2|E_{k+1}|(T_{k+1}+s)^2\delta,
\end{align*}
the following inequality holds for any $T\in[T_{k+1}, T_{k+1}+s]$ (we omit the superscript $(T)$ on $\gamma_t$ in the following derivations):

\begin{align}
\notag
&y_{x}(T+1)-y_{a}(T+1) \\
&\geq \sum_{t=T_k+1}^{T} \gamma_t [u_{x}(t)-u_{a}(t)]- 2\sum_{t=1}^{T_k} \gamma_t - 4\left(\sqrt{\log\frac{2K}{\delta}}\cdot\sqrt{K(1+\sigma^2)\sum_{t=1}^T \frac{\gamma_t^2}{\epsilon_t}}\right) \\  \notag
&\geq  \left(\frac{\Delta}{8} \sum_{t=T_k+1}^{T} \gamma_t - 2\sum_{t=1}^{T_k} \gamma_t\right) + \left(\frac{\Delta}{8} \sum_{t=T_k+1}^{T}\gamma_t  - 4\left(\sqrt{\log\frac{2K}{\delta}}\cdot\sqrt{K(1+\sigma^2)\sum_{t=1}^T \frac{\gamma_t^2}{\epsilon_t}}\right)\right) \\ \label{eq:boundT1}
& \geq 0 + \left(\frac{\Delta}{16} \sum_{t=T_k+1}^{T}\gamma_t + \frac{1}{2} \sum_{t=1}^{T_k} \gamma_t  - 4\left(\sqrt{\log\frac{2K}{\delta}}\cdot\sqrt{K(1+\sigma^2)\sum_{t=1}^T \frac{\gamma_t^2}{\epsilon_t}}\right)\right) \\  \label{eq:boundT12}
& > \frac{\Delta}{16} \sum_{t=1}^{T}\gamma_t - 4\left(\sqrt{\log\frac{2K}{\delta}}\cdot\sqrt{K(1+\sigma^2)\sum_{t=1}^T \frac{\gamma_t^2}{\epsilon_t}}\right) \\ \label{eq:boundT13}
& > \frac{\Delta t}{16(1+\beta)}-4\sqrt{\left(\frac{eK(1+\sigma^2)}{1+2\beta+b}\right)\log\frac{2K}{\delta}}\cdot t^{\frac{1+b}{2}} ,
\end{align}
where Eq \eqref{eq:boundT1} holds because it is straightforward to verify $\frac{\Delta}{16} \sum_{t=T_k+1}^{T} \gamma_t^{(T)} - \sum_{t=1}^{T_k} \gamma_t^{(T)} \geq 0$  when $T_0\geq \left[(1+\frac{16}{\Delta})^{\frac{1}{1+\beta}}-1\right]\cdot T_k$, Eq \eqref{eq:boundT12} holds because $1 \geq \Delta > \frac{\Delta}{8}$, and Eq \eqref{eq:boundT13} holds because of Lemma \ref{coro:gapy}. Therefore, from Eq \eqref{eq:302} and Eq \eqref{eq:133} and the fact $T_k>T_1$ we conclude that all the actions in $E_{k+1}$ will be $\varepsilon$-essentially eliminated during round $t\in(T_{k+1}, T_{k+1}+s]$ with probability $1-2|E_{k+1}|(T_{k+1}+s)^2\delta$ as long as $T_0$ satisfies Eq \eqref{eq:T0condition2}. By induction, we complete the proof.

\section{Empirical Evaluations}\label{append:exp}


\paragraph{Baselines} We compare \model{} with a rich collection of basic or state-of-the-art online learning algorithms listed below: 
\begin{enumerate}[label=(\alph*)]
    \item   The classical Exp3 algorithm; 
    \item  Exp3.P which has no regret with high probability \cite{auer2002nonstochastic}; 
\item OMWU is the optimistic variant of MWU to obtain faster convergence to coarse correlated equilibria in games~\cite{daskalakis2021near};
\item  BM-OMWU applies the black-box reduction technique of Blum and Mansour (BM)~\cite{blum2005external} onto OMWU, which guarantees faster convergence to $\epsilon-$CE~\cite{anagnostides2021near};\footnote{The original algorithms in the paper are designed for full information feedback setting. To ensure fair comparison, we modify the OMWU and BM-OMWU to use the standard bandit estimator in EXP3. }
\item  The online mirror descent algorithm with log barrier regularizer, OMD-LB~\cite{foster2016learning}, which also uses increasing learning rate schedule \cite{lee2020bias,agarwal2017corralling,bubeck2017kernel}.
\end{enumerate}  We let all learning agents follow the same type of learning algorithm, i.e., self-play, in games described below and compare their convergence trend. 

\paragraph{Metrics} To measure the progress of iterative dominance elimination, we define the notion of \emph{elimination distance}~(ED) for any action $i$, $\Lambda(i)$, as the number of elimination iterations needed before this action start to be eliminated. Formally, $\Lambda(i) \equiv \arg\max_{0 \leq l \leq L_0}\{ i \not \in E_l \}$, where $E_0 = \varnothing$. 
The elimination distance of any undominated action is $L_0$, the elimination length (see Definition~\ref{def:length}).   
We let $\sum_{i\in \cA_n}x_n(i)\frac{\Lambda(i)}{L_0}$ be the normalized ED of mixed strategy $x_n$ of any agent~$n$.
We then introduce the \emph{Progress of Elimination} (PoE) metric, as the normalized elimination distance aggregated over all agents in the game  at  round $t$, 
$$
\textup{PoE}(t) = \frac{1}{N}\sum_{n\in \cN} \sum_{i\in \cA_n} p_{n,i}(t)\cdot \frac{\Lambda(i) }{ L_0 } \in [0, 1].
$$
Therefore, the larger PoE is, the more actions the learning agents have eliminated. When PoE reaches 1, the learning agents have removed all dominated actions and converged to the desirable set of rationalizable actions.

\paragraph{\gameshort{} game} To illustrate our theoretical results, we conduct a set of experiments in the \gameshort{} game, where all agents following the same type of learning algorithm. For DIR game, we use $ b=0.2, \beta=2K \approx L_0$ as the parameter of \model{}.
In Figure \ref{fig:exp2}, we can clearly observe that \model{} exhibits a superior performance in both cases, and enjoys a greater advantage in a larger game instance, where the other baselines even struggle to eliminate the first few dominated actions. In addition, OMD-LB with increasing learning rate also displays relatively good performance especially in the larger and harder instance, compare to other learning algorithms with non-increasing learning rate. But in the harder instance of DIR game with just $20$ actions on the right, all the baseline algorithms can only do elimination for 5 or 6 iterations out of $2K - 2 = 38$ iterations.

\begin{figure}[t]
    \centering
    \includegraphics[width=0.48\textwidth]{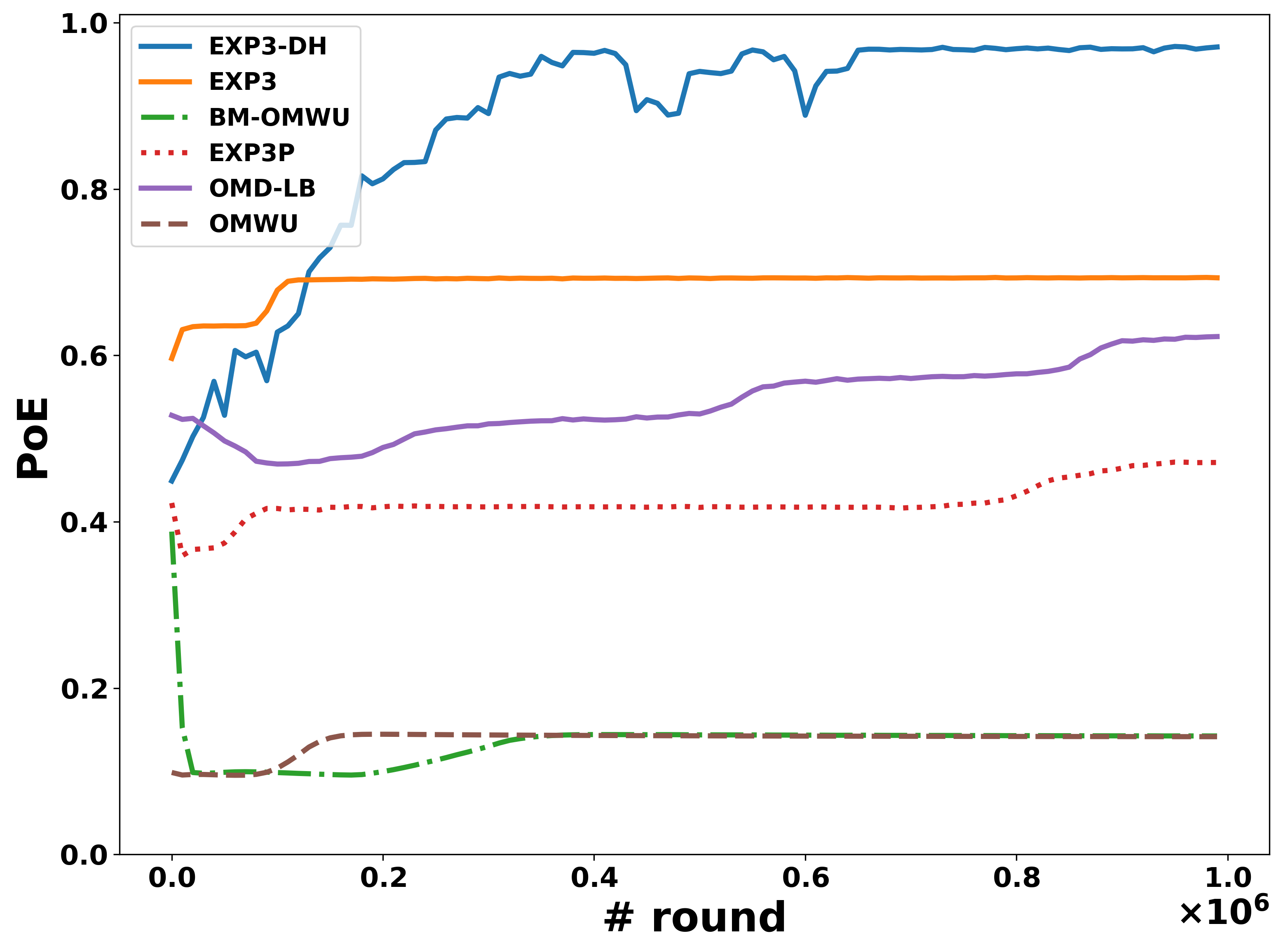}
    \includegraphics[width=0.48\textwidth]{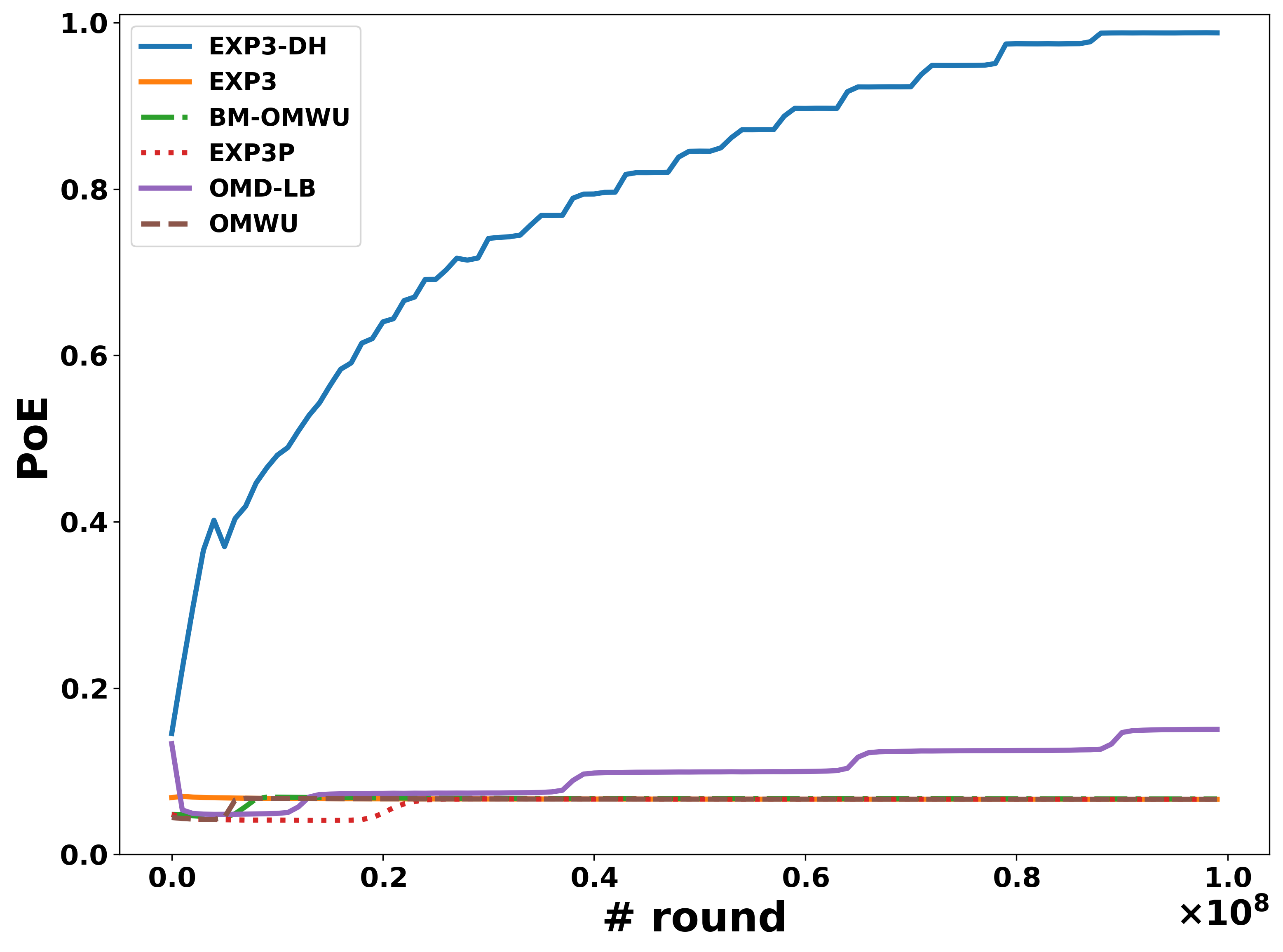}
    \caption{ Progress of Elimination (PoE) in a smaller  \textbf{ \gameshort$(10,20)$ game (left)} over $T=10^6$ rounds  and  a larger \textbf{\gameshort$(20,40)$ game (right)} over $T=10^8$ rounds.  
    In both games, i.i.d. Gaussian noise with std. $0.1$ is added onto agents' payoffs. 
   The performance of \model{} is represented by blue solid line while five baseline algorithms are represented by other notations shown in the legend. 
}
    \label{fig:exp2}
\end{figure}

\paragraph{The Market for ``Lemons'' (see Section \ref{sec:prelim:example})}  We also examine the algorithm performance on this famous example of the adverse selection problem by \citet{akerlof1978market}.  
Our numerical experiments aim at testing how fast the market will collapse, if every seller $i$ have noisy and bandit perception $\tilde{q}_i$ of their car's true quality.  In   our experiment instances, we set $c_1=3, c_2=1.5$, i.i.d. noise $\epsilon\sim \mathcal{N}(0, 5)$. Respectively, we choose $N=K-1=50$ or $200$ and let each $q_i = N/2+i$, $\mathcal{P} = \{N/2, \cdots, 3N/2\}$. We let the buyer and sellers apply no-regret learning algorithm to learn the equilibrium price and listing decisions from only the noisy bandit feedback in a repeated game. For \model{}, we set $ b=0.5, \beta \approx L_0$, since $N$ is of the same order with $K$ in this game. In Figure \ref{fig:exp}, as predicted by our theoretical results, with all learning agents running \model{} algorithm, the convergence can be polynomial, whereas the convergences from existing no-regret learning algorithms are comparatively slow. 

\begin{figure}[t]
    \centering
    \includegraphics[width=0.46\textwidth]{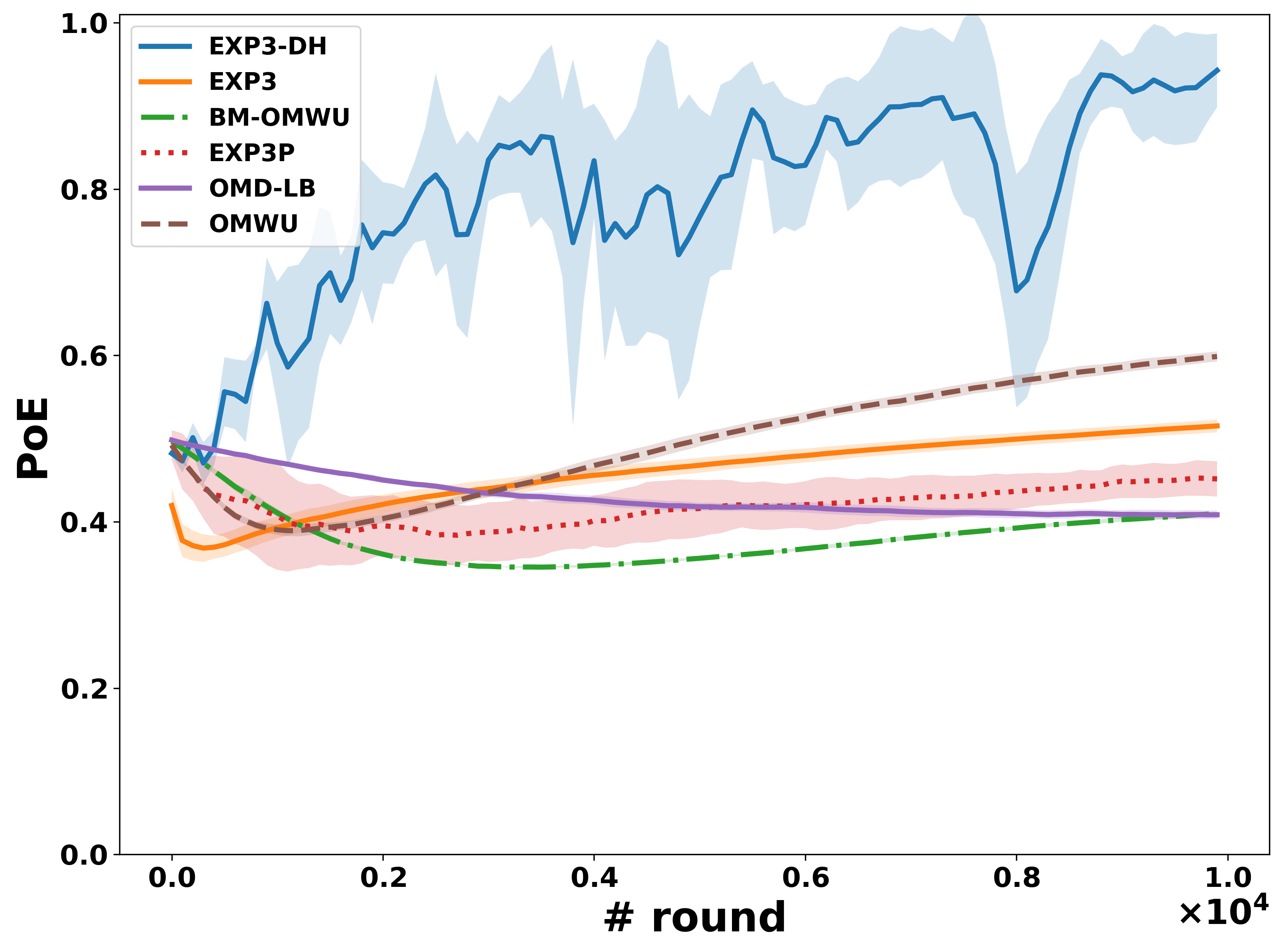}
    \qquad
    \includegraphics[width=0.46\textwidth]{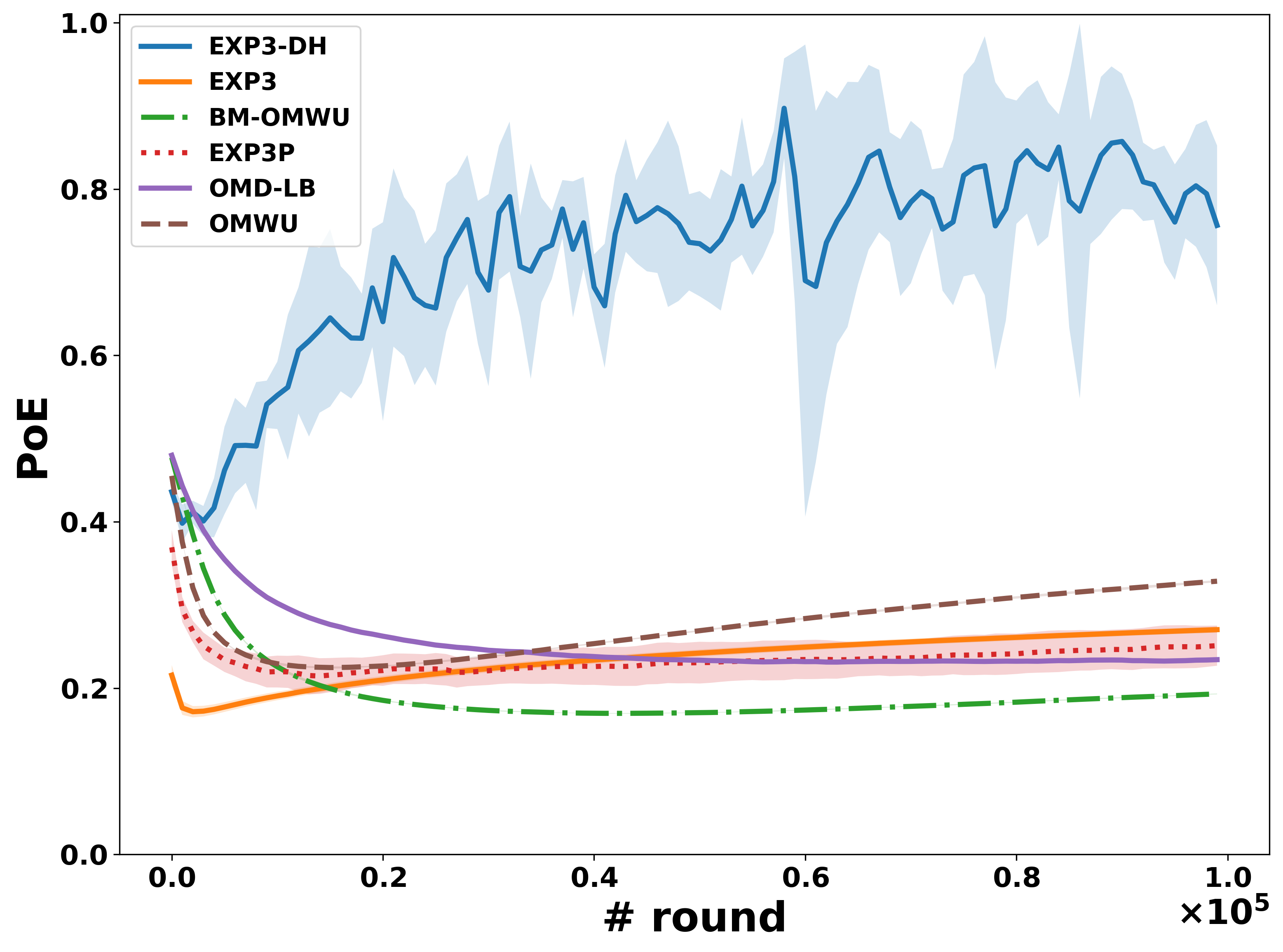}
    \caption{ Progress of Elimination (PoE) in \textbf{ The Market for ``Lemons'' } with 50 sellers (Left) or 200 sellers (Right).
    In this game, i.i.d. Gaussian noise with std. $0.1$ is added onto agents' payoffs. The lightly shaded region displays the error bar of each convergence trend (by one standard deviation over 5 runs).}
    \label{fig:exp}
\end{figure}

 
\paragraph{Performance in Presence of Adversarial Agents} \label{sec:regret}
We would also like to investigate the robustness of \model{} beyond the multi-agent learning setup. We conduct test the learning algorithms in two setups: one is to interface with the adversarial opponent in the \gameshort{} game that randomly plays a fixed action for every $1000$ rounds, the other is to run in the non-stationary environment where the arms' reward distributions change every $1000$ rounds. Such a switching frequency of $1000$ is particularly chosen for the adversarial purpose such that the learning algorithms are not best responding to a completely random reward distribution nor adapting to a periodic distribution change.

We plot the average regret incurred different algorithms in these two setups in Figure \ref{fig:exp-regret}.
To showcase the robustness of \model{} over the iterative best response approach by \cite{wang2022learning}, we also implement the $\epsilon$-greedy (EPSGreedy) algorithm and the explore-then-commit (ETC) algorithm that both periodically explores and commit to the best arm.\footnote{The exact algorithm in \cite{wang2022learning} is specified for a ``centralized'' learning setup (i.e., asking one agent to explore while the remaining agents follow the same action profile) and cannot be directly used for this experiment under the uncoupled learning setup. The $\epsilon$-greedy and ETC algorithms are arguably the closest variants of their iterative best response approach.} As demonstrated by the empirical results in Figure \ref{fig:exp-regret}, both ETC and EPSGreedy suffer in these adversarial environments, limiting their applications to well-specified multi-agent learning setups --- almost linear mean regret with large variance in both setups.  In contrast, the proposed \model{} shows strong performance in both settings, matching or even outperforming the Exp3 algorithm, which has the provable no-regret guarantee in adversarial settings. That said, we are still able to construct special instances where even \model{} suffers linear regret, and this points us towards an important open direction to design algorithms with better robustness guarantee in adversarial settings. 
  

\begin{figure}[t]
    \centering
    \includegraphics[width=0.46\textwidth]{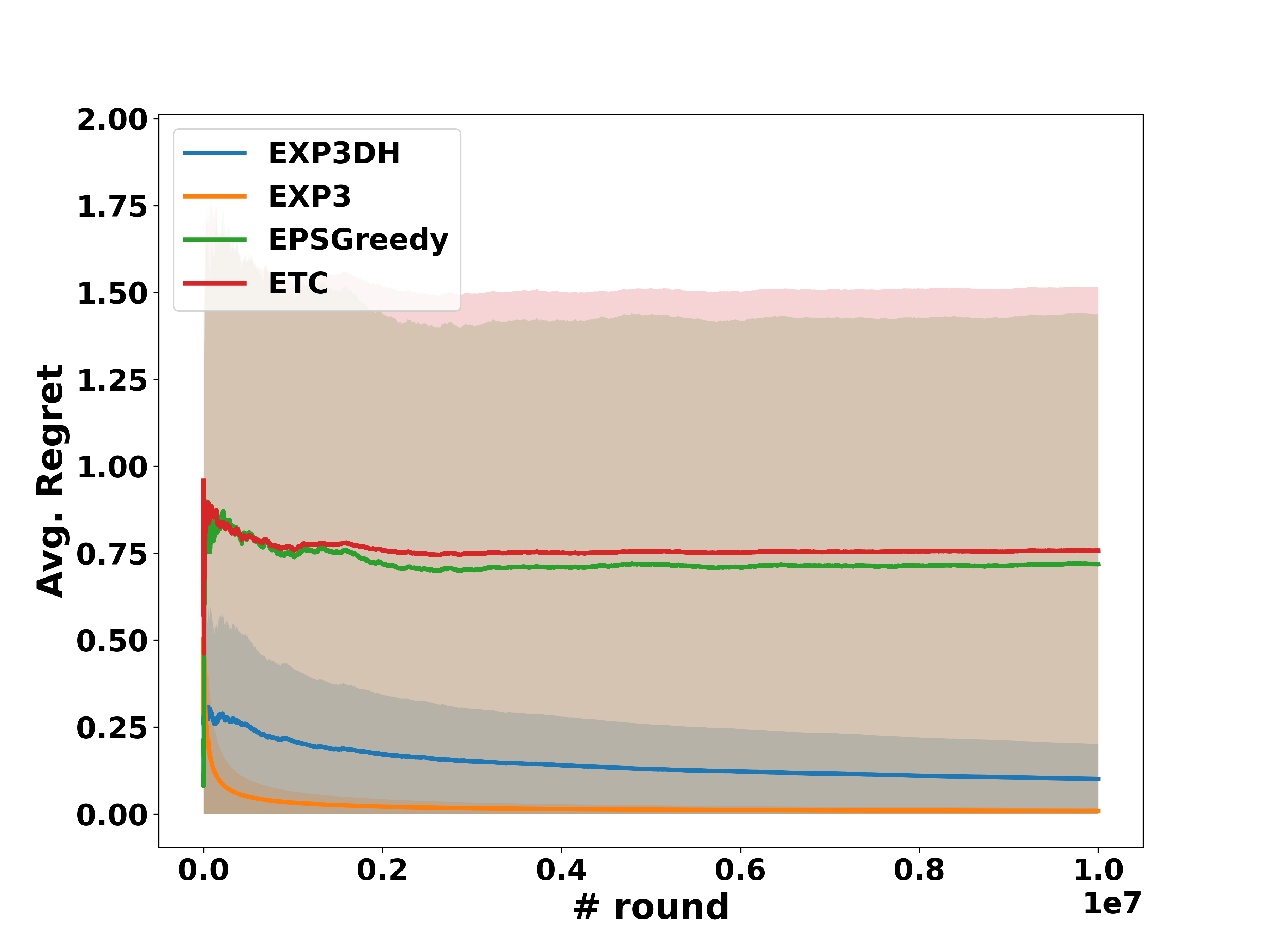}
    \qquad
    \includegraphics[width=0.46\textwidth]{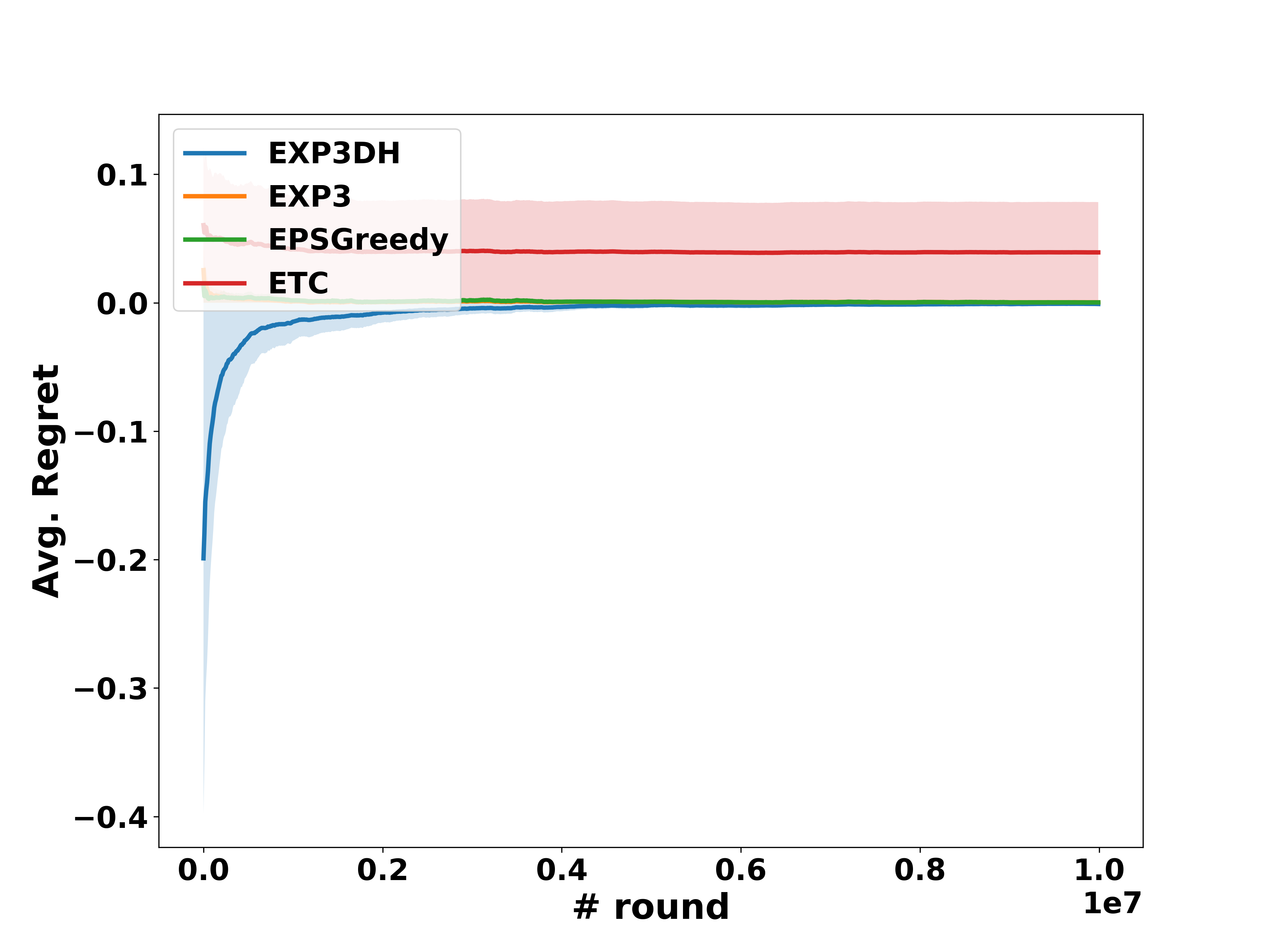}
    \caption{The history of average regret facing adversarial opponent in \textbf{\gameshort$(10,20)$} game (Left), in non-stationary environment (Right).
    In these games, i.i.d. Gaussian noise with std. $0.1$ is added onto agents' payoffs. The lightly shaded region displays the error bar of each convergence trend (by one standard deviation over 10 runs).}
    \label{fig:exp-regret}
\end{figure}

\section{Conclusion}
Our study formalizes the price of ``over-hedging'' of standard no-regret learning algorithms in the process of iterated dominance elimination. Such price is especially expensive in games whose equilibria are hidden in the ``rough''. In order to overcome this pitfall, we design a diminishing-history mechanism that deliberately balance the exploitation of the existing knowledge and indifference to history. However, there are still several open questions that remain for future works. 

 One direction is to understand the lower bound of convergence rate and whether there exists uncoupled learning algorithms with provably faster convergence guarantee.  In addition, \citet{wang2022learning} propose a learning algorithm with better convergence rate yet relying on agent's coordination, it is also interesting to understand the trade-off between the communication bandwidth and convergence rate. 
Moreover, despite we have proved the learning barrier of the \algoclass{} algorithm, it is unclear whether online mirror descent (OMD), another family of no regret learning algorithm, suffers the similar learning barrier.\footnote{We conjecture the answer is negative, at least for a subset of regularizers within OMD that we have already excluded: note that the DA class and OMD class overlap as EW can be interpreted as an OMD method with an entropic regularizer. Additionally, in Appendix~\ref{append:exp}, we empirically show that OMD with log-barrier regularizer \cite{foster2016learning} also suffers from exponentially slow convergence. } 
Finally, another important direction is to design the ``best of both worlds'' algorithm such that the rate is optimal both in regret and in last-iterate convergence to equilibrium. This raises a fundamental question on the existence of an intrinsic gap between the single-agent no-regret learning objective and the  multi-agent equilibrium learning.


\setcitestyle{numbers} 
\bibliography{ref}

\appendix

\newpage
 
\section{Supermodular Games}\label{append:supermodular}
The formal definition of supermodular games is based on several basic concepts from the lattice theory: For a partial order set $\L$ with operator $\geq$, we define the ``join'' operator $x\vee y$  for every $x,y\in\L$, as the least upper bound of $x, y$, i.e., $(x\vee y) \geq x$, $(x\vee y) \geq y$ and $z \geq (x\vee y)$ for all $z\in \L, z \geq x$ and $z \geq y$. Similarly, the ``meet'' operator $x\wedge y$ is defined for every $x,y\in\L$, as the greatest lower bound of $x, y$, i.e., $x \geq (x\wedge y)$, $y \geq (x\wedge y) $ and $(x\vee y) \geq z $ for all $z\in \L, x \geq z$ and $y \geq z$. 
$\L$ is a lattice if for every $x,y\in \L$, $(x\wedge y), (x\vee y)\in \L$.
We say $\L$ forms a complete lattice if every subset $\L' \subseteq \L$ forms a lattice.
Therefore, any compact subset of $\RR$ (with the usual order) is a complete lattice, as is any subset in $\RR^d$ formed by the product of $d$ compact sets in $\RR$ (with the product order). 

In addition, we say a function $f: \L \to \RR$ is supermodular if for all $x,y \in \L$, it holds   $
    f(x \wedge y) + f(x \vee y) \geq f(x) + f(y).
    $
A function $f: \L_1 \times \L_2 \to \RR$ has increasing difference if for all $x' \geq x \in \L_1, y' \geq y \in \L_2$, it holds  $
    f(x', y') - f(x, y') \geq f(x', y) - f(x, y). 
    $
Now we can formally define supermodular games as follows.

\begin{definition}[Supermodular Games]\label{def:supermodular}
A strategic game $\cG(\cN, \{\A_n\}_{n=1}^N, \{u_n\}_{n=1}^N)$ is supermodular if for each $n\in \cN$,
\begin{enumerate}
    \item Each $\A_n$ is a complete lattice.
    \item $u_n(a_n, a_{-n})$ is upper semi-continuous in $a_n$ for each fixed $a_{-n}$, and it is continuous in $a_{-n}$ for each fixed $a_{n}$, and has a finite upper bound.
    \item $u_n(a_n, a_{-n})$ is supermodular in $a_n$ for each fixed $a_{-n}$.
    \item $u_n(a_n, a_{-n})$ has \emph{increasing difference} in $a_n$ and $a_{-n}$.
\end{enumerate}
\end{definition}
The first condition is trivial when the $\cA_n$ is a totally-ordered set (e.g., a compact set of real numbers). The second condition is trivial when the $\cA$ is discrete. The third condition is trivial when the $\cA_n$ is single-dimensional. Hence, for the finite normal-form game of interest in this paper, we only need to focus on the condition of increasing difference. 

Moreover, as pointed out by \cite{Milgrom1990rationalizability, etessami2019tarski}, the structural and algorithmic properties of supermodular game is preserved in a broader class of \emph{games with strategic complementarities}, which relaxes the third and fourth condition to depend only on ordinal information on the utility functions, i.e. how the utilities compare to each other rather than their precise numerical values. In particular, the third condition can be relaxed to quasi-supermodularity, where a function $f: \L \to \RR$ is quasi-supermodular if for all $x,y \in \L$, it holds  $f(x) \geq f(x \wedge y)  \implies f(x \vee y) \geq f(y).$ The fourth condition can be relaxed to the single-crossing condition, where a function $f: \L_1 \times \L_2 \to \RR$ is single-crossing if for all $x' \geq x \in \L_1, y' \geq y \in \L_2$, it holds  $
    f(x', y) \geq f(x, y) \implies f(x', y')  \geq f(x', y).$

\section{Omitted Proofs in Section \ref{sec:barr} }\label{append:barr}
\subsection{Proof of Proposition \ref{prop:diamprop}}

 
\paragraph{Proof of the first claim. } This follows an induction argument. First, it is easy to see that A's action 1 is dominated by her action 2 since $u_A(1,j) = 1/\rho < 2/\rho = u_A(2,j)$ for all B's action $j\in[K]$. Second, we claim no other pure action of A or B can be eliminated by any pure or mixed strategy. This is due to two reasons: 1. for any $i>1$, there exists $j=i-1$ such that $ u_1(i,j) = \frac{i}{\rho} \geq   \max_{i'} u_1(i',j) = \frac{i}{\rho} \geq u_1(x,j)$ for any $x\in \cX_{A}$; 2. for any $j\geq 1$, there exists $i=j$ such that $ u_2(i,j) = \frac{j}{\rho} \geq   \max_{j'} u_2(i,j') = \frac{j}{\rho} \geq u_2(i,x)$ for any  $x\in \cX_{B}$.

Now suppose that for some $i\geq 1$, A has eliminated action $1,\cdots, i$ and B has eliminated action $1,\cdots, i-1$. We claim that B's action $i$ will be the only iteratively dominated action currently. Specifically, it is now iteratively dominated by her action $i+1$ because $u_B(j,i) = i/\rho < (i+1)/\rho = u_B(j,i+1)$ for all $j > i$ (note that A has eliminated any action $j\leq i$). Therefore, B will next eliminate action $i$. On the other hand, it is easy to verify that any action $i' \geq i+1$ of B is not dominated currently by any non-eliminated mixed or pure strategy since whenever A plays $i'$, B's action $i'$ yields the largest utility for her. Specifically, we have $u_B(i', i') = i'/\rho > u_B(i', j)$, which is $j/\rho$ when $j < i'$ and $-c/\rho$ when  $j > i'$. Therefore, any $i' \geq i+1$ cannot be a iteratively dominated action of B. Similarly, one can also verify that A's any action $i' \geq i+1$ currently is not a iteratively dominated strategy neither. So B's action $i$ is the only iteratively dominated action. 

Similar to the analysis above, one can show that after agent B's action $i$ is eliminated, agent A's action $i+1$ will become the only iteratively dominated action. By induction, we know the iterative elimination procedure will eliminate one action at each iteration until we reach the last action profile $(K,K)$. So the elimination length is exactly $2K-2$. 

 \paragraph{Proof of the second claim. } We show a stronger result: any finite dominance solvable game has a unique correlated equilibrium (CE). This   also implies the uniqueness of Nash equilibrium (NE) since any NE must be a CE as well. The above conclusion should be standard. However, since we are unable to find any existing proof for the reference, we here include a formal argument for the readers with little background in game theory for completeness of the paper. 
 
 Given any game $\cG$ with elimination length $L_0$ and elimination sets $E_1 \subset E_2 \cdots \subset E_{L_0}$, we claim that any CE must assign zero probability on actions in $E_{L_0}$. We complete the proof by contradiction. Suppose it is not the case; there must exist an action in $E_{L_0}$ that has positive probability in some CE. Let $l$ be the smallest index such that $E_l$ contains an action $a_n$ for agent $n$ in this CE. By our choice of $l$, all actions in $E_{l-1}$ have zero probability in the CE. This, however, implies that $a_n$ must be iteratively dominated by some other action $a'_n$ in this CE, which contradicts the definition of CE since whenever action $a_n$ is recommended to agent-$n$, he would strictly prefer to deviating to action $a'_n$. Therefore, any action in $E_{L_0}$ must have zero probability in any CE. This implies that there is only one unique CE in any finite dominance solvable game, which is also the unique NE. 
 

Finally, it is easy to see that $\frac{K}{\rho}$ is the largest possible utility in the payoff matrices. Therefore, at this equilibrium $(K,K)$, both agents achieves the maximum possible utility $\frac{K}{\rho}$, and the game obtains the maximum social welfare $\frac{2K}{\rho}$.

\subsection{Proof of Theorem \ref{thm:diamwelfare}}

\begin{proof}
We explicitly construct a mixed strategy for the two agents and prove that it constitutes an $\epsilon$-correlated-equilibrium and moreover satisfies the claimed properties in the stated theorem. Such construction is possible due to the special structure of the DIR games. Our construction is divided into two main steps: (1) constructing the support of the CE; (2) constructing the concrete probabilities. 

\noindent {\bf Support of the constructed CE. } Specifically, consider a distribution $\pi$ over the joint action space $[K] \times [K]$ with support in the following format  
$$\pi = \begin{bmatrix} 
\delta_1 & 0 & 0 & 0 & 0  \\
\delta_2 & \delta_3 & 0 & 0 & 0  \\
0 & \delta_4 & \delta_5 & 0 & 0  \\
0 & 0  & \cdots  &  \cdots &  0 \\
0 & 0 & 0 & \delta_{2K-2} & \delta_{2K-1}\\
\end{bmatrix}.$$
We claim that $\pi$ will be an $\epsilon$-CE if its  $\{\delta_i\}_{i=1}^{2K-1}$ satisfy the following inequality system:
\begin{equation}\label{eq:ce-constraints}
\begin{cases}
     \delta_1 \cdot 1 \leq \epsilon \cdot \rho \\ 
     \delta_2\cdot (-c-2)+\delta_3 \cdot 1 \leq \epsilon \cdot \rho \\  
     \delta_4\cdot (-c-3)+\delta_5 \cdot 1 \leq \epsilon \cdot \rho \\  
     \qquad \vdots \\
     \delta_{2K-4}\cdot (-c-(K-1))+\delta_{2K-3} \cdot 1 \leq \epsilon \cdot \rho 
\end{cases} 
\quad
\begin{cases}
     \delta_1\cdot (-c-1)+\delta_2 \cdot 1 \leq \epsilon \cdot \rho \\  
     \delta_3\cdot (-c-2)+\delta_4 \cdot 1 \leq \epsilon \cdot \rho \\  
     \qquad \vdots \\
     \delta_{2K-3}\cdot (-c-(K-1) )+\delta_{2K-2} \cdot 1 \leq \epsilon \cdot \rho
\end{cases}  
\end{equation}
and moreover $\sum_{i=1}^{2K-1} \delta_i = 1$. 
To see this,  by the structure of $\pi$, we know that whenever agent A is recommended action $i$, agent B will be recommended either action $i-1$ or action $i$. Since $u_A(i,i-1) = u_A(i,i) = i/\rho$,   agent A will get utility $i/\rho$ for sure when recommended action $i$. It is thus easy to see that agent A will not be willing to deviate to any action $i' < i$ since his utility can only be at most $i'/\rho$ for playing $i'$. Similarly, since $u_A(i', i) = u_A(i', i-1) = -c/\rho$ for any $i' > i+1$, so agent A will not deviate to any action $i' > i+1$ neither. In other words, the structure of $\pi$ already guarantees that whenever agent A's action $i$ is recommended, she would only be possibly having incentive to deviate action $i+1$, for which she gets utility $-c\delta_{2i-2}/\rho + (i+1)\delta_{2i-1}/\rho$. The $\epsilon$-CE condition thus requires the following  $$i(\delta_{2i-2} + \delta_{2i-1})/\rho \leq -c\delta_{2i-2}/\rho + (i+1)\delta_{2i-1}/\rho + \epsilon.$$
Simple algebraic calculation shows that the above is exactly the $i$th constraint in the left-hand-side of System \eqref{eq:ce-constraints}. Similarly, the $j$'th constraint in the right-hand-side of System \eqref{eq:ce-constraints} says whenever the agent B is recommended action $j$, she would prefer $j$ over $j+1$ for $j=1,\cdots, K-1$, which is the only constraint needed to guarantee $\epsilon$-CE. 


 \noindent {\bf Distribution of the constructed CE. } We now explicitly construct  $\{\delta_i\}_{i=1}^{2K-1}$ that satisfies linear system \eqref{eq:ce-constraints}. Suppose $c>1$.  Pick any $k \in \{2,\dots, 2K-1\}$, for any $\epsilon$ such that $1/\epsilon\in (\rho \sum_{i=1}^{k-1}c^{i-1
} , \rho \sum_{i=1}^{k}c^{i-1}  ]$, we can verify that 
\begin{equation} \label{eq:dir-eps-equilibrium}
    \delta_i= \begin{cases}
     \rho \cdot \epsilon c^{i-1}   & i < k \\
     1-\rho \cdot \epsilon \sum_{i=1}^{k-1}c^{i-1} & i = k \\
     0 & i > k\\
\end{cases} 
\end{equation} 
is a feasible solution to linear system \eqref{eq:ce-constraints}, and therefore forms an $\epsilon-$CE. With $k \leq 2K-2$, we obtained an $\epsilon-$CE with $\delta_{2K-1} = 0$ for any $\epsilon$ satisfying $1/\epsilon \leq c^{2K-2} \leq \rho  \sum_{i=1}^{2K-2}c^{i-1} $.  This concludes our proof for the first part of Theorem \ref{thm:diamwelfare}. 

\noindent {\bf Welfare property of the constructed CE.} Finally, we   derive the upper bound  of the welfare at the above $\epsilon-$CE. Observe that by construction of $\pi$, the welfare (i.e., sum of agents' utilities) at the action profile for $\delta_i$ is precisely $(i+1)/\rho$. Therefore, we can compute and bound the total welfare as follows:  
$$
    \sum_{i=1}^{2K-1} \delta_i \cdot \frac{i+1}{\rho} 
    = \sum_{i=1}^{k} \delta_i \cdot \frac{i+1}{\rho}
    \leq \sum_{i=1}^{k} \delta_i \cdot \frac{k+1}{\rho} 
    \leq \frac{k+1}{\rho}
$$
where the first equality is by Eq. \eqref{eq:dir-eps-equilibrium}, and  the first and second inequality is implied from the fact that $\frac{i+1}{\rho} \leq \frac{k+1}{\rho}, \forall i \leq k$ and $ \sum_{i=1}^{k} \delta_i = 1 $.


We know $\rho c^{k-2} \leq \rho \sum_{i=1}^{k-1}c^{i-1} <  1/\epsilon $. This implies $ k-2 <   \frac{\log(1/\epsilon)-\log(\rho)}{\log (c)}  $. 
Since $k\in \mathbb{N}$, we have $ k-1 \leq \lceil \frac{\log(1/\epsilon)-\log(\rho)}{\log (c)} \rceil $. Since $\rho \geq c$,  $\lceil \frac{\log(1/\epsilon)-\log(\rho)}{\log (c)} \rceil \leq \lceil \frac{\log(1/\epsilon)}{\log (c)} - 1 \rceil \leq \lceil \frac{\log(1/\epsilon)}{\log (c)} \rceil - 1 $. 
Therefore, the welfare is at most $  \frac{k+1}{\rho}$, which is at most $ \frac{1+ \lceil \log(1/\epsilon)/\log (c) \rceil}{2K} $  fraction of the equilibrium welfare $2K/\rho$. 

\end{proof}

\subsection{DIR Games Are not  Globally Variationally Stable}\label{append:variational}

According to the definition of variational inequality (Eq (6) in \citep{cohen2017hedging}), to see why the global variational inequality fails to hold, we only need to show that there exists $(x, y)\in \Delta_{[K]}\times \Delta_{[K]}$ such that $v(x,y)\cdot ((x, y)-(x^*,y^*))>-\frac{\Delta}{2}\|(x, y)-(x^*,y^*)\|$ for any $\Delta>0$, where $(x^*,y^*)=(0,\cdots,0,1,0,\cdots,0,1)$ is the unique NE of the game with payoff matrices defined in \eqref{eq:lowerboundinstance}, $\|\cdot\|$ is the $L_1$ norm, and $$v(x,y)=(v_{A,1}(y),v_{A,2}(y),\cdots,v_{A,K}(y),v_{B,1}(x),v_{B,2}(x),\cdots,v_{B,K}(x)),$$ where $v_{A,j}(y)$ is the payoff of when agent A plays pure strategy $j$ and agent B plays the mixed strategy $y$. 

Let $x=y=(0,\cdots,0,1,0,\cdots,0)$, i.e., the only non-zero term of $x$ and $j$ is the $i$-th element, where $1\leq i\leq K-1$. Then we have 
$$v(x,y)\cdot ((x, y)-(x^*,y^*)) = 2i > 0>-\frac{\Delta}{2}\|(x, y)-(x^*,y^*)\|,$$
meaning the global variational inequality is violated.

\section{Additional Discussion for \Algoclass{} Algorithms} \label{append:nonconverge}

In this section we demonstrate the broadness of \algoclass{} algorithms by showing all Dual Averaging (DA) and Follow the Perturbed-Leader (FTPL) algorithms are \algoclass{}. To complete the definition of \algoclass{} algorithms, we formally introduce the concept of order-preserving functions:

\begin{definition}\label{def:order-pre}
    We call a function $F: \mathbb{R}^K \rightarrow \Delta_K$ order-preserving if for any $\y=(y_1, \cdots, y_k)$ and $i,j$ such that $y_i<y_j$, we must have $F(\y)_i\leq F(\y)_j$. 
\end{definition}

We call an online learning algorithm \algoclass{} if it utilizes an order-preserving function to determine the action distribution from accumulated rewards. The formal definition is shown in Algorithm \ref{al:fair}. 

By definition \ref{def:fair_algo}, Any algorithm \ref{al:fair} equipped with an order-preserving function $F$ is \algoclass{}. During each round, a \algoclass{} algorithm first maps the accumulated score vector $\y_t$ to a distribution $\p_t=(p_1(t),\cdots,p_K(t))\in\Delta_K$ with a pre-chosen order-preserving function $F$ and then samples the current strategy from $\p_t$. A \algoclass{} algorithm also needs to specify a learning rate sequence $\{\eta_t\}$ to accumulate the collected rewards $\tilde{\u}_t$ from each round. We note that in general, the reward $\tilde{\u}_t$ can be any unbiased estimation of the true reward $\u_t$. However, since we focus on demonstrating a negative side of \algoclass{} algorithms in Theorem \ref{thm:nonconverge}, we consider the most informative type of feedback, i.e., the noiseless full-information setting. 

Next, we introduce the preliminaries of DA and FTPL algorithms, whose descriptions are shown in Algorithm \ref{al:da} and \ref{al:ftpl}.

\begin{algorithm}[h]
    \caption{The DA Algorithm Framework}
     \label{al:da}
     \begin{algorithmic}[1]
         \State \textbf{Input:} Mirror map $Q: \mathcal{Y}\rightarrow \Delta_K$, learning rate sequence $\{\eta_t>0\}$.

        \State $\y_1 \gets (0, \cdots, 0)$\;
        
        \For{$t = 1\dots T$}
            \State Compute $\p_t = Q(\y_t)$\;
            
            \State Draw an action $i_t$ from the distribution $\p_t$.\;
            
            \State Receive the expected payoff $\tilde{\u}_t=(\tilde{u}_1(t),\cdots,\tilde{u}_K(t)) $ for each action $i$ from the first-order oracle.\;
            
            \State Update $\y_{t+1} = \y_t + \eta_t \tilde{\u}_t$. \;
            
        \EndFor
    \end{algorithmic}
\end{algorithm}

In online learning literature, DA coincides with Follow-the-Regularized-Leader (FTRL) in the cases of linear losses, and is also known as the ``lazy'' version of online mirror descent (OMD)~\cite{shalev2011online}. Similar to the \algoclass{} defined in Algorithm \ref{al:fair}, the DA algorithm uses a ``mirror map'' $Q$ to derive the mixed strategy $\p_t$ from the accumulated reward $\y_t$. By convention, the mirror map $Q$ depends on a convex function (or regularizer) $h$ and is defined as
\begin{equation}\label{eq:mirror_map}
    Q(\y)=\arg\max_{x\in \Delta_K}\{\langle \y,\x \rangle-h(\x)\}, \y \in \mathcal{Y}.
\end{equation}

A regularizer is said to be \emph{symmetric} if for any $1\leq i< j\leq K$, 
 $ h(x_1, \dots, x_i, \dots, x_{j}, \dots, x_{K}) = h(x_1, \dots, x_j, \dots, x_i, \dots, x_{K})$. That is, the value of $h(x)$ will not change if we swap the values at any two coordinates of $x$. Most (if not all) known DA algorithms use symmetric and strictly convex regularizers. In this case, symmetry of $h$ implies a natural property of the algorithm --- i.e., if two actions have the same accumulated score, the algorithm should play either with equal probability. This exactly conforms to the definition of order-preserving function. We formalize this connection in Lemma \ref{lm:mono_Q}, which supports our main claim in Proposition \ref{prop:DAisfair}. 

\begin{proposition}\label{prop:DAisfair}
    Any DA algorithm equipped with a mirror map induced by a symmetric regularizer is \algoclass{}.
\end{proposition}

\begin{proof}
By the definition of \algoclass{} algorithms, we only need to show the following Lemma:
\begin{lemma}\label{lm:mono_Q}
Any mirror map $\p = Q(\y)$ induced by a symmetric regularizer $h$ is order-preserving.
\end{lemma}
\begin{proof}
We prove by contradiction. Suppose for some $\y$ and $i, j$ where $y_i > y_j$, while $\p = Q(\y)$ with $p_i < p_j$. By definition, $\p = Q(\y) =\arg\max_{x\in \Delta_K}\{\langle \y,\x \rangle-h(\x)\}$. However, consider the vector $\tilde{\p}= (p_1, \dots, p_j, \dots, p_i, \dots, p_{K})$. By construction, $\tilde{\p} \in \Delta_K$ and $h(\tilde{\p}) = h(\p) $. By rearrangement inequality, we have $\tilde{p}_i y_j + \tilde{p}_j y_j  = p_j y_i + p_i y_j > p_i y_j + p_j y_j $ such that $\langle \y,\tilde{\p} \rangle-h(\tilde{\p}) > \langle \y,\p \rangle-h(\p)$. This is a contradiction to $\p = \arg\max_{x\in \Delta_K}\{\langle \y,\x \rangle-h(\x)\}$. Therefore, it must be the case that whenever $y_i > y_j$, $p_i \geq p_j$. By definition \ref{def:order-pre}, Lemma \ref{lm:mono_Q} holds.
\end{proof}
\end{proof}

The DA family includes many celebrated algorithms such as Exponential Weight (EW), lazy gradient descent (LGD) and fictitious play. 
Below we list a few widely used algorithms from the DA family which, unsurprisingly, all use symmetric and strictly convex regularizers: 
\begin{enumerate}
    \item when $h(\x)=\frac{1}{2}\|\x\|^2$ is the quadratic function, the mirror map 
    $Q(\y)=\arg\min_{\x\in \Delta_K}\|\x-\y\|^2$ takes the form of Euclidean projection and we obtain the lazy gradient descent (LGD) algorithm. 
    \item when $h(\x)=\sum_{i\in[K]}x_i\log x_i$ is the entropic regularizer, the mirror map 
    $Q(\y)=\frac{(\exp(y_1),\cdots,\exp(y_K))}{\exp(y_1)+\cdots+\exp(y_1)}$ takes the form of logit choice map and we obtain entropic gradient descent algorithm, which is also known as the Hedge or Exponential Weight (EW).
    \item when $h(\x)=\frac{1}{p}\|\x\|^p$ is the normalized $L_{p}$ norm and $p\rightarrow \infty$, $Q(\y)$ always returns the pure best strategy to the opponent's average past mixed strategy, and the corresponding algorithm is known as fictitious play \cite{brown1951iterative, viossat2013no}.
\end{enumerate}

Next, we show that FTPL algorithms are also \algoclass{}. The outline of FTPL family is shown in Algorithm \ref{al:ftpl}.

\begin{algorithm}[h]
    \caption{The FTPL Algorithm Framework}
     \label{al:ftpl}
     \begin{algorithmic}[1]
         \State \textbf{Input:} A probability distribution $\D$, learning rate sequence $\{\eta_t>0\}$.

        \State $\y_1 \gets (0, \cdots, 0)$\;
        
        \For{$t = 1\dots T$}
            \State Sample $\{\epsilon_i\}_{i=1}^K$ independently from $\D$.\;
            
            \State Choose the action $i_t=\arg\max_{i\in[K]}\{y_i(t)+\epsilon_i$\}.\;
            
            \State Receive the expected payoff $\tilde{\u}_t=(\tilde{u}_1(t),\cdots,\tilde{u}_K(t)) $ for each action $i$ from the first-order oracle.\;
            
            \State Update $\y_{t+1} = \y_t + \eta_t \tilde{\u}_t$. \;
        \EndFor
    \end{algorithmic}
\end{algorithm}

The distribution $\D$ can be any single variable distribution, e.g., Gaussian, Gumbel, exponential, etc. When $\D$ takes the Gumbel distribution with zero mean, the FTPL algorithm coincides with multiplicative weights update (MWU). For FTPL algorithms, the link function $F$ that maps the accumulated rewards to action distribution is implicit. However, it is straightforward to see that $F$ is order-preserving and thus all FTPL algorithms are \algoclass{}. We formalize the claim in the following proposition \ref{prop:FTPLisfair}:

\begin{proposition}\label{prop:FTPLisfair}
    The class of FTPL algorithms are \algoclass{}.
\end{proposition}
\begin{proof}
Any FTPL algorithm can be equivalently represented in the form of Algorithm \ref{al:fair} with a link function $F$ defined as
$$F(\y)_i=\Prob{i=\arg\max_{k\in[K]}\{y_k+\epsilon_k\}}, \forall k \in [K].$$
Because random variables $\epsilon_i$ are i.i.d., for any $y_i>y_j$ it must hold that 
$$\Prob{i=\arg\max_{k\in[K]}\{y_k+\epsilon_k\}}\geq \Prob{j=\arg\max_{k\in[K]}\{y_k+\epsilon_k\}}.$$  Therefore, $F$ is order-preserving and the induced algorithm is \algoclass{}.
\end{proof}

\section{Omitted Proofs in Section \ref{sec:converge} }\label{append:converge}

\subsection{Proof of Corollary \ref{coro:maintheorem}}

From Eq \eqref{eq:302}, we know that a sufficient condition for $t$ to $\varepsilon$-\emph{essentially eliminate} a dominated action $a \in E_1$ is
\begin{equation}\label{eq:suff_ee2}
\frac{t^{-b}}{K} +  \exp\left(4\left(\sqrt{\frac{eK(1+\sigma^2)}{1+2\beta+b}}\right)\log^{\frac{1}{2}}\frac{2K}{\delta}\cdot t^{\frac{1+b}{2}} -\frac{\Delta t}{16(1+\beta)} \right)  < \frac{\min\{\varepsilon,\Delta/2\}}{4KN}.    
\end{equation}

Note that a sufficient condition to satisfy Eq \eqref{eq:suff_ee2} above is the following 
\begin{eqnarray*}
& &  4\left(\sqrt{\frac{eK(1+\sigma^2)}{1+2\beta+b}}\right)\log^{\frac{1}{2}}\frac{2K}{\delta}\cdot t^{\frac{1+b}{2}} -\frac{\Delta t}{32(1+\beta)} < 0 \\
& & \frac{t^{-b}}{K} < \frac{\min\{\varepsilon,\Delta/2\}}{8KN} \qquad \text{ and } \qquad -\frac{\Delta t}{32(1+\beta)}  < \log \frac{\min\{\varepsilon,\Delta/2\}}{8KN},
\end{eqnarray*} 
which can be simplified to 
\begin{small}
\begin{equation*} 
t > \max\{O(N^{\frac{1}{b}}\min\{\varepsilon,\Delta\}^{-\frac{1}{b}}), O(K^{\frac{1}{1-b}}(1+\sigma^2)^{\frac{1}{1-b}}\beta^{\frac{1}{1-b}}\Delta^{-\frac{2}{1-b}}\log^{\frac{1}{1-b}}\frac{1}{\delta}),O(\beta\Delta^{-1}\log\frac{KN}{\min\{\varepsilon,\Delta\}}) \}.    
\end{equation*}
\end{small}
To satisfy the above condition, we can simply take  $$t=O\left(\max\{N^{\frac{1}{b}},K^{\frac{1}{1-b}}\}\max\{\varepsilon^{-\frac{1}{b}},\Delta^{-\max\{\frac{1}{b},\frac{2}{1-b}\}}\}(1+\sigma^2)^{\frac{1}{1-b}}\beta^{\frac{1}{1-b}}\log^{\frac{1}{1-b}}\frac{1}{\delta}\right).$$ 
In particular, when $b=\frac{1}{3}$, we may choose $T_1=O\left(\max\{N^3,K^{1.5}\}\min\{\varepsilon,\Delta\}^{-3}(1+\sigma^2)^{1.5}\beta^{1.5}\log^{1.5}\frac{1}{\delta}\right)$. 
Therefore, with probability at least $1-|E_1|(T_1+s)>1-2|E_1|(T_1+s)^2$, actions in $E_1$ will be essentially eliminated at round $T_1+1, \cdots, T_1+s$.

From Eq \eqref{eq:eq4Tk}, we can thus upper bound $T_{L_0}$ by
\begin{align*}
T_{L_0}&<(1+\frac{8}{\Delta})^{\frac{L_0}{1+\beta}}\cdot \left(T_1+L_0\cdot\frac{(1+\beta)^2(4+\Delta)^2(8+\Delta)^2 }{4(1+2\beta)\Delta^2}\log \frac{1}{\delta}\right)\\& \sim O\left(\Delta^{-\frac{L_0}{1+\beta}}(T_1+L_0\beta\Delta^{-2}\log\frac{1}{\delta})\right) \\ 
&= O\left(\Delta^{-\frac{L_0}{1+\beta}}(\max\{N^3,K^{1.5}\}\min\{\varepsilon,\Delta\}^{-3}(1+\sigma^2)^{1.5}\beta^{1.5}\log^{1.5}\frac{1}{\delta}+L_0\beta\Delta^{-2}\log\frac{1}{\delta})\right) \\ 
&=O\left(\max\{N^3,K^{1.5}\}\min\{\varepsilon,\Delta\}^{-3}(1+\sigma^2)^{1.5}\beta^{1.5}\log^{1.5}\frac{1}{\delta}\right),
\end{align*}
which implies that $E_{L_0}$ will be $\varepsilon$-essentially eliminated in $O\left(\max\{N^3,K^{1.5}\}\min\{\varepsilon,\Delta\}^{-3}(1+\sigma^2)^{1.5}\beta^{1.5}\log^{1.5}\frac{1}{\delta}\right)$ iterations with probability at least $1-2KNT_{L_0}^2\delta$. Hence, we claim that $E_{L_0}$ will be $\varepsilon$-essentially eliminated in $\tilde{O}\left(\max\{N^3,K^{1.5}\}\min\{\varepsilon,\Delta\}^{-3}(1+\sigma^2)^{1.5}\beta^{1.5}\log^{1.5}\frac{1}{\delta}\right)$ iterations with probability at least $1-\delta$ for any $\delta$.

\subsection{Proofs of Technical Lemmas}

\begin{proof}\textbf{of Lemma \ref{le:decompE}}

Fix $T>0$ and any action $a$. Let $\Xi_t=\gamma_t^{(T)} (\tilde{u}_a(t)-u_a(t))$ and $ S_t=\sum_{s=1}^t \Xi_s$. Since $T$ is fixed, $\mathbb{E}[S_t]$ is bounded. Moreover, we have
\begin{align*}
    \mathbb{E}[S_t|S_{t-1},\cdots,S_1] &= S_{t-1} + \gamma_t^{(T)} \cdot \mathbb{E}[\tilde{u}_{a}(t)-u_{a}(t)|\cF_{t-1}]=S_{t-1}.
\end{align*}
By definition, $\{S_t\}_{t=1}^T$ is a martingale.


Apply Azuma’s inequality to $\{S_t\}$, for any $x>0, 1\leq t \leq T$, we have  
\begin{equation}\label{eq:Benn}
    \mathbb{P}[|S_t| \geq x]\leq 2\exp{\left(-\frac{x^2}{2W}\right)}.
\end{equation}
$W$ is a upper bound of $\sum_{i=1}^t \mathbb{E}[\Xi_i^2|\cF_{i-1}]$. Note that
\begin{align}
\notag
\mathbb{E}[\Xi_t^2|\mathcal{F}_{t-1}]&= (\gamma_t^{(T)})^2\mathbb{E}\left[\left(0-u_{a}(t)\right)^2 \cdot (1-p_{a}(t)) + \left(\frac{u_{a}(t)+\xi_{t}}{p_{a}(t)}-u_{a}(t)\right)^2 \cdot p_{a}(t)\Big|\mathcal{F}_{t-1}\right] \\  \notag
&=(\gamma_t^{(T)})^2\mathbb{E}\left[(u^2_{a}(t)+\xi_t^2) \cdot \frac{1}{p_{a}(t)}-u^2_{a}(t) \Big|\mathcal{F}_{t-1}\right] \\ \label{eq:xisquareupper}
& \leq (\gamma_t^{(T)})^2\left(\frac{K(1+\sigma^2)}{\epsilon_t} \right),
\end{align}
we can take $n=T$ and $W=\sum_{i=1}^T \gamma_i^2(\frac{K(1+\sigma^2)}{\epsilon_i})$ in Eq \eqref{eq:Benn} and obtain
\begin{equation}\label{eq:solvex}
    x\geq \sqrt{2W\log \frac{2}{\delta}},
\end{equation}
which implies with probability at least $1-\delta$, 
$$|\sum_{t=1}^T \gamma_t^{(T)}(\tilde{u}_{a}(t)-u_{a}(t))| < 2\left(\sqrt{\log\frac{2}{\delta}}\cdot\sqrt{K(1+\sigma^2)\sum_{t=1}^T \frac{(\gamma_t^{(T)})^2}{\epsilon_t}}\right).$$

\end{proof}

\begin{proof}\textbf{of Lemma \ref{coro:gapy}}

By the definition of strict dominance, there exists $\Delta>0$ such that the mixed strategy $x=(x_1, \cdots, x_K)$ satisfies $u_{x}(t)-u_{a}(t) > \Delta$ for all $t>0$. From Lemma \ref{le:decompE}, we can take a union bound over all the actions $a\in [K]$ and obtain with probability $1-K\delta'$,
$$|\sum_{t=1}^T \gamma_t^{(T)}(\tilde{u}_{a}(t)-u_{a}(t))| < 2\left(\sqrt{\log\frac{2}{\delta'}}\cdot\sqrt{K(1+\sigma^2)\sum_{t=1}^T \frac{(\gamma_t^{(T)})^2}{\epsilon_t}}\right), \forall a\in [K].$$

As a result, 
\begin{align*}
& y_{x}(T+1)-y_{a}(T+1) \\ =&  \sum_{t=1}^T \gamma_t [\tilde{u}_{x}(t)-\tilde{u}_{a}(t)] \\ \notag
=& \sum_{t=1}^T \gamma_t [u_{x}(t)-u_{a}(t)]+ \sum_{t=1}^T \gamma_t [-u_{x}(t)+\tilde{u}_{x}(t)]+\sum_{t=1}^T \gamma_t [-\tilde{u}_{a}(t)+u_{a}(t)] \\ 
=& \sum_{t=1}^T \gamma_t [u_{x}(t)-u_{a}(t)]+ \sum_{i\in\cA }x_i\sum_{t=1}^T \gamma_t [-u_{i}(t)+\tilde{u}_{i}(t)]+\sum_{t=1}^T \gamma_t [-\tilde{u}_{a}(t)+u_{a}(t)] \\ 
\geq &  \sum_{t=1}^T \gamma_t [u_{x}(t)-u_{a}(t)]- 4 \left(\sqrt{\log\frac{2}{\delta'}}\cdot\sqrt{K(1+\sigma^2)\sum_{t=1}^T \frac{\gamma_t^2}{\epsilon_t}}\right).
\end{align*}
Letting $\delta'=\frac{\delta}{K}$ yields Eq \eqref{eq:deltay}. When $\gamma_t = (t/T)^{\beta}, \epsilon_t = t^{-b}$ and assume $T>\beta$, we have
\begin{equation}\label{eq:283}
    \sum_{t=1}^T \gamma_t [u_{x}(t)-u_{a}(t)]\geq\Delta \sum_{t=1}^T \gamma_t =\Delta \sum_{s=1}^T (\frac{s}{T})^{\beta} > \Delta T\cdot\int_{0}^{1} x^{\beta}dx = \frac{\Delta T}{1+\beta},
\end{equation}
\begin{align}
\notag
\sqrt{K(1+\sigma^2)\sum_{t=1}^T \frac{\gamma_t^2}{\epsilon_t}} & < \sqrt{K(1+\sigma^2)T^{1+b}\int_{\frac{1}{T}}^{1+\frac{1}{T}}x^{2\beta+b}dx}  \\ \label{eq:287}
& < \sqrt{\frac{eK(1+\sigma^2)}{1+2\beta+b}\cdot T^{1+b}},
\end{align}
where Eq \eqref{eq:287} holds because 
$$\int_{\frac{1}{T}}^{1+\frac{1}{T}}x^{\alpha}dx=\frac{1}{1+\alpha}\cdot\left[\left(1+\frac{1}{T}\right)^{1+\alpha}-\left(\frac{1}{T}\right)^{1+\alpha}\right] < \frac{1}{1+\alpha}\left(1+\frac{1}{T}\right)^{T}<\frac{e}{1+\alpha}.$$
Therefore, Eq \eqref{eq:deltay2} holds.
\end{proof}

\begin{proof}\textbf{of Lemma \ref{lm:learning_rate_property}}
Consider the sequence of events $$A_{t}=\{x_{-i} \text{~contains~essentially~eliminated~actions~at~round~}T_k+t\}, \quad  t=1,\cdots,T'.$$ 


Specifically, let $T=T_k+T'$ be fixed and define random variables $$Z_t=\sum_{s=T_k+1}^{T_k+t} \gamma_s^{(T)} [u_{x}(s)-u_{a}(s)-\frac{\Delta}{2}], \quad  1\leq t\leq T'.$$ We further let $Z_0=0$ and now show that $\{Z_t\}_{t=0}^{T'}$ is a sub-martingale. Since $E_k$ has been $\varepsilon$-essentially eliminated at any $t\geq T_k$ and since $\varepsilon<\frac{1}{2}$, we have
$$\mathbb{P}[A_t|A_{t-1},\cdots,A_1] < |E_k| \cdot \frac{\min\{\varepsilon,\Delta/2\}}{4KN} \leq KN\cdot\frac{\Delta}{8KN}=\frac{\Delta}{8}.$$ 

Note that when $A_t$ does not happen, no actions in $E_k$ will be played by agent $i$'s opponents, in which case we have  $u_{x}(t)-u_{a}(t)\geq \Delta$; On the other hand,  when $A_t$ happens we have $u_{x}(t)-u_{a}(t)\geq -2$ due to bounded utilities. Hence,
\begin{align*}
    & \mathbb{E}[Z_t|Z_{t-1},\cdots,Z_1] \\
    &= Z_{t-1} + \gamma_t^{(T)} \cdot \mathbb{E}[u_{x}(t)-u_{a}(t)]-\gamma_t^{(T)} \cdot\frac{\Delta}{2}\\
    &\geq Z_{t-1} + \gamma_t^{(T)} \cdot \left[\mathbb{P}[A_t|A_{t-1},\cdots,A_1] \cdot (-2)+(1-\mathbb{P}[A_t|A_{t-1},\cdots,A_1])\cdot \Delta\right]-\gamma_t^{(T)} \cdot \frac{\Delta}{2} \\ 
    & > Z_{t-1} + \gamma_t^{(T)} \cdot \left[\frac{\Delta}{8} \cdot (-2)+(1-\frac{\Delta}{8})\cdot \Delta\right] -\gamma_t^{(T)} \cdot \frac{\Delta}{2} \\  
    & > Z_{t-1} + \gamma_t^{(T)} \cdot \left[\frac{\Delta}{8} \cdot (-2)+(1-\frac{2}{8})\cdot \Delta -\frac{\Delta}{2}\right]=Z_{t-1}.
\end{align*}
Therefore, $\{Z_t\}_{t=1}^{T'}$ is a sub-martingale. Therefore, let $c_i=\gamma_{T_k+i}^{(T)}(2+\frac{\Delta}{2})$ denote a upper bound of $|Z_{i}-Z_{i-1}|$. By Azuma's inequality, we have 
\begin{equation}\label{eq:azuma}
    \mathbb{P}[Z_t \leq -\epsilon] \leq \exp \left(\frac{-\epsilon^2}{2\sum_{i=1}^t c_i^2}\right), \forall \epsilon >0.
\end{equation}
Let $$ t=T', \epsilon=\frac{\Delta}{4}\sum_{t=T_k+1}^{T_k+T'} \gamma_t^{(T)}, \text{~and~} \exp \left(\frac{-\epsilon^2}{2\sum_{i=1}^t c_i^2}\right) \leq \delta,$$ we obtain
\begin{equation}
\sum_{t=T_k+1}^{T_k+T'} \gamma_t [u_{a}(t)-u_{a'}(t)] > \frac{\Delta}{4} \sum_{t=T_k+1}^{T_k+T'} \gamma_t
\end{equation} 
with probability at least $1-(|E_{k+1}| - |E_k|)T'\delta$ for any $a \in E_{k+1} \setminus E_k $, as long as $T'$ satisfies 
\begin{equation}\label{eq:T0condition0}
\frac{\Delta^2}{16}\left(\sum_{t=T_k+1}^{T_k+T'}\gamma_t\right)^2 > 2 (2+\frac{\Delta}{2})^2\sum_{t=T_k+1}^{T_k+T'}\gamma_t^2\log \frac{1}{\delta}.
\end{equation}
We now derive a sufficient condition for Eq \eqref{eq:T0condition0} to hold, after substituting $\gamma_t^{(T)} = (t/T)^{\beta}$ into Eq \eqref{eq:T0condition0} and assuming $T\geq(1+\frac{8}{\Delta})^{\frac{1}{1+\beta}}T_k,T\geq T_k+1+2\beta$. We first lower bound the LHS of Eq \eqref{eq:T0condition0} via the following bound (recall $T=T_k+T'$):
\begin{align}\notag
\sum_{t=T_k+1}^{T_k+T'}\gamma_t^{(T)} &> T'\int_{\frac{T_k}{T}}^1x^{\beta}dx \\ \notag
&= \frac{T'}{1+\beta}\left[1-\left(\frac{T_k}{T}\right)^{1+\beta}\right] \\ \label{eq:434}
& \geq \frac{8T'}{(8+\Delta)(1+\beta )},
\end{align}
where Eq \eqref{eq:434} holds because $T\geq(1+\frac{8}{\Delta})^{\frac{1}{1+\beta}}T_k$. We then upper bound the RHS of Eq \eqref{eq:T0condition0} via the following bound: 
\begin{align}\notag
\sum_{t=T_k+1}^{T_k+T'}(\gamma_t^{(T)})^2 & < - \left(\frac{T_k}{T}\right)^{2\beta}+1+(T-T_k)\int_{\frac{T_k}{T}}^1x^{2\beta}dx \\ \label{eq:439}
&<1+\frac{T'}{1+2\beta}\leq\frac{2T'}{1+2\beta}, 
\end{align}
where Eq \eqref{eq:439} holds because $T\geq T_k+1+2\beta$. 

Consequently, a sufficient condition for Eq \eqref{eq:T0condition0} to hold is 
\begin{equation*}
\frac{\Delta^2}{16}\left(\frac{8T'}{(8+\Delta)(1+\beta )}\right)^2 > 2 (2+\frac{\Delta}{2})^2\frac{2T'}{1+2\beta}\log \frac{1}{\delta},
\end{equation*}
which yields
\begin{equation}\label{eq:T0condition1}
T' \geq \frac{(1+\beta)^2(4+\Delta)^2(8+\Delta)^2 }{4(1+2\beta)\Delta^2}\log \frac{1}{\delta}.
\end{equation}
Since Eq \eqref{eq:T0condition1} implies $T'>1+2\beta$, we can simply take
\begin{equation}
T_0=\max\Big\{\frac{(1+\beta)^2(4+\Delta)^2(8+\Delta)^2 }{4(1+2\beta)\Delta^2}\log \frac{1}{\delta}, \left[(1+\frac{16}{\Delta})^{\frac{1}{1+\beta}}-1\right]\cdot T_k \Big\}.
\end{equation}
Hence, we complete the proof.
\end{proof}

\section{ Elimination Length in Akerlof's Market for ``Lemons''}\label{appendix:sec:lemon}

\begin{proposition} \label{prop:lemon}
Suppose each seller observes his exact car quality, i.e., $\tilde{q}_i=q_i$. With any $c_1>0, c_2 > 1$, the Market for ``Lemons'' game has elimination length $L_0$ at least $2\lceil\frac{N}{k}\rceil-1$ if $q_i - q_{i-k} \geq c_1 > q_i - q_{i-k+1}, \forall i\in \{k+1, k+2, \cdots, N\}$ and $\mathcal{P} \supseteq \{q_1, q_2, \dots, q_N\} $. The buyer offering any price $p \leq q_1$ and each seller $i$ setting $a_i = 0$ are Nash equilibria of the game.
\end{proposition}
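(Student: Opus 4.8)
The plan is to prove two separate claims: the lower bound on elimination length, and the characterization of the stated Nash equilibria. I would handle the equilibrium claim first since it is the more routine of the two. To verify that the buyer offering any $p \leq q_1$ together with all sellers setting $a_i = 0$ is a Nash equilibrium, I would check each player's incentive to deviate given the others' strategies. When all sellers list nothing ($a_i = 0$ for all $i$), the buyer's payoff is determined by an empty market, so no price change helps her—any unilateral price deviation leaves $\bar q$ undefined/no trade, yielding no gain. For a single seller $i$ contemplating a deviation to $a_i = 1$: since the buyer's price is $p \leq q_1 \leq q_i$, the seller faces $b_i = \mathbb{1}[q_i \leq p] = 0$ (the car does not sell because its quality/reservation value exceeds the price), so the deviation yields utility $a_i(b_i(p-q_i) - c_1) = 1 \cdot (0 - c_1) = -c_1 < 0 = u_i$. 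Hence no seller benefits from listing, confirming the equilibrium.

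\textbf{The elimination length lower bound.}

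For the lower bound $L_0 \geq 2\lceil N/k\rceil - 1$, I would mirror the iterative structure of the DIR analysis and the ``market for lemons'' narrative from the introduction. The key quantitative input is the spacing condition $q_i - q_{i-k} \geq c_1 > q_i - q_{i-k+1}$, which I read as saying that each successive round of elimination can remove at most $k$ sellers' ``list'' actions and one price action at a time. Concretely, I would argue inductively: initially the buyer will eliminate (dominate away) only the highest prices whose expected value cannot be recouped given the full quality distribution, driving out the top $k$ quality sellers; those sellers' listing becomes dominated by not listing; this then shifts the buyer's effective average quality downward, enabling elimination of the next band of prices, and so on. Each iteration peels off one ``layer'' of size $k$ among the sellers, alternating between buyer-side price elimination and seller-side listing elimination. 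Since there are $N$ sellers and each seller-elimination round removes at most $k$ of them, it takes at least $\lceil N/k \rceil$ seller-rounds, interleaved with buyer-rounds, giving the alternating count $2\lceil N/k\rceil - 1$.

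\textbf{The main obstacle.}

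The hard part will be making the inductive ``one band per iteration'' claim rigorous, specifically showing that no elimination can happen \emph{faster} than one layer at a time—i.e., establishing that the spacing inequality $c_1 > q_i - q_{i-k+1}$ genuinely blocks the simultaneous elimination of more than $k$ sellers in a single round. This requires carefully computing, at each stage of IESDS, the buyer's expected payoff from each surviving price against the conditional average quality $\bar q$ over the sellers who still rationally list, and verifying that exactly one additional price (and correspondingly one band of sellers) crosses the domination threshold per iteration. I would need to track the surviving action sets precisely and confirm that a price is eliminated only when the opportunity-cost-adjusted reservation structure forces it, using $c_2 > 1$ to ensure trade remains weakly beneficial for the buyer until the relevant quality tier exits. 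The requirement $\mathcal{P} \supseteq \{q_1,\dots,q_N\}$ guarantees the buyer has enough price granularity to realize each intermediate elimination step, and I would invoke it to exhibit the explicit elimination path achieving (and lower-bounding) the stated length.
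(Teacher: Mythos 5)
Your equilibrium verification and the overall alternating architecture (one band of $k$ sellers exiting per two rounds, giving the count $1 + 2(\lceil N/k\rceil - 1) = 2\lceil N/k\rceil - 1$) match the paper's proof. But there are two substantive problems in the plan for the elimination-length bound. First, the engine you propose for the buyer's side --- eliminating ``the highest prices whose expected value cannot be recouped given the full quality distribution,'' evaluated ``against the conditional average quality $\bar q$ over the sellers who still rationally list'' --- is not how dominance operates in this game and would not yield strict dominance: IESDS requires the dominating price to do strictly better against \emph{every} surviving profile of listing decisions, not on average. The actual mechanism in the paper is purely order-theoretic: if $q$ is the highest reservation value among sellers whose listing survives, then any price $p>q$ buys exactly the same set of cars as $p=q$ against every surviving profile and is therefore dominated by $p=q$; conversely $p=q_i$ cannot be eliminated while seller $i$ survives, because against the profile in which only seller $i$ lists it yields $(c_2-1)q_i>0$ while every lower price yields $0$ (this is the only place $c_2>1$ enters). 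Your adverse-selection framing imports the introduction's H/M/L narrative into a game whose payoffs are defined differently.

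Second, the step you yourself flag as ``the main obstacle'' --- that no execution of IESDS can peel off more than one band per round --- is precisely the content of the lower bound and is left undone, and you attribute the blocking role to the wrong inequality. A seller $j$'s listing becomes dominated exactly when \emph{every} surviving price is strictly below $q_j+c_1$ (listing then yields at most $\min(p-q_j,0)-c_1<0$ against everything, whereas a surviving price $p\geq q_j+c_1$ makes listing weakly profitable). Hence it is $q_i-q_{i-k}\geq c_1$ that blocks faster progress: while the top surviving price is $q_{N-(i-1)k}$, every seller $j\leq N-ik$ still has $q_{N-(i-1)k}\geq q_j+c_1$ available and cannot exit, and since those sellers remain on the market the buyer cannot yet eliminate prices down to $q_{N-ik}$. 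The inequality $c_1>q_i-q_{i-k+1}$ plays the complementary role of guaranteeing that the next band of $k$ sellers \emph{does} become dominated once the buyer's top price falls to $q_{N-ik}$, so that the process is forced through all $\lceil N/k\rceil$ bands. Both directions are needed for the bound, and neither is established in your sketch. (A minor slip in the equilibrium part: for $p=q_1$ the car of seller $1$ would sell, so $b_1=1$, but the deviation payoff is still $-c_1<0$, so the conclusion stands.)
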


\begin{proof}
In this proof, we say a seller $i$ remains on the market if his $a_i=1$ is not eliminated. We start with the following two claims about the dominance elimination by the buyer and sellers.
\begin{claim}\label{claim:seller-eliminate}
 A seller with quality $q_i$ should remain on the market if and only if the buyer have not eliminated all its action $ p \geq q_i + c_1$. 
 Meanwhile, no seller $i$ should eliminate his $a_i=0$, as long as the buyer has not eliminate its $p=q_1$.
\end{claim}
\begin{proof}
By construction, if a seller choose not to list, his utility is always $0$ in regardless of the action of other sellers or buyer.

$(\Rightarrow)$: With some $p\geq q_i+c_1 $, the seller would have non-negative utility, $u_i= p - q_i - c_1 \geq 0$ if he choose to list. This is no worse than the zero utility if he does not list.

$(\Leftarrow)$: For any $p < q_i + c_1$, the seller with quality $q_i$ have utility $ u_i= \min(p - q_i, 0) - c_1 < 0  $, always negative if he choose to list. Since this utility is strictly dominated by than the zero utility if he does not list, thus $a_i=1$ should be eliminated.

Meanwhile, if the buyer sets a price $p = q_1$, any seller who list his car would incur a negative utility $-c_i$ strictly worse than the zero utility of not listing. So in this case no seller should always list its car.
\end{proof}

\begin{claim}\label{claim:buyer-eliminate}
The buyer should eliminate all its action $p > q$ if the highest car quality of the remaining sellers on the market is $q$. Conversely, as long as a seller of quality $q_i$ remains on the market, the buyer should not eliminate its action $p=q_i$.
\end{claim}
\begin{proof}
$(\Leftarrow)$: Among the remaining sellers on the market, let the highest car quality be $q$. Consider the buyer sets some price $p > q$, cf. $p= q$. The outcome from the two different price are the same in the sense that the buyer can get all the cars that the sellers choose to list with quality average quality $\bar{q}$. We can see that the buyer's revenue $c_2 \bar{q}$ is the same, but she has strictly minimum cost at $p=q$, therefore all the price $p > q$ are actions dominated by $p=q$ . 

$(\Rightarrow)$: Consider the situation that only the seller $i$ choose to list. In this case, the best response of the buyer is to set her price at $q_i$. This is because she does not want to set a price above $q_i$, according to the argument in paragraph above; setting a price below $q_i$, the seller will not sell and her utility is $0$, which is strictly worse than her utility $(c_2-1) q_i > 0$ if the buyer sets her price $p=q_i$. 
\end{proof}

We now provide an induction argument for the iterative elimination: 

\paragraph{Base case:} In the beginning of elimination, we know for any seller $i$, $i \leq N-k$, has his quality $q_i \leq q_N - c_1$. They cannot eliminate any of their actions as the buyer have not eliminate its action $p=q_1$, or $p=q_N \geq q_i + c_1$, according to Claim \ref{claim:seller-eliminate}. Meanwhile, we show that it takes at least $1$ round to eliminate both buyer's action(s) $p > q_{N}$ and the sellers' action $a_{i}=1, \forall i>N-k$. This is because the buyer may or may not need first eliminate all her action $p > q_N$ depending on the support of $\mathcal{P}$, according to Claim \ref{claim:buyer-eliminate}; In the same or the following round (also depending on the support of $\mathcal{P}$), all seller $i> N-k$ must eliminate their action $a_i = 1$, according to Claim \ref{claim:seller-eliminate}.



\paragraph{Inductive case:} We show the following inductive statement: for $i \in [1, \dots, \lceil\frac{N}{k}\rceil-1  ]$, at the beginning of round $2i+1$, given that we have eliminated $\{ a_{j}=1 | \forall j> N-ik \} \cup \{p > q_{N-(i-1)k} \}$,  then it takes two rounds to first eliminate all of the buyer's action(s) $p > q_{N-ik}$ and subsequently the all seller $j$'s action $a_{j}= 1$ with $j>N-(i+1)k$.

Given that we have eliminated $\{ a_{j}=1 | \forall j> N-ik \} \cup \{p > q_{N-(i-1)k} \}$, in the first round, we can eliminate of the buyer's actions $p > q_{N-ik}$, according to Claim~\ref{claim:buyer-eliminate}, as the highest car quality of the remaining sellers on the market is $q_{N-ik}$. We cannot eliminate the action of any seller $j$ with quality $q_{j} \leq q_{N-ik} \leq q_{N-(i-1)k} - c_1$, according to Claim~\ref{claim:seller-eliminate}, as the buyer could offer a price $p=q_1$, or  $p=q_{N-(i-1)k} \geq q_j + c_1$. 

In the second round, we can eliminate all seller $j$'s action $a_{j}= 1$ with $j>N-(i+1)k$, according to Claim~\ref{claim:seller-eliminate}, as the buyer eliminates all of her actions above $q_{N-ik}$. Yet we still cannot eliminate the action of any other seller, according to Claim \ref{claim:seller-eliminate}, as the buyer could offer a price $p=q_1$, or $p=q_{N-ik} \geq q_j + c_1, \forall i \leq N-(i+1)k$. We also cannot eliminate any action of the buyer, as no seller exit the market in the last round. 

Therefore, at the beginning of round $2i+3$, we expand the elimination set to $\{ a_{j}=1 | \forall j > N-(i+1)k \} \cup \{p > q_{N-ik} \}$, so the induction follows. It is easy to see that in the last iteration when $N-\lceil\frac{N}{k}\rceil k < 0$, all sellers exits the market, and all buyer's price above $q_{\nu}$ are eliminated, where we can compute $\nu=N+k-\lceil\frac{N}{k}\rceil k = \begin{cases} N \mod k, & k \nmid N \\  1, & k \mid N \end{cases} \geq 1$.
This terminates the iterative process of dominance elimination. The remaining action profiles must include all Nash equilibrium. Since all sellers have only one action left, their remaining action profile, i.e., not to list their cars, is the Nash equilibrium strategy. The buyer has utility zero for any of her actions $p\leq q_1 \leq q_\nu$, which are best responses to the sellers' equilibrium strategies and therefore form Nash equilibria along with the sellers' unique action profile.

As the base case takes at least one round to reach, and the induction stage takes $2\lceil\frac{N}{k}\rceil-2$ round, in total, the elimination length is at least $2\lceil\frac{N}{k}\rceil - 1$.
\end{proof} 
 

\end{document}